\crefname{ineq}{inequality\@}{inequalities\@}
\let\abs\relax
\let\norm\relax
\let\Pr\relax
\let\set\relax
\DeclareMathOperator{\BC}{BC}
\DeclareMathOperator{\Fid}{F}
\DeclareMathOperator{\by}{\times}
\DeclareMathOperator{\poly}{poly}
\DeclareMathOperator{\polylog}{polylog}
\DeclareMathOperator{\tr}{tr}
\DeclareMathOperator{\unit}{\mathbbm{1}}
\DeclareMathOperator*{\E}{\mathbf{E}}
\DeclareMathOperator*{\Pr}{\mathbf{P}}
\DeclareMathOperator*{\Var}{\mathbf{Var}}
\DeclareMathOperator*{\stddev}{\mathbf{stddev}}
\DeclareMathOperator*{\argmin}{argmin}
\DeclareMathOperator{\Binomial}{Binomial}
\DeclareMathOperator{\Exponential}{Exponential}
\DeclarePairedDelimiter{\abs}{\lvert}{\rvert}
\DeclarePairedDelimiter{\bracket}{\lbrack}{\rbrack}
\DeclarePairedDelimiter{\norm}{\|}{\|}
\DeclarePairedDelimiter{\paren}{\lparen}{\rparen}
\DeclarePairedDelimiter{\set}{\{}{\}}
\Crefname{claim}{Claim}{Claims} 
\newcommand*{\eps}{\epsilon}
\newcommand*{\ol}{\overline}
\newcommand*{\tensor}{\otimes}
\newcommand*{\wh}[1]{\widehat{#1}}
\newcommand*{\wt}[1]{\widetilde{#1}}
\renewcommand*{\subset}{\subseteq}
\DeclareDocumentCommand{\Bin}{o}{
  \IfNoValueTF{#1}{
    \set{0, 1}
  }{
    \set{0, 1}^{#1}
  }
}
\DeclareDocumentCommand{\half}{o}{
  \IfNoValueTF{#1}{
    \frac{1}{2}
  }{
    \frac{#1}{2}
  }
}
\DeclareDocumentCommand{\dist}{o}{%
  \IfNoValueTF{#1}{d}{d_{\mathrm{#1}}}%
}
\DeclareDocumentCommand{\pd}{o m O{}}{
  \IfNoValueTF{#1}{
    \frac{\partial^{#3}}{\partial {#2}}
  }{
    \frac{\partial^{#3} {#1}}{\partial {#2}}
  }
}
\DeclareDocumentCommand{\td}{o m O{}}{
  \IfNoValueTF{#1}{
    \frac{d^{#3}}{d {#2}}
  }{
    \frac{d^{#3} {#1}}{d {#2}}
  }
}
\DeclareDocumentCommand{\note}{o m}{
  \IfNoValueTF{#1}{
    \marginpar{\tiny \textsf{#2}}
  }{
    \marginpar{\tiny \textsf{\textbf{#1}: #2}}
  }
}
\DeclareDocumentCommand{\bs}{m}{
  {\boldsymbol #1}
}
\DeclareDocumentCommand{\restrict}{s m}{%
  \IfBooleanTF{#1}{%
    {\left.\kern-\nulldelimiterspace{#2}\vphantom{\big|}\right|}%
  }{%
    {\left.\kern-\nulldelimiterspace{#2}\right|}%
  }%
}
\DeclareDocumentCommand{\ketbra}{m m}{%
  \lvert #1 \rangle \langle #2 \rvert%
}
\NewDocumentCommand{\Prob}{e{_} m}{%
  \IfNoValueTF{#1}{%
    \Pr \set*{#2}
  }{%
    \Pr_{#1} \set*{#2}
  }
}
\NewDocumentCommand{\Exp}{e{_} m}{%
  \IfNoValueTF{#1}{%
    \E \set*{#2}
  }{%
    \E_{#1} \set*{#2}
  }
}
\newcommand*{\CC}{\mathbb{C}}
\newcommand*{\NN}{\mathbb{N}}
\newcommand*{\RR}{\mathbb{R}}
\newcommand*{\cB}{\mathcal{B}}
\newcommand*{\cM}{\mathcal{M}}
\newcommand*{\cT}{\mathcal{T}}
\newcommand{\rnote}[1]{}
\newcommand{\cnote}[1]{}
\newcommand{\BajanSearch}{Threshold Search\xspace}
\newcommand{\nts}{n_{\mathrm{TS}}}
\newcommand{\Id}{{\mathbbm 1}}
\title{Improved quantum data analysis}
\begin{document}
\maketitle

\begin{abstract}
   We provide more sample-efficient versions of some basic routines in quantum data analysis, along with simpler proofs.
   Particularly, we give a quantum ``Threshold Search'' algorithm that requires only $O((\log^2 m)/\eps^2)$ samples of a $d$-dimensional state~$\rho$.
   That is, given observables $0 \leq A_1, A_2, \dots, A_m \leq \Id$ such that $\tr(\rho A_i) \geq 1/2$ for at least one~$i$, the algorithm finds~$j$ with $\tr(\rho A_j) \geq 1/2-\eps$.
   As a consequence, we obtain a Shadow Tomography algorithm requiring only $\wt{O}((\log^2 m)(\log d)/\eps^4)$ samples, which simultaneously achieves the best known dependence on each parameter $m, d, \eps$.
   This yields the same sample complexity for quantum Hypothesis Selection among~$m$ states; we also give an alternative Hypothesis Selection method using $\wt{O}((\log^3 m)/\eps^2)$ samples.
\end{abstract}

\section{Introduction}  \label{sec:intro}
Some of the most basic problems in statistics, unsupervised learning, and property testing involve the following scenario:
One can observe data that are assumed to be drawn independently from an unknown probability distribution~$p$; say that $p$ is discrete and supported on $[d] = \{1, 2, \dots, d\}$.
The task is to learn, test, or estimate some properties of~$p$.
Completely estimating~$p$ up to error~$\eps$ (in, say, total variation distance) requires $\Theta(d/\eps^2)$ samples, so when~$d$ is very large one may seek to only learn or test \emph{partial} aspects of~$p$.
For example, one might only want to estimate the means of some known, fixed random variables $a_1, \dots, a_m : [d] \to [0,1]$ (sometimes called ``statistical queries'' in the learning/privacy literature).
Or, one might want to perform Hypothesis Selection over some set of two or more hypothesis distributions $q_1, \dots, q_m$ on~$[d]$.
It is generally fairly straightforward to determine the optimal sample complexity needed for these tasks.
For example, it's easy to show that one can simultaneously estimate all expectations $\E_p[a_1], \dots, \E_p[a_m]$ to accuracy~$\pm \eps$ using a batch of $n = O((\log m)/\eps^2)$ samples (independent of~$d$): one simply computes the empirical mean for each~$a_i$, reusing the batch of samples in each computation.

These kinds of questions become much more difficult to analyze when the classical source of randomness~$p$ is replaced by a \emph{quantum} source of randomness, namely a $d$-dimensional quantum state $\rho \in \CC^{d \times d}$ (satisfying $\rho \geq 0$, $\tr(\rho) = 1$).
The difficulties here are that: (i)~one cannot directly observe ``outcomes'' for~$\rho$, one can only measure it; (ii)~measuring the state~$\rho$ inherently alters it, hence reusing samples (i.e., copies of~$\rho$) is problematic.
For example, suppose we now have some known, fixed observables $A_1, \dots, A_m \in \CC^{d \times d}$ with $0 \leq A_i \leq \Id$ and we wish to estimate each expectation $\E_{\rho}[A_i] \coloneqq \tr(\rho A_i)$ to within~$\pm \eps$.
This is the ``Shadow Tomography'' problem introduced by Aaronson in~\cite{Aar16} (see~\cite{Aar20} for applications to, e.g., quantum money).
We do not know if this is similarly possible using $n = O((\log m)/\eps^2)$ copies of~$\rho$; indeed, prior to this work the best known upper bound was
\[
    n = \min\left\{\wt{O}((\log^4 m)(\log d)/\eps^4),\quad\wt{O}((\log^2 m)(\log^2 d)/\eps^8)\right\}.
\]
Here the sample complexity on the left is from~\cite{Aar20}, combining a ``Gentle Search'' routine with an online learning algorithm for quantum states from~\cite{ACHKN19}.
The sample complexity on the right was obtained by Aaronson and Rothblum~\cite{AR19} by drawing inspiration and techniques from the field of Differential Privacy.\footnote{See also~\cite{HKP20} for sample complexity bounds that can be better for special kinds of~$A_i$'s.}

In fact, we propose that --- rather than Differential Privacy --- a closer classical match for the Shadow Tomography problem is the task known as \emph{Adaptive Data Analysis}, introduced by~\cite{DFHPRR16}.
In this problem, the random variables (``statistical queries'') $a_1, \dots, a_m$ are not fixed in advance for the learner, but are rather received one at a time, with the crucial feature that each~$a_t$ may adaptively depend on the preceding estimates of $\E_p[a_1], \dots, \E_p[a_{t-1}]$ output by the learner.
In this case, conditioning on these output estimates skews the underlying i.i.d.\ product distribution~$p^{\otimes n}$ --- reminiscent of the way measuring a quantum state affects it --- and this prevents naive reuse of the sample data.
Indeed, it's far from obvious that the Adaptive Data Analysis task is doable with $\poly(\log m, \log d, 1/\eps)$ samples; however this was shown by~\cite{DFHPRR16}, who achieved complexity $n = \wt{O}((\log m)^{3/2} (\log d)^{1/2} / \eps^{7/2})$, and this was later improved by~\cite{BNSSSU15} to  $n = \wt{O}((\log m) (\log d)^{1/2} / \eps^{3}))$.
While Differential Privacy tools have been an ingredient in some Adaptive Data Analysis routines, the topics are not inherently linked; e.g., a viewpoint based on ``KL-stability'' is emphasized in~\cite{BNSSSU15}.

\subsection{Our work}

\subsubsection{\BajanSearch} \label{sec:threshold-search}
The first main result in our work concerns what we will call the quantum ``\BajanSearch'' problem.\footnote{Originally called the ``Secret Acceptor'' problem when it was introduced by Aaronson~\cite{Aar16}.  Later he called it ``Gentle Search''~\cite{Aar20}, but we find this name unsatisfactory as it is not necessary that a successful algorithm be ``gentle''.  In the Differential Privacy literature, it is sometimes called ``Report Noisy Max'' (offline case) or ``Above Threshold'' (online case)~\cite{DR13}.}
We state the problem here in a general form (recalling our notation $\E_\rho[A_i] =\tr(\rho A_i)$):

\paragraph{Quantum \BajanSearch problem:} \begin{samepage}\emph{Given:
\begin{enumerate}
\item Parameters $0 < \eps, \delta < \frac12$.
\item Access to unentangled copies of an unknown $d$-dimensional quantum state~$\rho$.
\item A list of $d$-dimensional observables $0 \leq A_1, \dots, A_m \leq \Id$.
\item A list of thresholds $0 \leq \theta_1, \dots, \theta_m \leq 1$.
\end{enumerate}
The algorithm should either output:
\begin{itemize}
    \item ``\,$\E_\rho[A_j] > \theta_j - \eps$'' for some particular $j$; or else,
     \item ``\,$\E_\rho[A_i] \leq \theta_i$ for all~$i$''.
\end{itemize}
}\end{samepage}
The output of the algorithm is a sample from a
distribution over indices $j$ such that
``\,$\E_\rho[A_j] > \theta_j - \eps$'' or
``\,$\E_\rho[A_i] \leq \theta_i$ for all~$i$'' if no such $j$
exists. The task is to minimize the number~$n$ of copies that are used,
while ensuring the probability of a false output statement is at
most~$\delta$.\\

We remark that all the difficulty of the problem is contained in the case where $\eps = \delta = \frac14$ and $\theta_j = \frac34$ for all~$j$ (see \Cref{sec:reductions}).
In this case, Aaronson~\cite{Aar16} originally showed that the \BajanSearch problem can be solved using~$n = \wt{O}(\log^4 m)$ copies of~$\rho$.
In the present paper, we improve this result quadratically:
\begin{theorem}                                     \label{thm:our-gentle-search}
    The quantum \BajanSearch problem can be solved using 
    \[
    \phantom{\qquad \text{($\textsc{l} = \log(1/\delta)$)}}
        n = \nts(m,\eps,\delta) = \frac{\log^2 m + \textsc{l}}{\eps^2} \cdot O(\textsc{l}) \qquad \textnormal{($\textsc{l} = \log(1/\delta)$)}
    \]
    copies of~$\rho$.
    Furthermore, this solution is \emph{online} in the sense that:
    \begin{itemize}
        \item The algorithm is initially given only $m, \eps, \delta$.  It then selects~$n$ and obtains $\rho^{\otimes n}$.
        \item Next, observable/threshold pairs $(A_1, \theta_1), (A_2, \theta_2), \dots$ are presented to the algorithm in sequence.  When each~$(A_t,\theta_t)$ is presented, the algorithm must either ``pass'', or else halt and output ``\,$\E_\rho[A_t] > \theta_t - \eps$''.
        \item If the algorithm passes on all $(A_t, \theta_t)$ pairs, then it ends by outputting ``\,$\E_\rho[A_i] \leq \theta_i$ for all~$i$''.
    \end{itemize}
\end{theorem}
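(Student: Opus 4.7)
My plan is to reduce to the simple case $\eps = \delta = \tfrac14$ with $\theta_j = \tfrac34$ via the reductions of \Cref{sec:reductions}, and then solve this case by applying a \emph{gentle threshold POVM} for each observable in sequence, controlled via a quantum union bound. Concretely, for each observable $A$ I would use on $\rho^{\otimes n}$ the two-outcome projective measurement $\{F, \Id - F\}$ where $F$ is the spectral projection of $S := \tfrac{1}{n}\sum_{i=1}^n A^{(i)}$ onto eigenvalues at most $\tfrac58$ (with $A^{(i)}$ denoting $A$ on the $i$-th copy). Since the $A^{(i)}$'s commute and $\rho^{\otimes n}$ is a product state, the outcome is statistically identical to taking the empirical mean of $n$ classical $[0,1]$-valued i.i.d.\ samples of expectation $\tr(\rho A)$ and thresholding at $\tfrac58$. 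Hoeffding's inequality then gives $\tr(\rho^{\otimes n} F) \geq 1 - e^{-cn}$ when $\tr(\rho A) \leq \tfrac12$ and $\tr(\rho^{\otimes n}(\Id - F)) \geq 1 - e^{-cn}$ when $\tr(\rho A) \geq \tfrac34$; in each case Winter's gentle measurement lemma ensures that conditioned on the expected outcome, the post-measurement state has trace distance $O(e^{-cn/2})$ from $\rho^{\otimes n}$.

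\emph{Online algorithm and analysis.} The algorithm receives $\rho^{\otimes n}$ once; for each $(A_t, \theta_t)$ it applies the corresponding gentle threshold POVM to its current state, passing on outcome $F_t$ and halting with output~$t$ on outcome $\Id - F_t$; if all queries pass, it declares ``all below threshold''. Two failure modes need bounding: halting on a query $t$ with $\tr(\rho A_t) \leq \tfrac12$, and failing to halt by the end when some $\tr(\rho A_j) \geq \tfrac34$ exists. For the first, I would use a quantum union bound to bound the probability that any of the ``safe-below'' POVMs yields the unlikely outcome, yielding total false-halt probability at most $O(\sqrt{m\cdot e^{-cn}})$. For the second, fixing the smallest safe-above index $j^\ast$, composition of the gentle measurement lemma over the earlier safe-below passes shows the state reaching step $j^\ast$ is close in trace distance to $\rho^{\otimes n}$, so the halt probability at step $j^\ast$ is at least $1 - e^{-cn} - O(\sqrt{m}\,e^{-cn/2})$. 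Setting $n$ a sufficiently large multiple of $\log m/\eps^2$ suppresses both failure modes below $\tfrac14$; an extra $\log(1/\delta)$ factor from majority-voting over independent boosted runs yields the stated sample complexity. The online structure is automatic since the algorithm only ever ``passes'' before halting, so future observables may depend on nothing but the fact that no halt has occurred yet.

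\emph{Main obstacle.} The principal subtlety is handling queries whose true expectation falls in the \emph{ambiguous zone} $(\theta_t - \eps, \theta_t)$, where neither outcome of the spectral-projection POVM is overwhelmingly likely, and so Winter's lemma does not directly control the post-measurement disturbance. The fix is to exploit the fact that the specification allows \emph{either} decision on such queries: if an ambiguous query halts, the algorithm terminates with a provably correct output, so only the ``pass'' branch can contaminate subsequent measurements, and the disturbance incurred there must be absorbed into the overall union-bound budget. Carefully balancing the Chernoff tail parameter against the accumulated trace-distance drift --- and showing that the $\sqrt{\,\cdot\,}$-form of the quantum union bound is what makes the analysis go through --- is the main technical exercise that pins down the final sample complexity.
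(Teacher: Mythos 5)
Your reduction to the $3/4$ vs.\ $1/4$ case matches the paper's (\Cref{sec:reductions}), but the core of your algorithm --- a sharp spectral-projection threshold $\{F_t, \Id - F_t\}$ on the empirical-mean observable, with disturbance controlled by Winter's gentle measurement lemma and a quantum union bound --- has a genuine gap at exactly the point you flag as the ``main obstacle,'' and your proposed fix does not close it. Gentle measurement (and Gao's union bound, \Cref{lem:quantum-union-bound}) controls disturbance only when the observed outcome is nearly certain on the current state. For a query whose expectation lies near your cut at $5/8$, the ``pass'' outcome has probability around $1/2$, and conditioning on it moves the state by a \emph{constant} trace distance (indeed $\Fid(\rho, \restrict{\rho}_{F}) = \sqrt{\E_\rho[F]}$ for a projector $F$, by \Cref{prop:fid1}). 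An adversary may present many such ambiguous queries, so the contamination you hope to ``absorb into the overall union-bound budget'' is not exponentially small per query --- it is $\Omega(1)$ per ambiguous pass, and after $O(1)$ of them no guarantee survives for later queries. Correctness of the halt branch on ambiguous queries does not rescue the argument: with probability about $2^{-k}$ the algorithm passes $k$ ambiguous queries in a row and then meets a genuinely above-threshold query holding a sample that is no longer close to $\rho^{\otimes n}$ in any controlled sense.

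This is precisely the difficulty the paper's proof is built to overcome: rather than a sharp cutoff, it adds exponential noise $\bs{X}$ with $\E[\bs{X}] = \Omega(\sqrt{n})$ to the (quantum-simulated) binomial count and thresholds the noisy sum (\Cref{thm:claim41}, \Cref{lem:meas-claim41}, \Cref{cor:meas-claim41}). The resulting test is $\chi^2$-stable: conditioning on ``pass'' perturbs the state by $O\bigl(\Pr[\mathrm{halt}] \cdot \sqrt{n}/\E[\bs{X}]\bigr)$, so per-query disturbances are proportional to halt probabilities and, via the Damage Lemma (\Cref{lem:damage}), accumulate to $O(1)$ exactly up to the step where halting becomes likely --- ambiguous queries included. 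The price of the noise is that a ``bad'' query ($\E_\rho[A_i] < 1/3$) now halts with probability $e^{-\Theta(\lambda n)} = e^{-\Theta(\sqrt{n})}$, which must be made $\lesssim 1/m$, forcing $n = \Theta(\log^2 m)$ per run; this is where the theorem's $\log^2 m$ comes from. That your sketch lands on $O(\log m \cdot \log(1/\delta)/\eps^2)$ --- strictly better than the statement you are proving, and in the direction of a question the paper explicitly highlights as open --- is itself a strong sign that the disturbance accounting cannot work as described.
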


Incidentally, the (offline) quantum Threshold \emph{Decision} problem, where the algorithm only needs to report ``$\exists j : \E_\rho[A_j] > \theta_j - \eps$'' without actually specifying~$j$, is known to be solvable using just $n = O(\log(m)\log(1/\delta)/\eps^2)$ copies~\cite{Aar20}.
We review the proof in \Cref{app:threshold-decision}, tightening/simplifying some quantitative aspects of the underlying theorem of Harrow, Lin, and Montanaro~\cite{HLM17}.
In particular, our tightenings let us slightly improve the copy complexity to $n = O(\log(m/\delta)/\eps^2)$.

\subsubsection{\texorpdfstring{$\chi^2$}{chi-square}-stable threshold reporting}
The most important technical ingredient going into our proof of \Cref{thm:our-gentle-search} is a new, purely classical statistical result fitting into the Adaptive Data Analysis framework (see, e.g.,~\cite{Smi17} for some background).
In that setting one might describe our result as follows: ``adding exponential noise provides a (composably) $\chi^2$-stable mechanism for reporting if a distribution's mean is above a given threshold''.
In more detail, the result says that given a Sample $\bs{S}$ consisting of the sum of $n$ draws from a $\mathrm{Bernoulli}(p)$ distribution (i.e., $\bs{S} \sim \Binomial(n,p)$), if we add independent exponential noise~$\bs{X}$ and then check the event $B$ that $\bs{S} + \bs{X}$ exceeds some large threshold~$\theta n$, then conditioning on~$B$ \emph{not} occurring hardly changes the distribution of~$\bs{S}$, provided $\E[\bs{X}] \gg \stddev[\bs{S}]$.
Here the phrase ``hardly changes'' is in two very strong senses: (i)~we show the random variables $\bs{S} \mid \ol{B}$ and $\bs{S}$ are close even in {$\chi^2$-divergence}, which is a more stringent measure than KL-divergence (or Hellinger distance, or total variation distance) --- that is, the test is ``$\chi^2$-stable''; (ii)~the $\chi^2$-divergence is not just absolutely small, but is even a small fraction of $\Pr[B]^2$ itself (hence the total variation closeness is a small fraction of~$\Pr[B]$).
This allows a kind of composition (as in the ``Sparse Vector'' mechanism~\cite{DR13} from the Differential Privacy literature) in which the same quantum sample can be reused for repeated ``above threshold'' tests, up until the point where having at least one ``above threshold'' outcome becomes likely.
Precisely, our result is the following (refer to
\Cref{sec:prelim-classical} for the
definition and notation of the quantities used below):

\newcommand{\theorestated}{}
\begin{restatable}{theorem}{theostring}\label{thm:claim41} {\normalfont\bfseries\theorestated}
    Let $\bs{S} \sim \Binomial(n,p)$. 
    Assume that $\bs{X}$ is an independent Exponential random variable with mean at least~$\stddev[\bs{S}] = \sqrt{p(1-p)n}$ (and also at least~$1$). 
    Let $B$ be the event that $\bs{S} + \bs{X} > \theta n$, and assume that $\Pr[B] < \frac14$.
    Then
    \[
        \dist[\chi^2]((\bs{S} \mid \ol{B}), \bs{S}) \lesssim 
        \paren*{\Pr[B] \cdot \frac{\stddev[\bs{S}]}{\E[\bs{X}]}}^2 \leq \Pr[B]^2 \cdot (n/\E[\bs{X}]^2).
    \]
\end{restatable}
\noindent (Above we are using the notation $Y \lesssim Z$ to mean $Y \leq C \cdot Z$ for some universal constant~$C$.
We are also abusing notation by writing the  $\chi^2$-divergence between two random variables to mean the $\chi^2$-divergence between their underlying distributions.)
\begin{corollary} \label{cor:claim41}
    Writing $\bs{S}'$ for $\bs{S} \mid \ol{B}$, standard inequalities for $f$-divergences~\cite{GS02} imply
    \[
        \dist[\mathrm{TV}](\bs{S}', \bs{S})
        \leq \dist[\mathrm{H}](\bs{S}', \bs{S})
        \leq \sqrt{\dist[\mathrm{KL}](\bs{S}', \bs{S}) }
        \leq \sqrt{\dist[\chi^2](\bs{S}', \bs{S})}
        \lesssim \Pr[B] \cdot \frac{\stddev[\bs{S}]}{\E[\bs{X}]} \leq  \Pr[B] \cdot \frac{\sqrt{n}}{\E[\bs{X}]}.
    \]
\end{corollary}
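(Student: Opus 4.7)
My plan is to reduce the $\chi^2$-divergence to a variance, then bound that variance by comparing against an ``unclipped'' exponential version via positive correlation and a closed-form moment generating function.

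Set $h(s) := \Pr[\bs{X} > T - s] = \min(1, e^{(s-T)/\mu})$ with $\mu := \E[\bs{X}]$. Independence of $\bs{X}$ from $\bs{S}$ gives $\Pr[\ol{B} \mid \bs{S}=s] = 1 - h(s)$, and the conditional density $\Pr[\bs{S}=s \mid \ol{B}] = \Pr[\bs{S}=s](1-h(s))/\Pr[\ol{B}]$ leads, by direct expansion, to
\[
    \dist[\chi^2]((\bs{S} \mid \ol{B}), \bs{S}) = \Var[h(\bs{S})]/\Pr[\ol{B}]^2.
\]
Since $\Pr[B] < 1/4$ forces $\Pr[\ol{B}] \geq 3/4$, it suffices to show $\Var[h(\bs{S})] \lesssim (\Pr[B] \cdot \sigma/\mu)^2$, where $\sigma := \stddev[\bs{S}]$.

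Next I would compare against the unclipped function $\tilde h(s) := e^{(s-T)/\mu}$, so $h = \min(1, \tilde h)$. The functions $\tilde h - h$ (zero on $\{s \leq T\}$, strictly increasing on $\{s > T\}$) and $\tilde h + h$ are both non-decreasing in $s$, so Chebyshev's sum inequality (FKG for the totally ordered variable $\bs{S}$) yields $\E[(\tilde h - h)(\tilde h + h)] \geq \E[\tilde h - h] \cdot \E[\tilde h + h]$; rearranging, $\Var[h(\bs{S})] \leq \Var[\tilde h(\bs{S})]$. For the unclipped $\tilde h$, direct evaluation of the binomial moment generating function gives
\[
    \frac{\E[\tilde h(\bs{S})^2]}{\E[\tilde h(\bs{S})]^2} = \left(1 + \frac{p(1-p)(e^{1/\mu}-1)^2}{(1+p(e^{1/\mu}-1))^2}\right)^n \leq \exp(4\sigma^2/\mu^2) \leq 1 + O(\sigma^2/\mu^2),
\]
where the first inequality uses $e^{1/\mu} - 1 \leq 2/\mu$ (valid since $\mu \geq 1$) and the second uses $\sigma \leq \mu$. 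Hence $\Var[\tilde h(\bs{S})] \leq O(\sigma^2/\mu^2) \cdot \E[\tilde h(\bs{S})]^2$.

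The remaining step, which I expect to be the main obstacle, is to show $\E[\tilde h(\bs{S})] \lesssim \Pr[B]$. Writing $\E[\tilde h(\bs{S})] - \Pr[B] = \E[(e^{(\bs{S}-T)/\mu} - 1) \cdot \Id[\bs{S}>T]]$, this reduces to the conditional tail estimate $\E[e^{(\bs{S}-T)/\mu} \mid \bs{S} > T] = O(1)$. The hypothesis $\Pr[\bs{S} > T] \leq \Pr[B] < 1/4$ forces $T$ to lie sufficiently far above the mean of $\bs{S}$ (a Chebyshev-type argument gives $T \geq \E[\bs{S}] + \Omega(\sigma)$), which makes the consecutive ratios $\Pr[\bs{S} = s+1]/\Pr[\bs{S} = s]$ bounded away from $1$ for $s > T$; combined with $\mu \geq \sigma$ controlling $e^{1/\mu}$, this yields a geometric-series bound on $\sum_{s > T} \Pr[\bs{S} = s \mid \bs{S} > T] \cdot e^{(s-T)/\mu}$. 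Chaining the three steps, $\Var[h(\bs{S})] \leq \Var[\tilde h(\bs{S})] \leq O(\sigma^2/\mu^2) \cdot \E[\tilde h(\bs{S})]^2 \lesssim (\sigma/\mu)^2 \Pr[B]^2$, which completes the proof.
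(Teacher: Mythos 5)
Two framing remarks before the main point. First, in the paper this corollary is immediate: the chain $\dist[\mathrm{TV}]\le\dist[\mathrm{H}]\le\sqrt{\dist[\mathrm{KL}]}\le\sqrt{\dist[\chi^2]}$ is quoted from~\cite{GS02}, the $\chi^2$ bound is \Cref{thm:claim41}, and the last step is just $\stddev[\bs S]=\sqrt{p(1-p)n}\le\sqrt n$; you instead re-derive the $\chi^2$ bound from scratch (and never address the $f$-divergence chain), so your real burden is the full theorem. Second, your first three steps are sound and closely parallel the paper's proof of that theorem: your identity $\dist[\chi^2]((\bs S\mid\ol B),\bs S)=\Var[h(\bs S)]/\Pr[\ol B]^2$ is exactly \Cref{prop:chi-squared}; your monotone-association (Chebyshev sum inequality) argument for $\Var[h(\bs S)]\le\Var[\tilde h(\bs S)]$ is a correct alternative to the paper's Lipschitz variance contraction (\Cref{prop:variance}); and your MGF computation of $\E[\tilde h(\bs S)^2]/\E[\tilde h(\bs S)]^2$ is the same calculation as the paper's, with your algebraic identity being precisely \Cref{lem:inequality-1}.

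The genuine gap is the step you yourself flag as the main obstacle, namely $\E[\tilde h(\bs S)]\lesssim\Pr[B]$ (the paper's \Cref{eqn:bugfix}), and your sketch for it does not go through as stated. (i) A ``Chebyshev-type argument'' cannot give $T\ge\E[\bs S]+\Omega(\sigma)$: Chebyshev bounds tail probabilities from \emph{above}, whereas here you need an \emph{anti-concentration} lower bound on the binomial tail --- this is exactly why the paper invokes $\Pr[\bs S>pn]\ge 1/4$ (for $p\ge 1/n$, cited to~\cite{Doe18}) --- and even that only yields $T\ge\E[\bs S]$, with the $p\le 1/n$ regime requiring separate treatment. (ii) The consecutive ratios $\Pr[\bs S=s+1]/\Pr[\bs S=s]$ just above a threshold lying $O(\sigma)$ beyond the mean are $1-\Theta(1/\sigma)$, not ``bounded away from $1$''; since $e^{1/\E[\bs X]}$ can be as large as $1+\Theta(1/\sigma)$ (when $\E[\bs X]\approx\sigma$), the crude geometric-series bound you describe yields only $\E[e^{(\bs S-T)/\E[\bs X]}\,\mathbf{1}\{\bs S>T\}]\lesssim\sigma\cdot\Pr[\bs S=\lceil T\rceil]$, which is not $\lesssim\Pr[\bs S>T]$ without a matching lower bound on the tail itself; turning this into the claimed $O(1)$ conditional overshoot estimate, uniformly over all regimes of $p,n,T$, is a real piece of work that your plan does not supply. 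Note also that at this point your route discards the noise $\bs X$ entirely and attempts the purely binomial statement $\E[e^{(\bs S-T)/\E[\bs X]}\,\mathbf{1}\{\bs S>T\}]\lesssim\Pr[\bs S>T]$, which is strictly harder than what is needed: the paper's proof instead lower-bounds $\Pr[B]$ using the exponential tail of $\bs X$ itself, via $\Pr[B]\ge\Pr[\bs S>pn]\cdot\Pr[\bs X\ge(\theta-p)n]\ge\frac14 e^{-\lambda(\theta-p)n}$ in the main case, after which the crux reduces to the elementary inequality $q+pe^{\lambda}\lesssim e^{\lambda p}$ and no conditional overshoot estimate for the binomial is needed. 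I would suggest either adopting that device or citing \Cref{thm:claim41} directly, since the corollary requires nothing more.
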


Let us remark that our \Cref{thm:claim41} is similar to results appearing previously in the Differential Privacy/Adaptive Data Analysis literature; in particular, it is quite similar to (and inspired by) a theorem (``Claim~41'') of Aaronson and Rothblum~\cite{AR19}.
Although this Claim~41 is presented in a quantum context, the essence of it is a theorem comparable to our \Cref{thm:claim41}, with the following main differences: (i)~it bounds the weaker KL-divergence (though for our applications, this is acceptable); (ii)~the proof is significantly more involved.
(Minor differences include: (i)~their result uses two-sided exponential noise for a two-sided threshold event; (ii)~our bound has the stronger factor $\stddev[\bs{S}]$ instead of just~$\sqrt{n}$.)

\subsubsection{Applications: Shadow Tomography and Hypothesis Selection}
Given our improved \BajanSearch algorithm, we present two applications in quantum data analysis.
The first is to the aforementioned Shadow Tomography problem, where we
obtain a sample complexity that simultaneously achieves the best known
dependence on all three parameters $m$, $d$, and $\eps$.
Furthermore, our algorithm is \emph{online}, as in the Adaptive Data Analysis setting.
\begin{theorem} \label{thm:shadow-tomog}
    There is a quantum algorithm that, given parameters $m \in \NN$, $0 < \eps < \frac12$, and access to unentangled copies of a state~$\rho \in \CC^{d \times d}$, uses
    \[
        \phantom{\qquad \textnormal{($\textsc{l} = \log(\tfrac{\log d}{\delta \eps})$)}}
        n = \frac{(\log^2 m + \textsc{l})(\log d)}{\eps^4} \cdot O(\textsc{l}) \qquad \textnormal{($\textsc{l} = \log(\tfrac{\log d}{\delta \eps})$)}
    \]
    copies of~$\rho$ and then has the following behavior:
    When any (adversarially/adaptively chosen) sequence of observables $A_1, A_2, \dots, A_m \in \CC^{d \times d}$ with $0 \leq A_i \leq \Id$ is presented to the algorithm one-by-one, once $A_t$ is presented the algorithm responds with an estimate~$\wh{\mu}_i$ of $\E_\rho[A_t] = \tr(\rho A_t)$.
    Except with probability at most~$\delta$ (over the algorithm's measurements), all $m$ estimates satisfy $|\wh{\mu}_i - \E_\rho[A_t]| \leq \eps$.
\end{theorem}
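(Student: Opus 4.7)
The plan is to combine the improved Threshold Search of \Cref{thm:our-gentle-search} with the online learning algorithm for quantum states of~\cite{ACHKN19}, following the Shadow Tomography template of Aaronson~\cite{Aar20}; the savings over prior work come entirely from the better sample cost of Threshold Search. Maintain a hypothesis state $\sigma$, initialized to $\Id/d$ and updated via the online learner, which commits at most $T = O((\log d)/\eps^2)$ ``mistake-updates'' on any adversarial observable/feedback stream. Fix $\delta' \coloneqq \delta/(10(T+1))$ and pre-allocate $T+1$ independent batches of $\nts(2m, \eps/4, \delta')$ copies of $\rho$, each earmarked for one fresh instance of Threshold Search.

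When $A_t$ arrives, compute the hypothesis prediction $p_t \coloneqq \tr(\sigma A_t)$ and present the two pairs $(A_t,\, p_t + \eps/2)$ and $(\Id - A_t,\, 1 - p_t + \eps/2)$ to the currently active Threshold Search instance (with accuracy parameter $\eps/4$). If both pairs are ``passed'', output $\wh{\mu}_t \coloneqq p_t$. Otherwise, estimate $\tr(\rho A_t)$ directly by measuring $A_t$ on $O(\log(1/\delta')/\eps^2)$ fresh copies of $\rho$ to obtain $\tilde{\mu}_t$ with $|\tilde{\mu}_t - \tr(\rho A_t)| \leq \eps/8$, output $\wh{\mu}_t \coloneqq \tilde{\mu}_t$, feed $(A_t, \tilde{\mu}_t)$ to the online learner, and activate the next reserved Threshold Search instance.

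A union bound over the $T+1$ Threshold Search runs and $\leq T$ direct estimates (each failing with probability $\leq \delta'$) guarantees that with probability $\geq 1 - \delta$ every outcome is truthful. On ``pass'' rounds, Threshold Search soundness gives $|\tr(\rho A_t) - p_t| \leq \eps/2 < \eps$; on ``fire'' rounds $\wh{\mu}_t = \tilde{\mu}_t$ is $\eps/8$-accurate by Hoeffding. Moreover, ``fire'' implies (by completeness of Threshold Search at accuracy $\eps/4$) that $|\tr(\rho A_t) - p_t| > \eps/4$; combined with $|\tilde{\mu}_t - \tr(\rho A_t)| \leq \eps/8$ this yields $|p_t - \tilde{\mu}_t| > \eps/8$, a genuine mistake for the learner. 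The learner's mistake bound thus caps the total number of ``fires'' at $T$, so the $T+1$ pre-allocated Threshold Search batches suffice. The total sample cost is $(T+1) \cdot \nts(2m, \eps/4, \delta') + T \cdot O(\log(1/\delta')/\eps^2)$; substituting $T = O((\log d)/\eps^2)$ and $\log(1/\delta') = O(\textsc{l})$ with $\textsc{l} = \log(\log d / (\delta\eps))$, the first summand dominates and matches the claimed $\frac{(\log^2 m + \textsc{l})(\log d)}{\eps^4} \cdot O(\textsc{l})$.

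The main obstacle is calibrating the three accuracy levels --- Threshold Search parameter, online-learner mistake tolerance, and direct-estimator precision --- so that every ``fire'' provably induces a legitimate online-learning update, thereby ensuring the number of Threshold Search restarts is bounded by $T$. Everything else reduces to standard union-bounding and invoking the online-learning mistake bound from~\cite{ACHKN19}.
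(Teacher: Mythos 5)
Your overall architecture is the same as the paper's (run the mistake-bounded learner of \Cref{thm:achkn}/\cite{ACHKN19} as the ``student'', implement the ``teacher'' with the online Threshold Search of \Cref{thm:our-gentle-search}, burn one fresh Threshold Search batch per mistake, union-bound over $O((\log d)/\eps^2)$ rounds), but your choice of constants breaks the one step you yourself flag as the crux: that every ``fire'' is a legitimate mistake for the online learner. With thresholds $p_t+\eps/2$ and Threshold Search accuracy $\eps/4$, a fire only certifies $|\tr(\rho A_t)-p_t|>\eps/4$, and your correction $\tilde\mu_t$ is only $\eps/8$-accurate. So on a fire round the learner's loss $|p_t-\tilde\mu_t|$ exceeds $\eps/8$ while the comparator's (true state's) loss $|\tr(\rho A_t)-\tilde\mu_t|$ can also be as large as $\eps/8$: the per-round gap between learner and comparator can be arbitrarily close to~$0$. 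The regret calculation behind the mistake bound (see the footnote to \Cref{thm:achkn}) needs this gap to be $\Omega(\eps)$ on every update round in order to conclude that the number of updates is $O((\log d)/\eps^2)$; with gap $\geq 0$ it gives nothing, so nothing caps the number of fires, your $T+1$ pre-allocated batches can run out, and the sample bound collapses. The same mismatch shows up if you try to invoke the learning theorem as a black box at some scale $\eps'$: the Teacher Properties need ``pass $\Rightarrow$ error $\leq \eps'$'' (forcing $\eps' \geq \eps/2$ from your pass guarantee) and ``fire $\Rightarrow$ error $> \tfrac34\eps'$'' (forcing $\eps' \leq \eps/3$ from your fire guarantee), which is contradictory; likewise the literal \cite{ACHKN19} Theorem~1 requires feedback accurate to a third of the mistake threshold, and $\eps/8$ is not within $(\eps/4)/3$.

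The fix is exactly the paper's calibration (\Cref{lem:bajanupgrade}): run Threshold Search at accuracy $\eps/4$ against the \emph{shifted} thresholds $p_t+\eps$ and $1-p_t+\eps$, so that passing certifies $|\tr(\rho A_t)-p_t|\leq\eps$ (which is all the output guarantee needs) while firing certifies $|\tr(\rho A_t)-p_t|>\tfrac34\eps$, and supply a correction accurate to $\tfrac14\eps$. Then each fire gives learner loss $\geq\tfrac12\eps$ versus comparator loss $\leq\tfrac14\eps$, the $\Omega(\eps)$ gap makes the mistake bound go through, and the rest of your accounting (union bound, batch sizes, total sample count) is fine and matches the paper.
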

The proof of this theorem is almost immediate from our \BajanSearch
algorithm, using a known~\cite{Aar20} black-box reduction to the
mistake-bounded online quantum state learning algorithm of Aaronson, Chen, Hazan, Kale, and Nayak~\cite{ACHKN19}.

Let us philosophically remark that we believe the importance of the parameters, in increasing order, is~$d$, then~$\eps$, then~$m$.
Regarding~$d$, ``in practice'' one may expect that $\log d$, the number of qubits in the unknown state, is not likely to be particularly large.
Indeed, many problems in quantum learning/tomography/statistics~\cite{HM02,CM06,MW16,OW15,OW16,HHJWY17,OW17,AISW19,BOW19,Yu20b,Yu20,BCL20,Yu20c} have polynomial dependence on~$d$, so factors of $\polylog d$ seem of lesser importance.
Regarding~$\eps$, ``in practice'' this might be the most important parameter, as even with a very mild value like $\eps = .1$, a dependence of $1/\eps^4$ is challenging.
It's peculiar that all works on Shadow Tomography have achieved atypical $\eps$-dependence like $1/\eps^4$, $1/\eps^5$, and $1/\eps^8$, instead of the ``expected'' $1/\eps^2$; on the other hand, this peculiarity also seems to occur in the Adaptive Data Analysis literature.
Finally, we feel that the dependence on~$m$ is of the most interest (theoretical interest, at least), and it would be extremely compelling if we could reduce the dependence from $\log^2 m$ to $\log m$.
Our reason is related to quantum Hypothesis Selection, which we now discuss.

\paragraph{Hypothesis Selection.}
The classical (multiple) Hypothesis Selection problem~\cite{Yat85,DL96,DL97} is as follows:
Given are $m$ fixed ``hypothesis'' probability distributions $q_1, \dots, q_m$ on $[d]$, as well as a parameter~$\eps$ and access to samples from an unknown distribution~$p$ on~$[d]$.
The task is to find (with probability at least $1-\delta$) a~$q_j$ which is, roughly, closest to~$p$, while minimizing the number of samples drawn from~$p$.
More precisely, if $\eta = \min_i \{\dist[\mathrm{TV}](p,q_i)\}$, the algorithm should output a hypothesis~$q_j$ with $\dist[\mathrm{TV}](p,q_j) \leq C \eta + \eps$ for some fixed small constant~$C$. 
There are a variety of solutions known to this problem, with standard ones~\cite[Chap.~6]{DL01} achieving $n = O((\log m)/\eps^2)$ (and best constant $C = 3$).
There are also numerous variations, including handling different distance measures besides $\dist[\mathrm{TV}]$~\cite{BV08}, the easier (``realizable/non-robust'') case when $\eta = 0$, and the case when there is a unique answer (as when the hypotheses~$q_j$ are pairwise far apart).
We emphasize that our focus is on the \emph{non-asymptotic regime}, where we would like an explicit sample bound $n = n(m,d,\eps,\delta)$ holding for all values of $m,d,\eps,\delta$.%
\footnote{This is as opposed to the \emph{asymptotic regime}.
There, one focuses on achieving $\delta \leq \exp(-C(m,d,\eps) n)$ for all $n \geq n_0(m,d,\eps)$, where the rate function $C(m,d,\eps)$ should be as large as possible, but where $n_0$ may be a completely uncontrolled function of $m, d, \eps$.  See, e.g.,~\cite{NS11}.}
One particularly useful application of Hypothesis Selection is to \emph{learning} an unknown probability distribution~$p$ from a class~$\mathcal{C}$ (even ``agnostically'').  Roughly speaking, if $\mathcal{C}$ has an \emph{$\eps$-cover} of size $m = m(\eps)$, then one can learn~$p$ to accuracy $O(\eps)$ using a Hypothesis Selection over~$m$ hypotheses; i.e., with $O((\log m)/\eps^2)$ samples in the classical case.
For further discussion of the problem, see e.g.~\cite{ST07}; for Differentially Private Hypothesis Selection, see~\cite{BKSW20,GKKNWZ20}; for fast classical Hypothesis Selection with a quantum computer, see~\cite{QCR20}.

The \emph{quantum} Hypothesis Selection problem is the natural analogue in which probability distributions are replaced by quantum states, and total variation distance is replaced by trace distance.
As with Shadow Tomography (and Differentially Private Hypothesis Selection), it is nontrivial to upgrade classical algorithms due to the fact that samples cannot be naively reused.
We show that one can use Shadow Tomography as a black box to solve quantum Hypothesis Testing.
We also give a different method based on \BajanSearch that achieves an incomparable copy complexity, with a better dependence on~$\eps$ but a worse dependence on~$m$: roughly $(\log^3 m)/\eps^2$, versus the $(\log^2 m)/\eps^4$ of Shadow Tomography.
Finally, we show that if the hypothesis states are pairwise far apart, we can match the optimal bound from the classical case.
\begin{theorem}                                     \label{thm:hypothesis-selection}
    There is a quantum algorithm that, given $m$ fixed hypothesis states $\sigma_1, \dots, \sigma_m \in \CC^{d \times d}$,  parameters $0 < \eps, \delta < \frac12$, and access to unentangled copies of a state~$\rho \in \CC^{d \times d}$, uses
    \[
        n = \min\left\{             \frac{(\log^2 m + \textsc{l}_1)(\log d)}{\eps^4} \cdot O(\textsc{l}_1),
                                \quad
                                            \frac{\log^3 m +
                                              \log(\textsc{l}_2/\delta)
                                              \cdot \log m}{\eps^2}
                                            \cdot O(\textsc{l}_2 \cdot \log(\textsc{l}_2/\delta))
                    \right\}
    \]
    copies of~$\rho$ (where $\textsc{l}_1 = \log(\tfrac{\log d}{\delta \eps})$ and $\textsc{l}_2 = \log(1/{\max\{\eta, \eps\}})$) and has the following guarantee: except with probability at most $\delta$, it outputs~$k$ such that
    \[
        \dist[\mathrm{tr}](\rho, \sigma_k) \leq 3.01 \eta + \eps, \qquad \text{where } \eta = \min_i \{\dist[\mathrm{tr}](\rho,\sigma_i)\}.
    \]

    Further, assuming $\eta < \frac12 (\min_{i \neq j} \{\dist[\mathrm{tr}](\sigma_i,\sigma_j)\} - \eps)$ (so there is a \emph{unique}~$\sigma_i$ near~$\rho$), one can find the $\sigma_k$ achieving $\dist[\mathrm{tr}](\rho, \sigma_k) = \eta$ (except with probability at most~$\delta$) using only $n = O(\log(m/\delta)/\eps^2)$ copies of~$\rho$.
\end{theorem}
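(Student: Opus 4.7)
The plan for the first (Shadow-Tomography-based) bound is to lift the classical Scheff\'e tournament to the quantum setting. For each ordered pair $i<j$ I would take $A_{ij}$ to be the projector onto the non-negative eigenspace of $\sigma_i - \sigma_j$, so that $\tr(\sigma_i A_{ij}) - \tr(\sigma_j A_{ij}) = \dist[\mathrm{tr}](\sigma_i,\sigma_j)$ by the variational characterization of trace distance. I would then run Theorem~\ref{thm:shadow-tomog} on these $\binom{m}{2}$ observables with accuracy $\Theta(\eps)$ and failure $\delta$ to obtain estimates $\widehat\mu_{ij}$ of $\tr(\rho A_{ij})$, using $\wt O((\log^2 m)(\log d)/\eps^4)$ copies, and output any $k$ passing the Scheff\'e rule $\widehat\mu_{kj} \ge \tfrac12(\tr(\sigma_k A_{kj}) + \tr(\sigma_j A_{kj})) - \Theta(\eps)$ for all $j$. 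The classical Devroye--Lugosi analysis then gives $\dist[\mathrm{tr}](\rho,\sigma_k) \le 3\eta + O(\eps)$, which tunes to $3.01\,\eta + \eps$ by choice of the implicit constants.

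For the second bound I would replace the Shadow Tomography oracle with direct invocations of the cheaper Threshold Search routine (Theorem~\ref{thm:our-gentle-search}). Maintaining a current champion $k$, I would run a sequence of phases; in each phase, draw a fresh batch of $\nts(m, \Theta(\eps), \delta/\log m)$ copies and apply Threshold Search to the ``beat-$k$'' observables $\{A_{jk} : j \ne k\}$ with thresholds encoding ``$\sigma_j$ beats $\sigma_k$'' in the Scheff\'e sense. If no observable triggers, output $k$; otherwise set $k$ to the triggering index and begin a new phase. Summing per-phase costs gives the advertised $(\log^3 m + \textsc{l}_2 \log m)/\eps^2$ dependence. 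The main obstacle is bounding the number of phases by $O(\log m)$; I would argue this via a monovariant, e.g.\ by showing that each replacement strictly decreases a tournament potential such as a Scheff\'e-compatible rank of the current champion.

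For the third (unique-winner) bound, the pairwise separation $\min_{i\neq j}\dist[\mathrm{tr}](\sigma_i,\sigma_j) > 2\eta + \eps$ ensures that $i^*$ beats every other $j$ with an $\Omega(\eps)$ margin on $A_{i^* j}$ while every $i \ne i^*$ loses to $i^*$ by the same margin, so $i^*$ is the \emph{unique} candidate admitting no beat-witness. The plan is to identify $i^*$ by a depth-$(\log m)$ binary search over $[m]$, using at each step the cheap Threshold Decision routine (Appendix~\ref{app:threshold-decision}, per-query copy complexity $O(\log(m/\delta')/\eps^2)$) to decide which half of the remaining candidates contains $i^*$. The key trick is to reuse a single batch of $n = O(\log(m/\delta)/\eps^2)$ copies across all $O(\log m)$ rounds by running each Threshold Decision \emph{gently}: since the correct answer occurs with probability $1-\delta'$, the Gentle Measurement Lemma bounds the per-query disturbance by $\sqrt{\delta'}$, so taking $\delta' = \delta^2/\mathrm{poly}(\log m)$ keeps the total accumulated disturbance below $\delta$ while inflating $n$ by only $\mathrm{poly}(\log\log m)$ factors, absorbed into the $O(\log(m/\delta))$ numerator.

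The two principal obstacles I anticipate are (i)~the $O(\log m)$ phase bound in the second approach, which requires constructing a genuine Scheff\'e-compatible monovariant that strictly decreases under champion replacement, and (ii)~the disturbance bookkeeping in the third approach, which needs a careful simultaneous union bound over Threshold Decision failure events and post-measurement disturbances along the binary-search path. Both should follow by adapting standard arguments from the classical Hypothesis Selection and quantum gentle-measurement literatures.
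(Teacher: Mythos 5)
Your first bound follows the paper's route (Helstrom events $A_{ij}$, Shadow Tomography on the $\binom{m}{2}$ pairs, then a classical selection rule), with one small caveat: the rule you state --- output a $k$ that passes the Scheff\'e test against \emph{all} $j$ --- may have no winner; the paper instead uses the Yatracos minimum-distance estimate, selecting $k$ minimizing $\widehat{\Delta}_k = \max_{i<j}\abs{\E_{\sigma_k}[A_{ij}] - \wh{\mu}_{ij}}$, which always exists and gives the $3\eta+\eps$ bound (\Cref{prop:hypoth1}). Your third bound is also plausible but takes a different route: the paper does not binary-search at all; it builds, for each $k$, a Threshold Decision projector $\ol{B}_k$ on $\rho^{\otimes n}$ with error $\delta/(4m)$ (via \Cref{cor:cor}) and then measures all $m$ of them \emph{sequentially on one batch}, invoking Gao's quantum union bound (\Cref{cor:quantum-union-bound}) to keep the total disturbance at $\delta$; your gentle-measurement bookkeeping over a $\log m$-depth search could likely be made to work, but you would still need to specify how a single ``which half contains $i^*$'' query is realized (in effect a quantum OR of the meta-events $\ol{B}_k$ over the half), which is exactly the two-level structure the paper uses without the search.

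The genuine gap is in your second bound. Your champion-replacement tournament needs the number of phases to be $O(\log m)$, and you only posit a ``Scheff\'e-compatible monovariant'' without exhibiting one. I do not see how such a potential can exist in general: the ``beats'' relation induced by Scheff\'e contests is not transitive, so a chain of replacements $k_1 \to k_2 \to \dotsb$ can in the worst case have length $\Theta(m)$, and each phase already costs $\nts(m,\Theta(\eps),\cdot) = \Omega((\log^2 m)/\eps^2)$ fresh copies, which would blow the budget to $\wt{\Omega}(m(\log^2 m)/\eps^2)$. The paper's mechanism for \Cref{prop:hypoth2} is structurally different and avoids this issue entirely: for each candidate $k$ it uses \Cref{cor:cor} to compress the test ``$\Delta_k \le \ol{\eta}$ vs.\ $\Delta_k > \ol{\eta}+\eps$'' into a single projector $B_k$ acting on a block $\varrho = \rho^{\otimes n_0}$ with $n_0 = O((\log m)/\eps^2)$, and then runs \emph{one} Threshold Search at constant accuracy ($\eps_0 = 1/6$, thresholds $1/2$) over the $m$ meta-events $\ol{B}_k$ on copies of $\varrho$; since $\Delta_{i^*}\le\eta$, some $k$ must trigger, and any triggered $\ell$ satisfies $\Delta_\ell \le \ol{\eta}+\eps$, hence $\dist[\mathrm{tr}](\rho,\sigma_\ell)\le 2\eta+\ol{\eta}+\eps$. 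The unknown $\eta$ is then handled by a geometric grid $\ol{\eta} = 1, \tfrac12, \tfrac14, \dots$ (stopping when the search fails or $\ol{\eta}\le\eps$), which is where the $\textsc{l}_2 = \log(\log(1/\max\{\eta,\eps\})/\delta)$ factor comes from; the total is $O(\nts(m,1/6,\delta'))\cdot n_0 = (\log^3 m + \textsc{l}_2\log m)/\eps^2\cdot O(\textsc{l}_2)$. Unless you can actually construct the monovariant (or cap the number of phases some other way), your second bound does not go through as proposed, and I would recommend adopting this nested Threshold-Decision-inside-Threshold-Search composition instead.
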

The fact that quantum Hypothesis Selection black-box reduces to Shadow Tomography provides significant motivation for trying to prove (or disprove) that Shadow Tomography can be done with $O(\log m) \cdot \poly((\log d)/\eps)$ copies; i.e., that the power on $\log m$ can be reduced to~$1$.
If this were possible, then as in the classical case we would be able to \emph{learn} a quantum state~$\rho \in \CC^{d \times d}$ in a class~$\mathcal{C}$ (to constant trace distance accuracy, say) using $\log (|\mathrm{cover}(\mathcal{C})|)\cdot \polylog(d)$ copies, where $\mathrm{cover}(\mathcal{C})$ denotes a set of states that form a (trace-distance) cover for~$\mathcal{C}$.
It's easy to see that the class $\mathcal{C}$ of \emph{all} states has a
cover of size at most $O(d)^{d^2}$, and hence Shadow Tomography with a $\log m$ dependence would yield a full quantum tomography algorithm with copy complexity $\wt{O}(d^2)$, bypassing the sophisticated representation-theory methods of~\cite{OW16,HHJWY17,OW17}.
One might also hope for more efficient learning of other interesting subclasses of states; e.g., the class \emph{separable} states.

\section{Preliminaries} \label{sec:prelims}

\subsection{Classical probability distributions and distances} \label{sec:prelim-classical}

Let $p = (p_1, \dotsc, p_d)$ denote a probability distribution on $[d] = \set{1, \dotsc, d}$.
We consider $A : [d] \to \RR$ to be a random variable on~$[d]$, and write
\[
    \E_p[A] = \E_{\bs{i} \sim p}[A(\bs{i})] = \sum_{i=1}^d p_i A(i).
\]
In particular, if $A : [d] \to \{0,1\}$ we may think of it as an \emph{event} $A \subseteq [d]$.

Given another probability distribution $q$ on $[d]$, there are a variety of important distances/divergences between $p$~and~$q$.
We now recall all those appearing in \Cref{thm:claim41} and \Cref{cor:claim41}.

The \emph{total variation} distance $\dist[TV](p, q)$ between $p$ and $q$ is defined by
\begin{align*}
  \dist[TV](p, q)
  &= \frac12 \sum_{i=1}^d \abs{p_i - q_i}
    = \max_{A \subseteq [d]}\,\abs*{\E_p[A] - \E_q[A]}.
\end{align*}
The \emph{Bhattacharyya coefficient} $\BC(p, q)$ (an affinity between $p$ and $q$, rather than a distance) is
defined by
\begin{align*}
  \BC(p, q)
  &= \sum_{i=1}^d \sqrt{p_i q_i}.
\end{align*}
This can be used to define \emph{squared Hellinger distance}
$\dist[H](p,q)^2 = \dist[H^2](p, q)$, viz.,
\begin{align*}
  \dist[H^2](p, q)
  &= 2(1 - \BC(p, q))
    = \sum_{i=1}^d \paren*{\sqrt{p_i} - \sqrt{q_i}}^2.
\end{align*}
The \emph{KL-divergence} $\dist[KL](p,q)$ between $p$ and $q$ is defined by
\begin{align*}
  \dist[KL](p, q)
  &= \sum_{i=1}^d p_i \ln(p_i/q_i) = \E_{\bs{i} \sim p} \ln(p_{\bs i}/q_{\bs i}).
\end{align*}
Finally, the \emph{$\chi^2$-divergence} $\dist[\chi^2](p, q)$ between $p$ and
$q$ is defined by
\begin{align*}
  \dist[\chi^2](p, q)
  &= \sum_{i=1}^d q_i \paren*{1 - \frac{p_i}{q_i}}^2
    = \E_{\bs{i} \sim q} \bracket*{\paren*{1 - \frac{p_{\bs i}}{q_{\bs i}}}^2}.
\end{align*}

%
%
\subsection{Quantum states and measurements}

A matrix $A \in \CC^{d \times d}$ is said to be \emph{Hermitian}, or \emph{self-adjoint}, if $A^\dag = A$; here $A^\dag$ denotes the conjugate transpose of $A$.
We write $A \ge 0$ to denote that $A$ is self-adjoint and positive semidefinite; e.g., $B^\dagger B \geq 0$ always. In general, we write $A \geq B$ to mean $A - B \ge 0$.
Recall that a positive semidefinite matrix
$A \ge 0$ has a unique positive semidefinite square root
$\sqrt{A} \ge 0$.
We write $\Id$ for the identity matrix (where the dimension is understood from context).

A $d$-dimensional \emph{quantum state} is any $\rho \in \CC^{d \by d}$ satisfying $\rho \geq 0$ and $\tr \rho = 1$; physically speaking, this is the state of a $d$-level quantum system, such as $\log_2 d$ qubits.
A $d$-dimensional \emph{observable} is any self-adjoint $A \in \CC^{d \times d}$; physically speaking, this is any real-valued property of the system.
One can build an associated measuring device that takes in a quantum system in state~$\rho$, and reads out a (stochastic) real number; we denote its expected value, the \emph{expectation of~$A$ with respect to~$\rho$}, by
\[
    \E_\rho[A] = \tr(\rho A).
\]
It is a basic fact of linear algebra that $\E_\rho[A] \geq 0$ whenever $A \geq 0$.

Note that if $\rho$ and $A$ are diagonal matrices then we reduce to the classical case, where the diagonal elements of~$\rho$ form a probability distribution on~$[d]$ and the diagonal elements of $A$ give a real-valued random variable.

We will use the term \emph{quantum event}\footnote{Also known as a
  \emph{POVM element} in the quantum information literature.} for an
observable $A \in \CC^{d \times d}$ with $0 \leq A \leq \Id$; i.e., a
self-adjoint operator with all its eigenvalues between~$0$ and~$1$.  A
state $\rho \in \CC^{d \times d}$ assigns a probability
$0 \leq \E_\rho[A] \leq 1$ to each event.  We reserve the term
\emph{projector} for the special case when $A^2 = A$; i.e., when all of
$A$'s eigenvalues are either~$0$ or~$1$.  Note that we have not exactly
paralleled the classical terminology, where an ``event'' is a random
variable with all its values equal to~$0$ or~$1$, but: (i)~it's
convenient to have a brief term for observables~$A$ with
$0 \leq A \leq \Id$; (ii)~the terminology ``projector'' is very
standard.  Of course, by the spectral theorem, every quantum event~$A$
may be written as
\begin{equation}    \label{eqn:spectral}
    A = \sum_{i=1}^r \lambda_i \Pi_i, \text{ where each } 0 \leq \lambda_i \leq 1, \text{ and } \Pi_i\text{'s are pairwise orthogonal projectors.}
\end{equation}

%
%

A \emph{quantum measurement} $\cM$, also known as a positive-operator
valued measure (POVM), is a sequence $\cM = (A_1, \dots, A_k)$ of
quantum events with $A_1 + \dotsb + A_k = \Id$.
Since
\begin{align*}
    \E_\rho[A_1] + \cdots + \E_\rho[A_k]
  &= \E_\rho[A_1 + \dotsb + A_k]
    = \E_\rho[\Id] = 1,
\end{align*}
a state $\rho$ and a measurement $\cM$ determine a probability
distribution $p$ on $[k]$ defined by $p_i = \E_\rho[A_i]$ for
$i = 1, \dotsc, k$.
A common scenario is that of a \emph{two-outcome measurement}, associated to any quantum event~$A$; this is the measurement $\cM = (\ol{A}, A)$, where $\ol{A} = \Id - A$.

For any quantum measurement~$\cM$, one can physically implement a measuring device that, given~$\rho$, reports $\bs{i} \in [k]$ distributed according to~$p$.
Mathematically, an \emph{implementation} of $\cM = (A_1, \dots, A_k)$ is a sequence of $d$-column matrices $M_1, \dotsc, M_k$ with $M_i^\dagger M_i = A_i$ for $i = 1, \dotsc, k$.
Under this implementation, conditioned on the readout being $\bs{i} = i$, the state~$\rho$ collapses to the new state~$\restrict{\rho}_{M_i}$, defined as follows:
\[
      \restrict{\rho}_{M_i} =  \frac{M_i \rho M_i^\dagger}{\E_\rho[M_i^\dagger M_i]} = \frac{M_i \rho M_i^\dagger}{\E_\rho[A_i]}.
\]
Given $\cM$, we will define the \emph{canonical implementation} to be the one in which $M_i = \sqrt{A_i}$.
In particular, if we have any quantum event~$A$ and we canonically implement the associated two-outcome measurement~$(\ol{A}, A)$, then measuring $\rho$ and conditioning on $A$~occurring yields the new state
\[
      \restrict{\rho}_{\sqrt{A}} =  \frac{\sqrt{A} \rho \sqrt{A}}{\E_\rho[A]}.
\]

More generally, we have the notion of a \emph{quantum operation}~$S$ on $d$-dimensional states, defined by $d$-column matrices $M_1, \dotsc, M_k$
such that
\begin{align*}
  M_1^\dag M_1 + \dotsb + M_k^\dag M_k \le \Id.
\end{align*}
The result of applying $S$ to a state~$\rho$ is (the sub-normalized state)
\begin{align*}
  S(\rho)
  &= M_1 \rho M_1^\dag + \dotsb + M_k \rho M_k^\dag.
\end{align*}
An operation $S$ defines a measurement
\begin{align*}
\cM_S
&= (M_1^\dag M_1, M_2^\dag
M_2, \dotsc, M_k^\dag M_k, \unit - (M_1^\dag M_1 + \dotsb + M_k^\dag M_k)).
\end{align*}
In \Cref{sec:damage-lemma} below, we will use the following terminology:
we say a quantum operation $S$ \emph{rejects} a state $\rho$ if the
outcome of measuring $\rho$ according to $\cM_S$ corresponds to the
quantum event $\unit - (M_1^\dag M_1 + \dotsb + M_k^\dag M_k)$;
otherwise, we say $S$ \emph{accepts} $\rho$.

%
%
\cnote{Here we use bra-ket notation without introducing it.}\rnote{I think it's fine.  You can change it to $\begin{pmatrix} 1 & 0 \\ 0 & 0 \end{pmatrix}$ if you prefer.}
Finally, we will use the following special case of the well-known
Naimark dilation theorem:
\begin{theorem}[Naimark]\label{thm:naimark-extension}
  If $A \in \CC^{d \times d}$ is a quantum event, then there exists a
  projector $\Pi$ operating on the space $\CC^{2d}$ such
  that, for any $\rho \in \CC^{d \times d}$,
  \begin{align*}
    \E_{\rho \tensor \ketbra{0}{0}}[\Pi] = \E_{\rho}[A].
  \end{align*}
\end{theorem}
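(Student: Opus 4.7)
The plan is to construct the dilating projector $\Pi$ explicitly as a $2\times 2$ block matrix on $\CC^{2d}\cong\CC^d \tensor \CC^2$. First I would observe that if a general operator on $\CC^{2d}$ is written in block form $\Pi = \begin{pmatrix}\Pi_{00} & \Pi_{01} \\ \Pi_{10} & \Pi_{11}\end{pmatrix}$, then direct computation gives $\E_{\rho \tensor \ketbra{0}{0}}[\Pi] = \tr((\rho \tensor \ketbra{0}{0})\Pi) = \tr(\rho\,\Pi_{00})$, so the desired identity for every state $\rho$ is equivalent to the single constraint $\Pi_{00} = A$.

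Next I would use the assumption $0 \le A \le \Id$, which makes both $A$ and $\Id - A$ positive semidefinite and so equips each with a unique positive semidefinite square root; moreover $\sqrt{A}$ and $\sqrt{\Id - A}$ commute, since by the spectral theorem both are functions of the single operator $A$. With this in hand I propose the candidate
\[
    \Pi = \begin{pmatrix} A & \sqrt{A(\Id - A)} \\ \sqrt{A(\Id - A)} & \Id - A \end{pmatrix},
\]
whose $(0,0)$ block is $A$ by construction; the expectation identity will then hold as soon as $\Pi$ is shown to be a projector.

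It remains to verify that $\Pi^\dag = \Pi$ and $\Pi^2 = \Pi$. Hermiticity is immediate, since all four blocks are Hermitian and the off-diagonal blocks coincide. For idempotence I would multiply out the blocks and use the commutation $A\sqrt{A(\Id - A)} = \sqrt{A(\Id - A)}\,A$: the diagonal entries of $\Pi^2$ become $A^2 + A(\Id - A) = A$ and $A(\Id - A) + (\Id - A)^2 = \Id - A$, while each off-diagonal entry collapses to $\sqrt{A(\Id - A)}\cdot(A + (\Id - A)) = \sqrt{A(\Id - A)}$. There is essentially no obstacle here; the one conceptual point is recognizing that the spectral calculus forces $\sqrt{A}$ and $\sqrt{\Id - A}$ to commute, after which the construction is determined and the verification reduces to routine block-matrix arithmetic.
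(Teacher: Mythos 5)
Your construction is correct and complete: the block computation $\E_{\rho \tensor \ketbra{0}{0}}[\Pi] = \tr(\rho\,\Pi_{00})$ reduces the statement to exhibiting a projector whose $(0,0)$ block is $A$, and the candidate
\[
    \Pi = \begin{pmatrix} A & \sqrt{A(\Id - A)} \\ \sqrt{A(\Id - A)} & \Id - A \end{pmatrix}
\]
is Hermitian and idempotent exactly as you verify, using that $\sqrt{A(\Id-A)}$ is a function of $A$ (spectral calculus) and hence commutes with $A$ and $\Id - A$. The paper states this result without proof, as a standard special case of Naimark dilation; your argument is precisely the textbook two-block dilation one would supply, so there is nothing to add beyond noting that only the sufficiency direction ($\Pi_{00}=A$ implies the expectation identity) is actually needed, and that is the direction you prove.
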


\subsection{Quantum state distances}
Just as with classical probability distributions, there are a variety of distances/divergences between two quantum states $\rho, \sigma \in \CC^{d \times d}$.
In fact, for every classical ``$f$-divergence'' there is a corresponding ``measured quantum $f$-divergence'', which is the maximal classical divergence that can be achieved by performing the same measurement on~$\rho$ and~$\sigma$.
In this way, classical total variation distance precisely corresponds to quantum trace distance, the Bhattacharyya coefficient precisely corresponds to quantum fidelity, etc.
See, e.g.,~\cite[Sec.~3.1.2]{BOW19} for further review; here we will simply directly define some quantum distances.

The \emph{trace distance} $\dist[\tr](\rho, \sigma)$ between states
$\rho$ and $\sigma$ is defined by
\begin{align*}
  \dist[\tr](\rho, \sigma)
  &= \half \norm{\rho - \sigma}_1
    = \max_{0 \le A \le \Id}\, \abs{\E_\rho[A] - \E_\sigma[A]}.
\end{align*}
Here the second equality is known as the Holevo--Helstrom
theorem~\cite{Holevo:1973,Hel76}, and the maximum is over all quantum events $A \in
\CC^{d \by d}$. Moreover, the maximum is achieved by a projector.
The \emph{fidelity} $\Fid(\rho, \sigma)$ between states $\rho$ and $\sigma$ is defined by
\begin{align}
  \Fid(\rho, \sigma)
  &= \norm{\sqrt{\rho} \sqrt{\sigma}}_1 = \tr \sqrt{\sqrt{\rho} \sigma \sqrt{\rho}}. \label{eqn:fid}
\end{align}
This can be used to define the \emph{squared Bures distance} $\dist[Bures](\rho,\sigma)^2 = \dist[Bures^2](\rho,\sigma)$, viz.,
\begin{align*}
  \dist[Bures^2](\rho, \sigma)
  &= 2(1 - \Fid(\rho, \sigma)).
\end{align*}
It follows from the work of Fuchs and Caves~\cite{FC95} that
$\half \dist[Bures^2](\rho, \sigma) \le \dist[tr](\rho, \sigma) \le
\dist[Bures](\rho, \sigma)$ for all states $\rho$ and $\sigma$.

Below we give a simpler formula for fidelity in the case when $\sigma$ is a conditioned version of~$\rho$ (such results are sometimes known under the name ``gentle measurement''; see~\cite[Cor.~3.15]{Watrous:2018}):
\begin{proposition}                                       \label{prop:fid1}
    Let $\rho \in \CC^{d \times d}$ and $M \in \CC^{d \times d}$ an
    observable.
    Then
    $\displaystyle
        \Fid(\rho, \restrict{\rho}_{M})^2 = \frac{\E_\rho[M]^2}{\E_\rho[M^2]}.
    $
    In particular, for a projector~$\Pi$ we get
    $
        \Fid(\rho,  \restrict{\rho}_{\Pi}) = \sqrt{\E_{\rho}[\Pi]},
    $
    and for conditioning on the occurrence of a quantum event~$A$ (under the canonical implementation),
    $\displaystyle
        \Fid(\rho, \restrict{\rho}_{\sqrt{A}}) = \frac{\E_\rho[\sqrt{A}]}{\sqrt{\E_\rho[A]}}.
    $
\end{proposition}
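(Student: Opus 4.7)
The plan is to apply the fidelity formula $\Fid(\rho,\sigma) = \tr\sqrt{\sqrt{\rho}\,\sigma\,\sqrt{\rho}}$ from \eqref{eqn:fid} directly to $\sigma = \restrict{\rho}_M = M\rho M^\dagger/\E_\rho[M^\dagger M]$, and to exploit the fact that the ``nested'' square root collapses once I recognize its argument as a perfect square. Throughout, I will treat $M$ as self-adjoint and positive semidefinite: this is what the two specializations require (both $\Pi$ and $\sqrt{A}$ are PSD), and it is implicit in the canonical conditioning notation $\restrict{\rho}_M$ (otherwise $\E_\rho[M]$ and the RHS ratio may not even agree in sign with~$\Fid$).

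First I would pull the positive scalar $1/\E_\rho[M^\dagger M]$ out of the outer square root and use $\rho = \sqrt{\rho}\sqrt{\rho}$ to factor
\[
    \sqrt{\rho}\,M \rho M^\dagger\,\sqrt{\rho} \;=\; Y\,Y^\dagger, \qquad Y \;:=\; \sqrt{\rho}\,M\,\sqrt{\rho}.
\]
Since $M \geq 0$, the matrix $Y$ is positive semidefinite, so $\sqrt{YY^\dagger} = \sqrt{Y^2} = Y$. Taking the trace and using cyclicity gives
\[
    \Fid(\rho,\sigma) \;=\; \frac{\tr(Y)}{\sqrt{\E_\rho[M^\dagger M]}} \;=\; \frac{\tr(\rho M)}{\sqrt{\E_\rho[M^2]}} \;=\; \frac{\E_\rho[M]}{\sqrt{\E_\rho[M^2]}},
\]
and squaring yields the main formula. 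The projector case then drops out immediately from $\Pi^2 = \Pi$, and the canonical-implementation case from $(\sqrt{A})^2 = A$.

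I do not anticipate any substantive obstacle: the entire proof is a one-line algebraic manipulation. The only subtle point is the step $\sqrt{\rho}\,M\rho M\,\sqrt{\rho} = (\sqrt{\rho}M\sqrt{\rho})^2$, which requires $M$ to be self-adjoint and is precisely what forces the outer square root to evaluate in closed form. If $M$ were self-adjoint but not PSD, the same computation would yield $\lVert\sqrt{\rho}M\sqrt{\rho}\rVert_1$ in the numerator, which by the spectral theorem strictly exceeds $\lvert\tr(\rho M)\rvert$ unless $\sqrt{\rho}M\sqrt{\rho}$ is semidefinite; so the stated identity tacitly uses PSD-ness, matching the hypothesis of the two corollary formulae.
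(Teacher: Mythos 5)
Your proof is correct and is essentially the paper's own argument: both apply the formula $\Fid(\rho,\sigma)=\tr\sqrt{\sqrt{\rho}\,\sigma\sqrt{\rho}}$ to $\restrict{\rho}_M$, recognize $\sqrt{\rho}\,M\rho M^\dagger\sqrt{\rho}=(\sqrt{\rho}M\sqrt{\rho})^2$, collapse the square root, and take the trace. Your side remark is also apt --- the collapse $\sqrt{(\sqrt{\rho}M\sqrt{\rho})^2}=\sqrt{\rho}M\sqrt{\rho}$ tacitly needs $\sqrt{\rho}M\sqrt{\rho}\geq 0$ (e.g.\ $M\geq 0$), an assumption the paper's proof also uses implicitly even though the statement says only ``observable''; the cases actually used ($\Pi$ and $\sqrt{A}$) satisfy it.
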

\begin{proof}
    Using the definition of $\restrict{\rho}_{M}$ and the second formula for fidelity in \Cref{eqn:fid},
    \[
        \Fid(\rho, \restrict{\rho}_{M})^2 = \frac{\tr\paren*{\sqrt{\sqrt{\rho} M \rho M^\dagger \sqrt{\rho}}}^2}{\E_{\rho}[M^\dagger M]} = \frac{\tr\paren*{\sqrt{\sqrt{\rho} M \sqrt{\rho} \sqrt{\rho} M \sqrt{\rho}}}^2}{\E_{\rho}[M^2]} = \frac{\tr\paren*{\sqrt{\rho} M \sqrt{\rho}}^2}{\E_{\rho}[M^2]}
        = \frac{\E_{\rho}[M]^2}{\E_{\rho}[M^2]}. \qedhere
    \]
\end{proof}
Below we give a further formula for $\Fid(\rho, \restrict{\rho}_{\sqrt{A}})$ using the spectral decomposition of~$A$.  (We remark that it may be obtained as a special case of the theorem of Fuchs and Caves~\cite{FC95}.)
\begin{proposition}\label{prop:fidelity-bhattacharyya}
  Let $\rho \in \CC^{d \times d}$ be a quantum state, and let $A \in \CC^{d \times d}$ be a quantum event with spectral decomposition $A = \sum_{i=1}^r \lambda_i \Pi_i$ as in \Cref{eqn:spectral}.
  Let $p$ be the probability distribution on~$[r]$ determined by measurement $\cM = (\Pi_1, \dots, \Pi_r)$ on~$\rho$, and let $q$ be the one determined by $\cM$ on $\restrict{\rho}_{\sqrt{A}}$.  Then
  $\Fid(\rho, \restrict{\rho}_{\sqrt{A}}) = \BC(p,q)$.
\end{proposition}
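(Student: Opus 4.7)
The plan is to reduce both sides to the same explicit formula in terms of the spectral data $\{(\lambda_i,\Pi_i)\}$, and then to invoke \Cref{prop:fid1} to finish. We may assume the projectors $\Pi_1,\dots,\Pi_r$ sum to $\Id$ by adjoining the projector onto $\ker A$ with eigenvalue $\lambda = 0$ if necessary; this only adds a point of zero $p$-probability, so it does not affect $\BC(p,q)$.

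First I would compute $p$ and $q$ explicitly. By definition $p_i = \E_\rho[\Pi_i] = \tr(\rho \Pi_i)$. For $q_i$, unpacking $\restrict{\rho}_{\sqrt{A}}$ gives
\[
  q_i \;=\; \tr\!\left(\frac{\sqrt{A}\,\rho\,\sqrt{A}}{\E_\rho[A]}\,\Pi_i\right)
       \;=\; \frac{\tr(\sqrt{A}\,\rho\,\sqrt{A}\,\Pi_i)}{\E_\rho[A]}.
\]
The key manipulation is now to use pairwise orthogonality of the $\Pi_j$, which gives $\sqrt{A}\,\Pi_i = \sum_j \sqrt{\lambda_j}\,\Pi_j \Pi_i = \sqrt{\lambda_i}\,\Pi_i$ (and similarly on the other side). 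Applying this together with the cyclic property of trace,
\[
  \tr(\sqrt{A}\,\rho\,\sqrt{A}\,\Pi_i)
  \;=\; \tr(\rho\,\sqrt{A}\,\Pi_i\,\sqrt{A})
  \;=\; \lambda_i \tr(\rho\,\Pi_i)
  \;=\; \lambda_i\,p_i,
\]
so that $q_i = \lambda_i p_i / \E_\rho[A]$.

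Next I would substitute into the Bhattacharyya coefficient:
\[
  \BC(p,q) \;=\; \sum_{i=1}^r \sqrt{p_i q_i}
           \;=\; \sum_{i=1}^r p_i\,\frac{\sqrt{\lambda_i}}{\sqrt{\E_\rho[A]}}
           \;=\; \frac{\sum_i \sqrt{\lambda_i}\,\tr(\rho\,\Pi_i)}{\sqrt{\E_\rho[A]}}
           \;=\; \frac{\E_\rho[\sqrt{A}]}{\sqrt{\E_\rho[A]}},
\]
where in the last step I used that $\sqrt{A} = \sum_i \sqrt{\lambda_i}\,\Pi_i$ by the spectral theorem. Comparing with the final formula of \Cref{prop:fid1}, the right-hand side is precisely $\Fid(\rho,\restrict{\rho}_{\sqrt{A}})$, which completes the proof.

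There is no real obstacle here — the argument is a short spectral calculation. The only subtle point worth flagging is making sure the zero-eigenvalue summand is handled correctly when $A$ is not full rank: including it enlarges the POVM by $\Pi_0 = \Id - \sum_{i\ge 1}\Pi_i$ but contributes a $\lambda = 0$ term, so it drops out of both $\E_\rho[\sqrt{A}]$ and of $\BC(p,q)$ (since $q_0 = 0$), leaving the identity unchanged.
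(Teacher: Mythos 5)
Your proof is correct and follows essentially the same route as the paper's: compute $q_i = \lambda_i p_i/\E_\rho[A]$ via orthogonality of the $\Pi_i$, sum $\sqrt{p_i q_i}$ to get $\E_\rho[\sqrt{A}]/\sqrt{\E_\rho[A]}$, and conclude by \Cref{prop:fid1}. Your extra remark about adjoining the kernel projector with $\lambda=0$ so that $\cM$ is a genuine (complete) measurement is a nice bit of care that the paper leaves implicit, but it changes nothing in the argument.
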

\begin{proof}
    By definition,
    \[
        \E_{\restrict{\rho}_{\sqrt{A}}}[\Pi_i] \cdot \E_{\rho}[A] = \tr(\sqrt{A} \rho \sqrt{A} \Pi_i) = \E_{\rho}[\sqrt{A} \Pi_i \sqrt{A}] = \E_{\rho}[\lambda_i \Pi_i] = \lambda_i p_i,
    \]
    and hence $q_i = \lambda_i p_i / \E_{\rho}[A]$.
    It follows that
    \[
        \BC(p,q) = \frac{\sum_i \sqrt{\lambda_i} p_i}{\sqrt{\E_{\rho}[A]}} = \frac{\E_{\rho}[\sqrt{A}]}{\sqrt{\E_{\rho}[A]}},
    \]
    and the proof is complete by \Cref{prop:fid1}.
\end{proof}

\subsection{Naive expectation estimation}
\begin{lemma}\label{lem:amplification}
  Let $E \in \CC^{d \times d}$ be a quantum event and let
  $0 < \eps, \delta < \frac12$. Then there exists $n = O(\log(1/\delta)/\eps^2)$ (not depending on~$E$)
  and a measurement $\cM = (A_0, \dotsc, A_n)$ such that, for any quantum state
  $\rho \in \CC^{d \times d}$,
  \begin{align*}
    \Pr \bracket*{\abs*{\frac{\bs{k}}{n} -
    \tr(\rho E)} > \eps} \le \delta,
  \end{align*}
  where $\bs{k} \in \set{0, \dotsc, n}$ is the random outcome of the
  measurement $\cM$ applied to the state $\rho^{\tensor n}$.

  Moreover, for any parameters $0 \le \tau, c \le 1$, there exists a quantum
  event $B$ such that
  \begin{align*}
    \abs{\tr(\rho E) - \tau} > c + \eps
    &\implies \E_{\rho^{\tensor n}}[B] \ge 1 - \delta \ \text{and} \\
    \abs{\tr(\rho E) - \tau} \le c - \eps
    &\implies \E_{\rho^{\tensor n}}[B] \le \delta.
  \end{align*}
  Additionally, if $E$ is a projector, then so is $B$.
\end{lemma}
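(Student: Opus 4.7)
The plan is to construct the measurement $\cM$ by independently performing the two-outcome measurement $(\ol{E}, E)$ on each of the $n$ copies of~$\rho$ and grouping the POVM elements by the total number of ``$E$ outcomes''. Concretely, for each $k \in \{0, \dots, n\}$, set
\[
  A_k \;=\; \sum_{\substack{S \subseteq [n] \\ |S|=k}} \bigotimes_{i=1}^n E^{(i,S)}, \qquad \text{where } E^{(i,S)} = \begin{cases} E & i \in S \\ \ol{E} & i \notin S.\end{cases}
\]
Each summand is a tensor product of positive semidefinite matrices, so $A_k \geq 0$; and $\sum_{k=0}^n A_k = \bigotimes_{i=1}^n (E + \ol{E}) = \Id$, so $\cM = (A_0, \dots, A_n)$ is a valid POVM. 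By construction, the outcome distribution of $\cM$ applied to $\rho^{\otimes n}$ is exactly the distribution of the sum of $n$ i.i.d.\ $\mathrm{Bernoulli}(\tr(\rho E))$ trials, i.e., $\bs{k} \sim \Binomial(n, \tr(\rho E))$. A standard Hoeffding bound then gives $\Pr[|\bs{k}/n - \tr(\rho E)| > \eps] \leq 2 \exp(-2\eps^2 n) \leq \delta$ as soon as $n \geq C \log(1/\delta)/\eps^2$ for a suitable universal constant $C$.

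For the ``moreover'' clause, I take
\[
  B \;=\; \sum_{k \,:\, |k/n - \tau| > c} A_k,
\]
which is again a quantum event since it is a sum of quantum events bounded above by $\sum_{k=0}^n A_k = \Id$. If $|\tr(\rho E) - \tau| > c + \eps$ and $|\bs{k}/n - \tr(\rho E)| \leq \eps$ (which happens with probability $\geq 1-\delta$), then by the reverse triangle inequality $|\bs{k}/n - \tau| > c$, so $\E_{\rho^{\otimes n}}[B] = \Pr[|\bs{k}/n - \tau| > c] \geq 1 - \delta$. Symmetrically, if $|\tr(\rho E) - \tau| \leq c - \eps$ and $|\bs{k}/n - \tr(\rho E)| \leq \eps$, then $|\bs{k}/n - \tau| \leq c$, so $\E_{\rho^{\otimes n}}[B] \leq \delta$.

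Finally, for the projector claim, suppose $E$ is a projector. Then $\ol{E}$ is the orthogonal complementary projector, so $E$ and $\ol{E}$ are orthogonal projectors. For each $S \subseteq [n]$, $\bigotimes_i E^{(i,S)}$ is a tensor product of projectors, hence itself a projector. For distinct $S, S'$ of size $k$, pick any $i \in S \triangle S'$; the $i$-th tensor factors $E$ and $\ol{E}$ satisfy $E \ol{E} = 0$, so the two tensor-product summands are orthogonal projectors. Thus $A_k$ is a sum of pairwise orthogonal projectors, hence a projector. Different $A_k, A_{k'}$ are likewise orthogonal (any $S$ of size $k$ and $S'$ of size $k' \ne k$ differ somewhere), so $B$ is also a projector.

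The only step worth a second look is the choice of $B$: one must be careful that $B$ really is bounded by $\Id$ even though it is a sum of several POVM elements — this is automatic here because the $A_k$ are pairwise orthogonal (in the sense that their sum is part of a POVM), so their partial sum is bounded by $\Id$ in the Löwner order. No subtle Naimark dilation is needed for this lemma; Naimark is only referenced elsewhere in the preliminaries.
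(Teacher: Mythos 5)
Your proposal is correct and is essentially the paper's own proof: the same POVM $A_k$ built from tensor products of $E$ and $\ol{E}$ grouped by Hamming weight, the same binomial/Chernoff argument, the same thresholded sum for $B$, and the same orthogonal-projector argument when $E$ is a projector. The only cosmetic difference is that you define $B$ via the strict inequality $\abs{k/n-\tau}>c$ while the paper uses $\geq c$; both choices satisfy the stated implications.
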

\begin{proof}
  Let $E_1 = E$ and $E_0 = \Id - E$. For all $x \in \Bin[n]$, let
  $E_x \in (\CC^{d \times d})^{\tensor n}$ be defined by
  $E_x = E_{x_1} \tensor E_{x_2} \tensor \dotsb \tensor E_{x_n}$. For
  $k = 0, \dotsc, n$, let $A_k \in (\CC^{d \times d})^{\tensor n}$ be
  the quantum event defined by
  \begin{align*}
    A_k
    &= \sum_{\substack{x \in \Bin[n] \\ \abs{x} = k}} E_x.
  \end{align*}
  Let $\cM$ be the measurement defined by
  $\cM = \set{A_0, \dotsc, A_n}$.

  Thus, if $\bs{k} \in \set{0, \dotsc, n}$ is the random outcome of
  measuring $\rho^{\tensor n}$ according to $\cM$, then $\bs{k}$ is
  distributed as $\Binomial(n, \tr(\rho E))$. Hence, if
  $n = O(\log(1/\delta)/\eps^2)$, then, by Hoeffding's inequality,
  \begin{align*}
    \Pr \bracket*{\abs*{\frac{\bs k}{n} -
    \tr(\rho E)} \ge \eps}
    &\le 2 \exp(- 2n\eps^2)
      \le \delta.
  \end{align*}
  Let parameters $\tau, c \in [0, 1]$ be given and let the function $f : [0, 1]
  \to \Bin$ be defined by
  \begin{align*}
    f(t) &=
    \begin{cases}
      1, & \abs{t - \tau} \ge c, \\
      0, & \text{otherwise}.
    \end{cases}
  \end{align*}
  Finally, let the quantum event $B$ be defined by
  \begin{align*}
    B
    &= \sum_{k=0}^n f(k/n) A_k.
  \end{align*}
  Thus, if $\bs{k} \sim \Binomial(n, \tr(\rho E))$, then
  \begin{align*}
    \E_{\rho^{\tensor n}} [B]
    &= \sum_{k=0}^n \Pr[\bs{k} = k] \cdot f(k/n)
      = \E[f(\bs{k}/n)]
      = \Pr \bracket*{\abs*{\frac{\bs{k}}{n} - \tau} \ge c}.
  \end{align*}
  If $c + \eps \le \abs{\tr(\rho E) - \tau}$, then
  $\abs{\tr(\rho E) - \bs{k}/n} < \eps$ implies
  $\abs{\bs{k}/n - \tau} \ge c$. Hence,
  \begin{align*}
    \E_{\rho^{\tensor n}} [B]
    &= \Pr \bracket*{\abs*{\frac{\bs{k}}{n} - \tau} \ge c}
      \ge \Pr \bracket*{\abs*{\frac{\bs{k}}{n} - \tr(\rho E)} < \eps}
      \ge 1 - \delta.
  \end{align*}
  If $c - \eps \ge \abs{\tr(\rho E) - \tau}$, then $\abs{\tr(\rho E) -
    \bs{k}/n} < \eps$ implies $\abs{\bs{k}/n - \tau} < c$. Hence,
  \begin{align*}
    \E_{\rho^{\tensor n}} [\ol{B}]
    &= \Pr \bracket*{\abs*{\frac{\bs{k}}{n} - \tau} < c}
      \ge \Pr \bracket*{\abs*{\frac{\bs{k}}{n} - \tr(\rho E)} < \eps}
      \ge 1 - \delta.
  \end{align*}
  If $E$ is a projector, then $A_k$ is a projector and
  $A_k A_\ell = A_\ell A_k = 0$ for all
  $k, \ell \in \set{0, \dotsc, n}$. Since $B$ is a sum of orthogonal
  projectors $A_k$ with $k \in \set{0, \dotsc, n}$, it follows that $B$
  is a projector.
\end{proof}

\subsection{Quantum union bound-style results}\label{sec:damage-lemma}
The following result is part of the ``Damage Lemma'' of Aaronson and
Rothblum~\cite[Lemma 17]{AR19}. Since the original proof of the ``Damage
Lemma'' was found to be incorrect~\cite{Lei22}, we provide a slightly
different proof by induction below:
\begin{lemma}\label{lem:damage}
  Let $S_1, \dotsc, S_m$ be arbitrary quantum operations on $d$-dimensional
  quantum states. Let $\rho$ be a quantum state on $\CC^d$ with
  $p_i = \tr(S_i(\rho)) > 0$ for all $i \in [m]$. It holds that
  \begin{align*}
    \abs{\tr(S_m(\dotsm S_1(\rho))) - p_1 \dotsm p_m}
    &\le 2 \cdot \sum_{k=1}^{m-1} p_1 \dotsm p_k \cdot
      \dist[\tr] \paren*{\frac{S_k(\rho)}{\tr(S_k(\rho))}, \rho}.
  \end{align*}
\end{lemma}
\begin{proof}
  For all $k \in [m]$, let $p_{[k]} = p_1 \dotsm p_k$ and
  $\sigma_k = S_k(\rho)/\tr(S_k(\rho))$. For all self-adjoint matrices
  $X$, $\abs{\tr(X)} \le \norm{X}_1$ and $\norm{S(X)}_1 \le \norm{X}_1$
  for all quantum operations $S$. Hence,
  \begin{align*}
    \abs{\tr(S_m(\dotsb S_1(\rho))) - p_{[m]}}
    &= \abs{\tr(S_m(\dotsb S_1(\rho))) - p_{[m-1]}
      \tr(S_m(\rho))} \\
    &= \abs{\tr(S_m(\dotsb S_1(\rho)) - p_{[m-1]}
      S_m(\rho))} \\
    &= \abs{\tr(S_m(S_{m-1}(\dotsb S_1(\rho)) - p_{[m-1]}
      \rho))} \\
    &\le \norm{S_m(S_{m-1}(\dotsb S_1(\rho)) - p_{[m-1]}
      \rho)}_1 \\
    &\le \norm{S_{m-1}(\dotsb S_1(\rho)) - p_{[m-1]}
      \rho}_1 \\
    &\le \norm{S_{m-1}(\dotsb S_1(\rho)) - p_{[m-1]} \sigma_{m-1}}_1 +
      \norm{p_{[m-1]} \sigma_{m-1} - p_{[m-1]} \rho}_1 \\
    &= \norm{S_{m-1}(\dotsb S_1(\rho)) - p_{[m-2]} S_{m-1}(\rho)}_1 +
      2 p_{[m-1]} \dist[tr](\sigma_{m-1}, \rho) \\
    &\le \norm{S_{m-2}(\dotsb S_1(\rho)) - p_{[m-2]} \rho}_1 +
      2 p_{[m-1]} \dist[tr](\sigma_{m-1}, \rho).
  \end{align*}
  Note that
  $\norm{S_1(\rho) - p_1 \rho}_1 = p_1 \norm{\sigma_1 - \rho}_1 = 2
  p_{[1]} \dist[tr](\sigma_1, \rho)$. Therefore, by induction,
  \begin{align*}
    \abs{\tr(S_m(\dotsb S_1(\rho))) - p_{[m]}}
    &\le 2 \cdot \sum_{k=1}^{m-1} p_{[k]} \cdot \dist[tr](\sigma_k,
      \rho). \qedhere
  \end{align*}
\end{proof}

\Cref{lem:damage} compares the probability
$\tr(S_1(\rho)) \dotsm \tr(S_m(\rho))$ that the operations
$S_1, \dotsc, S_m$ accept the same state $\rho$ independently with the
probability $\tr(S_m(\dotsm S_1(\rho)))$ that all $S_1, \dotsc, S_m$
accept when applied sequentially to the initial state $\rho$.

The following inequality, which appears in the proof of~\cite[Theorem
1.3]{OV22}, will be used to show that when $S_1, \dotsc, S_m$ are
applied sequentially to the initial state $\rho$, the probability of
observing $S_1, \dotsc, S_{t-1}$ accept and $S_t$ reject for certain
``good'' values of $t \in [m]$ is bounded below by a positive constant
for specific $\rho$ and $S_1, \dotsc, S_m$ (see proof of
\Cref{lem:bajansearch3}).

\begin{lemma}\label{lem:fid-ineq}
  Let $\rho$ be a mixed quantum state and let $A_1, \dotsc, A_m$ denote
  quantum events on $\CC^d$ with $\E_\rho[A_i] > 0$ for all $i \in
  [m]$. Let $p_0 = 1$, $q_0 = 1$, $\rho_0 = \rho$, $p_i = 1 - \E_\rho[A_i]$, and
  $\rho_i = \restrict{\rho_{i-1}}_{\sqrt{A_i}}$ for all $i \in
  [m]$.

  Suppose the measurements $(A_1, \ol{A}_1), \dotsc, (A_m, \ol{A}_m)$
  are applied to $\rho$ sequentially; for all $t \in [m]$, let $q_t$ denote the probability
  of observing outcomes $A_1, \dotsc, A_t$ and let $s_t$ denote the
  probability of observing outcomes $A_1, \dotsc, A_{t-1}, \ol{A}_t$. It
  holds that
  \begin{align*}
    1
    &\le \sqrt{q_m} \Fid(\rho, \rho_m) + \sum_{i=1}^m \sqrt{s_i} \sqrt{p_i}.
  \end{align*}
  \begin{proof}
    Since $1 = q_0 \Fid(\rho, \rho_0)$ and $q_i = q_{i-1} \cdot \E_{\rho_{i-1}}[A_i]$ for all $i \in [m]$,
    \begin{align*}
      1 - \sqrt{q_m} \Fid(\rho, \rho_m)
      &= \sum_{i=1}^m \paren*{\sqrt{q_{i-1}} \Fid(\rho, \rho_{i-1}) -
        \sqrt{q_{i}} \Fid(\rho, \rho_{i})} \\
      &= \sum_{i=1}^m \paren*{\sqrt{q_{i-1}} \Fid(\rho, \rho_{i-1}) -
        \sqrt{q_{i-1}} \sqrt{\E_{\rho_{i-1}}[A_i]} \Fid(\rho, \rho_{i})}
      \\
      &= \sum_{i=1}^m \sqrt{q_{i-1}} \paren*{\Fid(\rho, \rho_{i-1}) -
        \sqrt{\E_{\rho_{i-1}}[A_i]} \Fid(\rho, \rho_{i})}.
    \end{align*}
    By~\cite[Lemma 2.1]{OV22} and the inequality $\unit - \sqrt{A_i} \le \ol{A_i}$,
    \begin{align*}
      \Fid(\rho, \rho_{i-1}) - \sqrt{\E_{\rho_{i-1}}[A_i]} \Fid(\rho,
      \rho_{i})
      &\le \sqrt{\E_{\rho}[\unit - \sqrt{A_i}]}
        \sqrt{\E_{\rho_{i-1}}[\unit - \sqrt{A_i}]}
        \le \sqrt{\E_{\rho}[\ol{A}_i]} \sqrt{\E_{\rho_{i-1}}[\ol{A}_i]}.
    \end{align*}
    Hence,
    \begin{align*}
      1 - \sqrt{q_m} \Fid(\rho, \rho_m)
      &\le \sum_{i=1}^m \sqrt{q_{i-1}} \sqrt{\E_{\rho}[\ol{A}_i]}
        \sqrt{\E_{\rho_{i-1}}[\ol{A}_i]}
        \le \sum_{i=1}^m \sqrt{s_i}
        \sqrt{p_i}.
    \end{align*}
  \end{proof}
\end{lemma}

Finally, for the ``unique decoding'' part of our Hypothesis Selection routine we will use a related result, Gao's \emph{quantum Union Bound}~\cite{Gao15}:
\begin{lemma}   \label{lem:quantum-union-bound}
    For each of $i = 1,  \dots,  m$, let $\Pi^1_i \in \CC^{d \by d}$ be a projector and write $\Pi^0_i = \Id - \Pi^1_i$.
    Then for any quantum state~$\rho \in \CC^{d \by d}$,
    \[
        \E_\rho[(\Pi^1_1 \dotsm \Pi^1_m)(\Pi^1_1 \dotsm \Pi^1_m)^\dag] \geq 1 - 4 \sum_{i=1}^m \E_\rho[\Pi^0_i].
    \]
\end{lemma}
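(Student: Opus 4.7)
The plan is to reduce to pure states by linearity in $\rho$, and then bound the surviving amplitude via a telescoping identity combined with a double application of Cauchy--Schwarz.

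\textbf{Reduction and setup.} Both sides of the inequality are linear in $\rho$ (the LHS is $\tr(\rho \cdot Q)$ for the fixed operator $Q = \Pi^1_1 \cdots \Pi^1_m \Pi^1_m \cdots \Pi^1_1$), so decomposing $\rho$ into pure components reduces to the case $\rho = |\psi\rangle\langle\psi|$. Abbreviate $P_k = \Pi^1_k \cdots \Pi^1_1$, $|\phi_k\rangle = P_k|\psi\rangle$ (with $|\phi_0\rangle = |\psi\rangle$), and $\eps_k = \E_\rho[\Pi^0_k] = \|\Pi^0_k|\psi\rangle\|^2$. The LHS equals $\||\phi_m\rangle\|^2$, so writing $x = 1 - \||\phi_m\rangle\|^2$ and $E = \sum_k \eps_k$, the target is $x \leq 4E$. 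The bound is vacuous unless $E < \tfrac14$, which I henceforth assume.

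\textbf{Two telescoping identities.} Using $\Pi^1_k = \Id - \Pi^0_k$, telescoping $P_m - \Id = \sum_{k=1}^m (P_k - P_{k-1}) = -\sum_k \Pi^0_k P_{k-1}$ gives
\[
1 - \langle\psi|\phi_m\rangle \;=\; \sum_{k=1}^m \langle\psi|\Pi^0_k|\phi_{k-1}\rangle.
\]
Separately, because each $\Pi^1_k$ is a projector, $\||\phi_k\rangle\|^2 - \||\phi_{k-1}\rangle\|^2 = -\|\Pi^0_k|\phi_{k-1}\rangle\|^2$, and telescoping yields $x = \sum_{k=1}^m \|\Pi^0_k|\phi_{k-1}\rangle\|^2$. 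Applying Cauchy--Schwarz inside each summand of the first identity gives $|\langle\psi|\Pi^0_k|\phi_{k-1}\rangle| \leq \sqrt{\eps_k}\,\|\Pi^0_k|\phi_{k-1}\rangle\|$, and a second Cauchy--Schwarz on the resulting sum produces
\[
|1 - \langle\psi|\phi_m\rangle| \;\leq\; \sum_{k=1}^m \sqrt{\eps_k}\,\|\Pi^0_k|\phi_{k-1}\rangle\| \;\leq\; \sqrt{E \cdot x}.
\]

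\textbf{Closing the loop.} Combining with $|\langle\psi|\phi_m\rangle| \leq \||\phi_m\rangle\| = \sqrt{1-x}$ (Cauchy--Schwarz once more; note $\sqrt{Ex} < 1$ since $E < \tfrac14$ and $x \leq 1$) yields the self-referential inequality $\sqrt{1-x} \geq 1 - \sqrt{Ex}$, which after squaring and rearranging delivers $x \leq 4E/(1+E)^2 \leq 4E$, as required. I expect the main pitfall to be extracting the linear-in-$\eps_k$ bound rather than a $\sqrt{\eps_k}$ one: a single Cauchy--Schwarz on the telescoped sum gives only $|1 - \langle\psi|\phi_m\rangle| \leq \sum_k \sqrt{\eps_k}$ and hence the weaker $\||\phi_m\rangle\|^2 \geq 1 - 2\sum_k \sqrt{\eps_k}$. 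The crucial move is to pair the overlap telescope with the norm-loss telescope via a second Cauchy--Schwarz, and then to close the resulting quadratic inequality in $x$ back against itself.
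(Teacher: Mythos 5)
Your proof is correct, and it is worth noting that the paper itself does not prove this lemma at all: it simply imports Gao's quantum union bound from~\cite{Gao15} as a black box. Your argument is therefore a genuinely self-contained alternative. Each step checks out: the reduction to pure states by linearity; the overlap telescope $1-\langle\psi|\phi_m\rangle=\sum_k\langle\psi|\Pi^0_k|\phi_{k-1}\rangle$ (from $P_k-P_{k-1}=-\Pi^0_kP_{k-1}$); the norm-loss telescope $x=\sum_k\|\Pi^0_k|\phi_{k-1}\rangle\|^2$ (using idempotence of each $\Pi^1_k$); the two applications of Cauchy--Schwarz giving $|1-\langle\psi|\phi_m\rangle|\le\sqrt{Ex}$, with the idempotence of $\Pi^0_k$ correctly used to extract $\sqrt{\eps_k}$ rather than a weaker factor; and the closing quadratic $\sqrt{1-x}\ge 1-\sqrt{Ex}$, whose squaring is legitimate since $E<\tfrac14$ may be assumed (otherwise the claim is vacuous), yielding $x\le 4E/(1+E)^2\le 4E$. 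Compared with Gao's original proof, which proceeds by a more involved inductive/operator-inequality analysis of the sequential measurement, your route is shorter and more elementary, in the spirit of later simplified proofs of the quantum union bound, and it even delivers the marginally sharper constant $4E/(1+E)^2$. The one point to keep explicit when writing it up is the triangle-inequality step $|\langle\psi|\phi_m\rangle|\ge 1-|1-\langle\psi|\phi_m\rangle|\ge 1-\sqrt{Ex}>0$, which is what licenses squaring both sides.
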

\begin{corollary}                                       \label{cor:quantum-union-bound}
    In the setting of \Cref{lem:quantum-union-bound}, suppose that $x \in \{0,1\}^m$ is such that $\E_\rho[\Pi_i^{x_i}] \geq 1 - \eps$ for all $1 \leq i \leq m$.
    If an algorithm sequentially measures $\rho$ with $(\Pi_1^0, \Pi_1^1)$, measures the resulting state with $(\Pi_2^0, \Pi_2^1)$, measures the resulting state with $(\Pi_3^0, \Pi_3^1)$, etc., then the probability that the measurement outcomes are precisely $x_1, x_2, \dots, x_m$ is at least $1 - 4\eps m$.
\end{corollary}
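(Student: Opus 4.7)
The plan is to observe that this corollary follows directly from Lemma \ref{lem:quantum-union-bound} (Gao's quantum Union Bound) via a suitable relabeling of the projectors, together with the elementary identification of the sequential-measurement probability as the trace of a conjugated product.

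First, I would unfold the sequential measurement process. Using the canonical implementation of each two-outcome projective measurement $(\Pi_i^0, \Pi_i^1)$, conditioning on outcome $x_i$ and leaving the state sub-normalized corresponds to the map $\sigma \mapsto \Pi_i^{x_i} \sigma \Pi_i^{x_i}$, whose trace equals the conditional probability of that outcome. Composing along the entire sequence, the probability of observing the particular string $x_1, \dots, x_m$ equals
\[
  \tr\!\bigl(\Pi_m^{x_m}\cdots \Pi_1^{x_1}\,\rho\,\Pi_1^{x_1}\cdots \Pi_m^{x_m}\bigr)
  = \E_\rho\!\bigl[(\Pi_1^{x_1}\cdots \Pi_m^{x_m})(\Pi_1^{x_1}\cdots \Pi_m^{x_m})^\dagger\bigr],
\]
where the last equality uses that each $\Pi_i^{x_i}$ is self-adjoint.

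Second, I would apply Lemma \ref{lem:quantum-union-bound} with the relabeling $\Pi_i^1 \leftarrow \Pi_i^{x_i}$ (so that in the lemma's statement $\Pi_i^0$ is replaced by $\Id - \Pi_i^{x_i} = \Pi_i^{1-x_i}$). Both choices are genuine projectors, so the hypotheses of the lemma are satisfied, and its conclusion yields
\[
  \E_\rho\!\bigl[(\Pi_1^{x_1}\cdots \Pi_m^{x_m})(\Pi_1^{x_1}\cdots \Pi_m^{x_m})^\dagger\bigr]
  \;\geq\; 1 - 4\sum_{i=1}^m \E_\rho[\Pi_i^{1-x_i}].
\]
Finally, the hypothesis $\E_\rho[\Pi_i^{x_i}] \geq 1 - \eps$ gives $\E_\rho[\Pi_i^{1-x_i}] \leq \eps$ for each $i$, so the right-hand side is at least $1 - 4\eps m$, which is exactly the claimed bound.

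There is no real obstacle here; the work is entirely in the bookkeeping. The only point worth double-checking is that Gao's lemma is stated for projectors (not arbitrary POVM elements), so one must confirm that $\Pi_i^{x_i}$ and its complement remain projectors under the relabeling, which they do because the complement of a projector is a projector.
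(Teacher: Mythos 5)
Your proposal is correct and is exactly the intended derivation: the paper states \Cref{cor:quantum-union-bound} as an immediate consequence of \Cref{lem:quantum-union-bound}, and your identification of the sequential-outcome probability with $\E_\rho[(\Pi_1^{x_1}\cdots\Pi_m^{x_m})(\Pi_1^{x_1}\cdots\Pi_m^{x_m})^\dagger]$ (via cyclicity of the trace), followed by the relabeling $\Pi_i^1 \leftarrow \Pi_i^{x_i}$ and the bound $\E_\rho[\Pi_i^{1-x_i}] \leq \eps$, is precisely the implicit argument.
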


\section{\texorpdfstring{$\chi^2$}{chi-square}-stable Threshold Reporting} \label{sec:exponential}

Our goal in this section is to prove \Cref{thm:claim41} and to show how
this classical result applies to quantum states and measurements. We
begin with some preparatory facts.

The following is well known~\cite{Ber95}:
\begin{proposition}\label{prop:variance}
    For $\bs{S}$ a random variable and $f : \RR \to \RR$  1-Lipschitz,  $\Var[f(\bs{S})] \leq \Var[\bs{S}]$.
\end{proposition}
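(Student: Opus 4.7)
The plan is to use the standard ``independent copy'' identity for variance, which converts the bound into a pointwise comparison where the Lipschitz hypothesis can be applied directly.

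First I would let $\bs{S}'$ be an independent copy of $\bs{S}$ and recall the identity
\[
    \Var[g(\bs{S})] = \tfrac{1}{2}\,\E\bigl[(g(\bs{S}) - g(\bs{S}'))^2\bigr]
\]
valid for any measurable $g$ such that the relevant expectations exist (this is immediate from expanding the square and using independence: $\E[(g(\bs S) - g(\bs S'))^2] = 2\,\E[g(\bs S)^2] - 2\,\E[g(\bs S)]\E[g(\bs S')] = 2\Var[g(\bs S)]$). Applying this identity both to $g = f$ and to $g = \mathrm{id}$ reduces the desired inequality $\Var[f(\bs S)] \le \Var[\bs S]$ (interpreting ``Lipschitz'' as $1$-Lipschitz, as the bound is clearly meant) to the pointwise inequality
\[
    (f(\bs{S}) - f(\bs{S}'))^2 \le (\bs{S} - \bs{S}')^2,
\]
which holds almost surely by the definition of a $1$-Lipschitz function. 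Taking expectations and multiplying by $\tfrac12$ finishes the proof.

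There is no real obstacle here: the only subtlety is noting that the statement is only meaningful (as written, with no Lipschitz constant on the right-hand side) when $f$ is $1$-Lipschitz, or more generally that for an $L$-Lipschitz $f$ one gets the factor $L^2$ and the assertion should be read with that implicit convention. The coupling argument is the cleanest route; one could alternatively invoke the Gaussian/Poincaré-style heat-semigroup machinery or absolute-continuity of Lipschitz functions together with $\Var[f(\bs S)] \le \E[f'(\bs S)^2] \Var[\bs S]$ in a suitable sense, but both are overkill compared to the two-line symmetrization above.
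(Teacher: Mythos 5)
Your proposal is correct and is essentially identical to the paper's proof: both use an independent copy $\bs{S}'$, the symmetrization identity $\Var[g(\bs{S})] = \tfrac12 \E[(g(\bs{S})-g(\bs{S}'))^2]$, and the pointwise $1$-Lipschitz bound before taking expectations. Your remark that ``Lipschitz'' must be read as $1$-Lipschitz matches the paper's intended usage (it is applied to $y \mapsto \min(1,y)$).
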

\begin{proof}
    Let $\bs{S}'$ be an independent copy of $\bs{S}$.
    Since the function $f$ is 1-Lipschitz, we always have $\tfrac12(f(\bs{S}) - f(\bs{S}'))^2 \leq \tfrac12(\bs{S} - \bs{S}')^2$.
    The result follows by taking expectations of both sides.
\end{proof}

We will also use the following simple numerical inequality:
\begin{lemma}\label{lem:inequality-1}
  Fix $0 \leq p \leq 1$,  $q = 1-p$. Then for $C = (e-1)^2 \leq 3$, we have
  \[
    \phantom{\quad \forall \lambda \in [0,1].}
        q + pe^{2 \lambda} \leq (1 + Cpq \lambda^2) \cdot (q + pe^{\lambda})^2
    \quad \forall \lambda \in [0,1].
  \]
\end{lemma}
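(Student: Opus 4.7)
The plan is to reduce the inequality to an elementary convexity fact via a clean algebraic identity. Expanding $(q+pe^\lambda)^2 = q^2 + 2pqe^\lambda + p^2 e^{2\lambda}$ and using $q = 1-p$, I compute
\[
    q + pe^{2\lambda} - (q+pe^\lambda)^2 = q(1-q) + p(1-p)e^{2\lambda} - 2pqe^\lambda = pq(e^\lambda - 1)^2.
\]
Thus the claimed bound
\[
    q + pe^{2\lambda} \le (1 + Cpq\lambda^2)(q + pe^\lambda)^2
\]
is equivalent to
\[
    pq(e^\lambda - 1)^2 \le Cpq\lambda^2 (q + pe^\lambda)^2.
\]

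If $pq = 0$ both sides are zero, so I can assume $pq > 0$ and divide. Taking square roots (everything is nonnegative), it suffices to show
\[
    e^\lambda - 1 \le \sqrt{C}\,\lambda (q + pe^\lambda) = (e-1)\lambda (q + pe^\lambda).
\]
Since $e^\lambda \ge 1$, we have $q + pe^\lambda \ge q + p = 1$, so it is in fact enough to prove $e^\lambda - 1 \le (e-1)\lambda$ for all $\lambda \in [0,1]$.

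This last inequality is a standard one-variable convexity argument: the function $g(\lambda) = e^\lambda - 1 - (e-1)\lambda$ satisfies $g(0) = g(1) = 0$ and $g''(\lambda) = e^\lambda > 0$, so $g$ is convex on $[0,1]$ and hence nonpositive there. Combining this chain of reductions yields the lemma. There is no real obstacle; the only thing to double-check is the constant $C = (e-1)^2 \le 3$, which is exactly what makes the square-root step line up with the endpoint value $g(1) = 0$.
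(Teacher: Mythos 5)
Your proof is correct and follows essentially the same route as the paper's: the same identity $(q+p\Lambda^2)-(q+p\Lambda)^2 = pq(\Lambda-1)^2$, the same use of $q+pe^\lambda \ge 1$ to discard that factor, and the same convexity bound $e^\lambda \le 1+(e-1)\lambda$ on $[0,1]$. The only difference is cosmetic ordering (the paper drops the factor $(q+pe^\lambda)^2\ge 1$ first and then applies the identity, while you apply the identity first and drop the factor after taking square roots).
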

\begin{proof}
    Since $(q+pe^{\lambda})^2 \geq (q+p)^2 = 1$ for $\lambda \geq 0$, it suffices to show
    \[
        q + pe^{2 \lambda} \leq (q + pe^{\lambda})^2 + Cpq \lambda^2 \quad \forall \lambda \in [0,1].
    \]
    But $(q + p\Lambda^2) - (q + p\Lambda)^2 = pq(\Lambda-1)^2$ when
    $p+q = 1$, so it is further equivalent to show
    \[
        (e^{\lambda} - 1)^2 \leq C\lambda^2 \quad \forall \lambda \in [0,1].
    \]
    But this indeed holds with $C = (e-1)^2$, as it is equivalent to $e^{\lambda} \leq 1 + (e-1) \lambda$ on $[0,1]$, which follows from convexity of $\lambda \mapsto e^{\lambda}$.
\end{proof}

We now do a simple calculation showing how much a random variable changes (in $\chi^2$-divergence) when conditioning on an event.
In using the below, the typical mindset is that~$B$ is an event that ``rarely'' occurs, so $\Pr[\overline{B}]$ is close to~$1$.
\begin{proposition} \label{prop:chi-squared}
  Let $\bs{S}$ be a discrete random variable, and let $B$ be an event on the same probability space with $\Pr[B] <
  1$. For each outcome~$s$ of~$\bs{S}$, define $f(s) = \Pr[B \mid \bs{S} = s]$. Then
  \[
    \dist[\chi^2]((\bs{S} \mid \ol{B}), \bs{S})
    = \Var[f(\bs{S})] \bigm/ \Pr[\ol{B}]^2.
  \]
\end{proposition}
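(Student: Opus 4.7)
The plan is to proceed by direct computation using Bayes' rule and the definition of $\chi^2$-divergence from \Cref{sec:prelims}. The statement is a purely elementary identity, so the only ``obstacle'' is matching conventions carefully; there is no nontrivial inequality.

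First I would write $p_s = \Pr[\bs{S} = s]$ and, using Bayes' rule, express the conditional distribution as
\[
    \Pr[\bs{S} = s \mid \ol{B}] = \frac{p_s \cdot \Pr[\ol{B} \mid \bs{S} = s]}{\Pr[\ol{B}]} = \frac{p_s (1 - f(s))}{\Pr[\ol{B}]}.
\]
Next, I would apply the definition of $\chi^2$-divergence from \Cref{sec:prelims}, with the reference distribution being $\bs{S}$:
\[
    \dist[\chi^2]((\bs{S} \mid \ol{B}), \bs{S}) = \sum_s p_s \paren*{1 - \frac{\Pr[\bs{S} = s \mid \ol{B}]}{p_s}}^2 = \sum_s p_s \paren*{1 - \frac{1 - f(s)}{\Pr[\ol{B}]}}^2.
\]

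The key observation then is the law of total probability, which gives $\E[f(\bs{S})] = \Pr[B]$ and hence $\Pr[\ol{B}] = 1 - \E[f(\bs{S})]$. This lets me simplify the inner quantity as
\[
    1 - \frac{1 - f(s)}{\Pr[\ol{B}]} = \frac{\Pr[\ol{B}] - (1 - f(s))}{\Pr[\ol{B}]} = \frac{f(s) - \E[f(\bs{S})]}{\Pr[\ol{B}]}.
\]
Finally, I would substitute back and pull the denominator out of the sum to get
\[
    \dist[\chi^2]((\bs{S} \mid \ol{B}), \bs{S}) = \frac{1}{\Pr[\ol{B}]^2} \sum_s p_s \paren*{f(s) - \E[f(\bs{S})]}^2 = \frac{\Var[f(\bs{S})]}{\Pr[\ol{B}]^2},
\]
recognizing the sum as the expectation (under $\bs{S}$) of the squared deviation of $f(\bs{S})$ from its mean, which by definition is $\Var[f(\bs{S})]$. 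This completes the proof; the only mild subtlety is ensuring the correct ordering of arguments in the $\chi^2$-divergence definition so that $\bs{S}$ plays the role of the reference measure in the sum.
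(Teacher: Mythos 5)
Your proof is correct and follows essentially the same route as the paper's: Bayes' rule gives the likelihood ratio $(1-f(s))/\Pr[\ol{B}]$, the definition of $\chi^2$-divergence with $\bs{S}$ as reference measure is applied, and the identity $\E[f(\bs{S})] = \Pr[B]$ turns the resulting expectation into $\Var[f(\bs{S})]/\Pr[\ol{B}]^2$. Nothing is missing.
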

\begin{proof}
  We have the likelihood ratio $\Pr[\bs S = s \mid \ol{B}]\bigm/\Pr[\bs S = s] = (1 - f(s))\bigm/\Pr[\ol{B}]$, by Bayes' theorem. Hence,
  \[
    \dist[\chi^2]((\bs{S} \mid \ol{B}), \bs{S})
    = \E \bracket*{\paren*{1 - \frac{1 - f(\bs S)}{\Pr[\ol{B}]}}^2}
    = \frac{1}{\Pr[\ol{B}]^2} \E \bracket*{(f(\bs S) - \Pr[B])^2}
    = \Var[f(\bs S)] \bigm/ \Pr[\ol{B}]^2,
  \]
  where the last step uses $\E[f(\bs S)] = \Pr[B]$.
\end{proof}

We can now prove \Cref{thm:claim41}, which we restate for convenience:

\renewcommand{\theorestated}{(Restated)\ }
\theostring*

\begin{proof}
    Write $\lambda = 1/\E[\bs{X}]$, so $\bs{X} \sim \Exponential(\lambda)$ and we have the assumptions $\lambda \leq \frac{1}{\sqrt{pqn}}$ and $\lambda \leq 1$.
    Using \Cref{prop:chi-squared} and $\Pr[\ol{B}] > \frac34$, it suffices to show
    \[
        \Var[f(\bs{S})] \lesssim \Pr[B]^2 \cdot pq n \lambda^2,
    \]
    where
    \[
        f(s) = \Pr[\bs{X} > \theta n - s] = \min(1, g(s)), \qquad g(s) =  \exp(-\lambda(\theta n - s)).
    \]
    Since $y \mapsto
    \min(1,y)$ is 1-Lipschitz, \Cref{prop:variance} tells us that
    $\Var[f(\bs{S})] \leq \Var[g(\bs{S})]$.
    $\Var[g(\bs{S})]$ can be computed using the moment-generating
    function of $\bs{S} \sim \Binomial(n,p)$, namely $\E[\exp(t \bs{S})]
    = (q + pe^t)^n$:
    \begin{align*}
        \E[g(\bs S)] &= \E[\exp(- \lambda(\theta n - \bs S))]
                            = \exp(- \lambda \theta n) \cdot (q + pe^{\lambda})^n, \\
        \E[g(\bs S)^2] &= \E[\exp(- 2 \lambda(\theta n - \bs S))]
                            = \exp(- 2 \lambda \theta n) \cdot (q + pe^{2 \lambda})^n.
    \end{align*}
    Thus
    \begin{align*}
        \Var[g(\bs S)] 
         = \E[g(\bs S)]^2 \cdot \paren*{\frac{\E[g(\bs S)^2]}{\E[g(\bs S)]^2} - 1}
    &=  \E[g(\bs S)]^2 \cdot \paren*{\paren*{\frac{q + pe^{2 \lambda}}{(q + pe^{\lambda})^2}}^n - 1} \\
    &\le\E[g(\bs S)]^2 \cdot \paren*{(1+3pq\lambda^2)^n - 1} \tag{\Cref{lem:inequality-1}}\\
    &\lesssim \E[g(\bs S)]^2 \cdot pqn \lambda^2 \tag{as $\lambda^2 \leq \frac{1}{pqn}$}
    \end{align*}
    and it therefore remains to establish
    \begin{equation}    \label[ineq]{eqn:bugfix}
        \E[g(\bs S)] = \exp(- \lambda \theta n) \cdot (q + pe^{\lambda})^n \lesssim \Pr[B].
    \end{equation}
    Intuitively this holds because $g(s)$ should not be much different from $f(s)$, and $\E[f(\bs S)] = \Pr[B]$ by definition.
    Formally, we consider two cases: $p \geq \frac1n$ (intuitively, the main case) and $p \leq \frac1n$.

    \paragraph{Case 1:} $p \geq \frac1n$.  In this case we use that $\Pr[\bs{S} > pn] \geq \frac14$ (see, e.g.,~\cite{Doe18}), and hence: (i)~it must be that $\theta \geq p$, since we are assuming $\Pr[B] = \Pr[\bs{S} + \bs{X} > \theta n] < \frac14$; and, (ii)~$\Pr[B] \geq \Pr[\bs{S} > pn] \cdot \Pr[\bs{X} \geq (\theta-p)n] \geq \frac14 \exp(-\lambda(\theta-p)n)$, where the first inequality used independence of $\bs{S}$ and~$\bs{X}$ and the second inequality used $(\theta -p)n \geq 0$ (by~(i)).
    Thus, to establish \Cref{eqn:bugfix}, it remains to show $\exp(-
    \lambda \theta n) \cdot (q + pe^{\lambda})^n \lesssim
    \exp(-\lambda(\theta-p)n)$.

    Since $0 < \lambda \le 1$,
    \begin{align*}
      e^\lambda - 1
      &= \sum_{i \ge 1} \frac{\lambda^i}{i!}
        = \lambda + \lambda^2 \sum_{i \ge 2} \frac{\lambda^{i-2}}{i!}
        \le \lambda + \lambda^2 \sum_{i \ge 2} \frac{1}{i!}
        \le \lambda + \lambda^2 e.
    \end{align*}
    By a similar argument,
    $e^{-\lambda} - 1 \le - \lambda + \lambda^2 e$. Using these two
    inequalities and $1 + x \le e^x$ for $x \in \RR$, we obtain
    \begin{align*}
      (q + p e^\lambda)^n
      &= (1 + p(e^\lambda - 1))^n
        \le \exp(p (e^\lambda - 1) n)
        \le \exp(\lambda p n) \exp(e \lambda^2 \cdot p \cdot n) \qquad \text{and} \\
      (q + p e^\lambda)^n
      &= \exp(\lambda n) (p + q e^{-\lambda})^n
        = \exp(\lambda n) (1 + q (e^{-\lambda} - 1))^n \\
      &\le \exp(\lambda n) \exp(q (e^{-\lambda} - 1) n)
        \le \exp(\lambda p n) \exp(e \lambda^2 \cdot q \cdot n).      
    \end{align*}
    Hence, $(q + p e^\lambda)^n \le \exp(\lambda p n) \exp(e
    \lambda^2 \cdot \min\set{p, q} \cdot n)$. Since, $\lambda^2 \le 1/pqn$, by
    assumption, it follows that $\lambda^2 \min\set{p, q}n \le
    1/\max\set{p, q} \le 2$, so
    \begin{align*}
      (q + p e^\lambda)^n
      &\le \exp(\lambda p n) \exp(e / \max\set{p, q})
        \le \exp(\lambda p n) \exp(2e).
    \end{align*}
    Therefore,
    $
    \exp(- \lambda \theta n) \cdot (q + pe^{\lambda})^n
    \lesssim \exp(- \lambda \theta n) \exp(\lambda p n)
    = \exp(- \lambda (\theta - p) n)
    $,
    as needed.
    
        

    \paragraph{Case 2:} $p \leq \frac1n$. Since $\lambda \in (0, 1]$, we
    have $e^\lambda \le 1 + 2 \lambda$. Hence,
    $q + pe^{\lambda}\leq 1 + 2p\lambda \leq 1 + \frac{2}{n}$, and so
    $(q+pe^{\lambda})^n \lesssim 1$, meaning that \Cref{eqn:bugfix}
    follows from
    $\Pr[B] \geq \Pr[\bs{X} > \theta n] = \exp(-\lambda \theta n)$.
\end{proof}

\subsection{The quantum version}
Having established~\Cref{thm:claim41}, we now show how this result
applies to quantum states and measurements. Specifically, we prove that
for any quantum event $A \in \CC^{d \times d}$, there exists a
corresponding event $B \in (\CC^{d \times d})^{\tensor n}$ which
exhibits the same statistics as the classical event
$\bs{S} + \bs{X} > \theta n$ from~\Cref{thm:claim41} with
$\bs{S} \sim \Binomial(n, \tr(\rho A))$ when $\rho^{\tensor n}$ is
measured according to $B$. Moreover, we also relate the fidelity between
the states $\rho^{\tensor n}$ and
$\restrict{\rho^{\tensor n}}_{\sqrt{\Id - B}}$ (i.e.\ the state
$\rho^{\tensor n}$ conditioned on the event $\Id - B$) to the
Bhattacharyya coefficient between $\bs{S}$ and
$(\bs{S} \mid \bs{S} + \bs{X} \le \theta n)$ (i.e.\ $\bs{S}$ conditioned
on the event $\bs{S} + \bs{X} \le \theta n$).

\begin{lemma}\label{lem:meas-claim41}
  Let $\rho \in \CC^{d \times d}$ represent an unknown quantum state and
  let $A \in \CC^{d \times d}$ be a projector. Let $n \in \NN$, let
  $\lambda > 0$, and let $\theta \in [0, 1]$ be an arbitrary
  threshold. Let $\bs{S}$ and $\bs{X}$ be classical random variables
  with distributions defined by $\bs{S} \sim \Binomial(n, \E_{\rho}[A])$
  and $\bs{X} \sim \Exponential(\lambda)$. There exists a quantum event
  $B \in (\CC^{d \times d})^{\tensor n}$ such that
  $\E_{\rho^{\tensor n}}[B] = \Pr[\bs{S} + \bs{X} > \theta n]$ and
  \begin{align*}
    \Fid \paren*{\rho^{\tensor n}, \restrict{\rho^{\tensor
    n}}_{\sqrt{\Id - B}}}
    &= \BC((\bs{S} \mid \bs{S} + \bs{X} \le \theta n), \bs{S}).
  \end{align*}
\end{lemma}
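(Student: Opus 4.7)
The plan is to construct $B$ directly on $(\CC^d)^{\otimes n}$ as a convex combination of orthogonal projectors, and then apply Proposition~\ref{prop:fidelity-bhattacharyya}. Setting $p_k := \Pr[\bs S = k] = \binom{n}{k}\E_\rho[A]^k(1-\E_\rho[A])^{n-k}$ and $f(k) := \Pr[\bs X > \theta n - k] = \min\{1, e^{-\lambda(\theta n - k)}\}$, I would find orthogonal projectors $Q_0, \ldots, Q_n$ on $(\CC^d)^{\otimes n}$ with $\sum_k Q_k = \Id$ and $\E_{\rho^{\otimes n}}[Q_k] = p_k$, and then set $B := \sum_{k=0}^n f(k) Q_k$. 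Since each $f(k) \in [0,1]$ and the $Q_k$'s are orthogonal projectors summing to $\Id$, $B$ is automatically a valid quantum event.

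Existence of the projectors $Q_k$ with these prescribed expectations is the main technical point. One route is a recursive construction: having chosen $Q_0, \ldots, Q_{k-1}$ with the right expectations, the restriction of $\rho^{\otimes n}$ to the orthogonal complement of their sum has total weight $p_k + \cdots + p_n \geq p_k$, so by a continuity/intermediate-value argument one can find a subspace of that complement whose projector has $\rho^{\otimes n}$-expectation exactly $p_k$. A cleaner alternative uses Naimark's theorem (Theorem~\ref{thm:naimark-extension}): the Naimark dilation of $A$ yields a projector $\Pi$ on $\CC^{2d}$, and Lemma~\ref{lem:amplification} applied to $\Pi$ produces orthogonal projectors $\Pi_0, \ldots, \Pi_n$ on $(\CC^{2d})^{\otimes n}$ with the binomial statistics on the dilated state $(\rho \otimes \ketbra{0}{0})^{\otimes n}$; these can then be transferred to $(\CC^d)^{\otimes n}$ using the fact that $\rho^{\otimes n}$ and its dilation have identical nonzero spectra.

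With $B$ in hand, the two properties follow almost immediately. For the expectation identity, $\E_{\rho^{\otimes n}}[B] = \sum_k f(k) p_k = \E[f(\bs S)] = \Pr[\bs S + \bs X > \theta n]$, as required. For the fidelity identity, observe that $\Id - B = \sum_k (1-f(k)) Q_k$ is a decomposition exactly of the form required by~(\ref{eqn:spectral})---orthogonal projectors summing to $\Id$ with coefficients in $[0,1]$---so Proposition~\ref{prop:fidelity-bhattacharyya} applies directly with $\cM = (Q_0, \ldots, Q_n)$, identifying $\Fid(\rho^{\otimes n}, \restrict{\rho^{\otimes n}}_{\sqrt{\Id - B}}) = \BC(p', q')$ where $p'_k = p_k$ and $q'_k = (1-f(k)) p_k / \E_{\rho^{\otimes n}}[\Id - B]$. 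A brief Bayes'-rule calculation shows $q'_k = \Pr[\bs S = k \mid \bs S + \bs X \leq \theta n]$, giving the claimed Bhattacharyya identity. The main obstacle I anticipate is the existence step for the projectors~$Q_k$; once that is in place, everything else is a direct application of Proposition~\ref{prop:fidelity-bhattacharyya}.
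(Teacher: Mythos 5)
Your final steps (defining $B = \sum_k f(k)\, Q_k$, computing $\E_{\rho^{\otimes n}}[B] = \E[f(\bs S)]$, and applying \Cref{prop:fidelity-bhattacharyya} together with the Bayes'-rule identification of the conditional distribution) match the paper and are correct once suitable $Q_k$'s exist. The gap is in how you obtain the orthogonal projectors $Q_k$. Your primary route --- a recursive intermediate-value argument producing, inside the orthogonal complement, a projector whose $\rho^{\otimes n}$-expectation is exactly $p_k$ --- is false in general, because projector expectations against a fixed mixed state are quantized. For example, take $d = 2$, $\rho = \Id/2$, and any event $A$ with $\E_\rho[A] = 0.3$ (necessarily not a projector): every projector on $(\CC^2)^{\otimes n}$ has $\rho^{\otimes n}$-expectation a multiple of $2^{-n}$, whereas $p_0 = 0.7^n$ is not, so no orthogonal resolution $(Q_0, \dots, Q_n)$ of the identity with the binomial expectations exists on the original space. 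Continuity only helps within a fixed rank, and the achievable expectations at each rank form an interval whose endpoints are partial eigenvalue sums; the target need not lie in any of these intervals.

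Your fallback route is essentially the paper's actual proof: use Naimark (\Cref{thm:naimark-extension}) to replace $A$ by a projector, and then the Hamming-weight projectors $E_k = \sum_{\abs{x}=k} A_{x_1} \tensor \dotsb \tensor A_{x_n}$ (exactly as in \Cref{lem:amplification}) automatically have binomial statistics --- no abstract existence argument is needed. But your final ``transfer'' of these projectors from $(\CC^{2d})^{\otimes n}$ back to $(\CC^d)^{\otimes n}$, justified by the dilated state having the same nonzero spectrum, does not work as stated: compressing the dilated projectors onto the support of $(\rho \tensor \ketbra{0}{0})^{\otimes n}$ just returns the non-projective POVM elements built from the original $A$ (Naimark compresses $\Pi$ back to $A$), so orthogonality and projectorhood are lost and \Cref{prop:fidelity-bhattacharyya} no longer applies; moreover the dilated state conditioned on $\ol{B}$ need not have the product form (state on $(\CC^d)^{\otimes n}$) $\tensor\, (\ketbra{0}{0})^{\otimes n}$, so the fidelity identity does not descend automatically either. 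The paper instead simply treats the Naimark step as a without-loss-of-generality replacement of $(\rho, A)$ by $(\rho \tensor \ketbra{0}{0}, \Pi)$ and works on the dilated space throughout (in the application to \BajanSearch the events have already been reduced to projectors), rather than attempting any transfer back to the original space.
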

\begin{proof}
  Let $\varrho = \rho^{\tensor n}$. 
  \rnote{After 0.5 seconds of thought, it wasn't immediately clear to me this is wlog, but I should probably put 30 seconds of thought into it.  Anyway, when we actually use this result we'll have reduced to projectors already, so it's cool.} Let $A_1 = A$ and $A_0 = \Id - A$. For all
  $x \in \Bin[n]$, let $A_x \in (\CC^{d \times d})^{\tensor n}$ denote
  the event defined by
  $A_x = A_{x_1} \tensor A_{x_2} \tensor \dotsb \tensor A_{x_n}$. For
  $k \in \set{0, \dotsc, n}$, let $E_k \in (\CC^{d \times d})^{\tensor n}$ be
  the event defined by
  \begin{align*}
    E_k
    &= \sum_{\substack{x \in \Bin[n] \\ \abs{x} = k}} A_x.
  \end{align*}
  Since $A$ is a projector, $A_x$ is also a projector and
  $A_x A_y = A_y A_x = 0$ for all $x, y \in \Bin[n]$ with $x \not=
  y$. Thus, each $E_k$ is a sum of orthogonal projectors, so $E_k$ is a
  projector as well and $E_k E_\ell = E_\ell E_k = 0$ for all
  $k, \ell \in \set{0, \dotsc, n}$ with $k \not= \ell$. Moreover,
  \begin{align*}
    \sum_{k=0}^n E_k
    &= \sum_{x \in \Bin[n]} A_x
      = \Id.
  \end{align*}
  Let $B \in (\CC^{d \times d})^{\tensor n}$ denote the quantum event
  defined by
  \begin{align*}
    B
    &= \sum_{k=0}^n \Pr[\bs{X} + k > \theta n] \cdot E_k.
  \end{align*}
  The statistics of the measurement $\set{E_k \mid k = 0, \dotsc, n}$
  applied to $\varrho$ follow a binomial distribution $\Binomial(n, \tr(\rho A))$, 
  so $\E_\varrho [E_k] = \Pr[\bs{S} = k]$. Hence,
  \begin{align*}
    \E_\varrho [B]
    &= \sum_{k=0}^n \Pr[\bs{X} + k > \theta n] \cdot \E_\varrho [E_k]
      = \sum_{k=0}^n \Pr[\bs{X} + k > \theta n] \cdot \Pr[\bs{S} = k]
      = \Pr[\bs{S} + \bs{X} > \theta n].
  \end{align*}
  For all $\ell \in \set{0, \dotsc, n}$,
  \begin{align*}
    \sqrt{\Id - B} \cdot E_\ell
    &= E_\ell \cdot \sqrt{\Id - B}
      = \sqrt{\Pr[\bs{X} + \ell \le \theta n]} \cdot E_\ell.
  \end{align*}
  Hence,
  \begin{align*}
    \tr \paren*{\restrict{\varrho}_{\sqrt{\Id - B}} \cdot E_\ell}
    &= \frac{1}{\E_\varrho [\ol{B}]} \cdot \tr(\sqrt{\Id - B} \cdot
      \varrho \cdot \sqrt{\Id - B} \cdot E_\ell) \\
    &= \frac{1}{\E_\varrho [\ol{B}]} \cdot \tr(E_\ell \cdot \sqrt{\Id - B} \cdot
      \varrho \cdot \sqrt{\Id - B} \cdot E_\ell) \\
    &= \frac{\Pr[\bs{X} + \ell \le \theta n]}{\E_\varrho [\ol{B}]} \cdot \tr(E_\ell \cdot \varrho \cdot E_\ell) \\
    &= \frac{\Pr[\bs{X} + \ell \le \theta n]}{\E_\varrho [\ol{B}]} \cdot
      \E_\varrho [E_\ell] \\
    &= \frac{\Pr[\bs{X} + \ell \le \theta n]}{\Pr[\bs{S} + \bs{X} \le
      \theta n]} \cdot
      \Pr[\bs{S} = \ell].
  \end{align*}
  Thus, the measurement $\set{E_k \mid k = 0, \dotsc, n}$ applied to
  $\restrict{\varrho}_{\sqrt{\Id - B}}$ yields statistics distributed
  as $(\bs{S} \mid \ol{B})$. Therefore,
  by~\Cref{prop:fidelity-bhattacharyya},
  \begin{align*}
    \Fid \paren*{\varrho, \restrict{\varrho}_{\sqrt{\Id - B}}}
    &= \sum_{k=0}^n \sqrt{\tr(\varrho \cdot E_k)} \sqrt{\tr
      \paren*{\restrict{\varrho}_{\sqrt{\Id - B}} \cdot E_k}}
      = \BC((\bs{S} \mid \bs{S} + \bs{X} \le \theta n), \bs{S}). \qedhere
  \end{align*}
\end{proof}

Using~\Cref{lem:meas-claim41}, we obtain the following ``quantum
version'' of~\Cref{thm:claim41}:

\begin{corollary}\label{cor:meas-claim41}
  Let $\rho \in \CC^{d \times d}$ represent an unknown quantum state and
  let $A \in \CC^{d \times d}$ be a projector. Let $n \in \NN$, let
  $\lambda > 0$, and let $\theta \in [0, 1]$ be an arbitrary
  threshold. Fix $p = \E_\rho [A]$ and let $\bs{S}$ and $\bs{X}$ be
  defined as in~\Cref{thm:claim41}. If $p$, $\lambda$, $n$, and $\theta$
  satisfy the conditions of~\Cref{thm:claim41}, then there exists a
  quantum event $B \in (\CC^{d \times d})^{\tensor n}$ such that
  $\E_{\rho^{\tensor n}}[B] = \Pr[\bs{S} + \bs{X} > \theta n]$ and
  \begin{align*}
    \dist[Bures] \paren*{\rho^{\tensor n}, \restrict{\rho^{\tensor
    n}}_{\sqrt{\Id - B}}}
    &\lesssim \E_{\rho^{\tensor n}}[B] \cdot \frac{\stddev[\bs
      S]}{\E[\bs X]}.
  \end{align*}
  Moreover,
  \begin{align*}
    \E_{\rho^{\tensor n}} [B]
    &\le \exp(- n \lambda (\theta - (e - 1) \E_\rho[A])).
  \end{align*}
\end{corollary}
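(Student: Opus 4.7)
The plan is to treat the three claims of the corollary separately, with the first being an immediate citation and the other two being short calculations.

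\textbf{First claim (existence of $B$ with the right expectation).} I would simply invoke \Cref{lem:meas-claim41}, which already constructs a quantum event $B \in (\CC^{d\times d})^{\otimes n}$ satisfying $\E_{\rho^{\otimes n}}[B] = \Pr[\bs S + \bs X > \theta n]$. Nothing new is needed here.

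\textbf{Second claim (Bures distance bound).} The key observation is that the Bures distance on the quantum side and the Hellinger distance on the classical side are defined by exactly parallel formulas: $\dist[\mathrm{Bures}^2](\rho,\sigma) = 2(1-\Fid(\rho,\sigma))$ and $\dist[\mathrm{H}^2](p,q) = 2(1-\BC(p,q))$. Combining this with the fidelity/Bhattacharyya identity from \Cref{lem:meas-claim41} yields
\[
    \dist[\mathrm{Bures}]\!\left(\rho^{\otimes n}, \restrict{\rho^{\otimes n}}_{\sqrt{\Id-B}}\right) = \dist[\mathrm{H}]\!\left((\bs S\mid \ol{B}),\,\bs S\right).
\]
Under the hypotheses of \Cref{thm:claim41}, \Cref{cor:claim41} then bounds the right-hand side by a constant times $\Pr[B]\cdot \stddev[\bs S]/\E[\bs X]$, which equals $\E_{\rho^{\otimes n}}[B]\cdot \stddev[\bs S]/\E[\bs X]$ by the first claim.

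\textbf{Third claim (exponential tail bound on $\E_{\rho^{\otimes n}}[B]$).} This is a Chernoff-style computation on the classical side. Conditioning on $\bs S$ and using that $\bs X \sim \Exponential(\lambda)$ satisfies $\Pr[\bs X > x] = \min(1,\exp(-\lambda x))$ for $x\in\RR$ (interpreting $\exp$ as $\ge 1$ when $x \le 0$), one gets
\[
    \Pr[\bs S + \bs X > \theta n] = \E_{\bs S}\!\left[\min\!\left(1,\exp(-\lambda(\theta n - \bs S))\right)\right] \le \exp(-\lambda\theta n)\cdot \E[\exp(\lambda \bs S)].
\]
Plugging in the binomial MGF $\E[\exp(\lambda \bs S)] = (q + p e^\lambda)^n$ and using $q + pe^\lambda \le \exp(p(e^\lambda - 1))$ followed by the inequality $e^\lambda - 1 \le (e-1)\lambda$ valid for $\lambda \in [0,1]$ (which holds since $\E[\bs X] \ge 1$ forces $\lambda \le 1$), one obtains the claimed $\exp(-n\lambda(\theta - (e-1)\E_\rho[A]))$.

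\textbf{Main obstacle.} I do not expect any serious obstacle: the heavy lifting was done in \Cref{thm:claim41} and \Cref{lem:meas-claim41}. The only mildly subtle point is recognizing the Bures–Hellinger parallel so that one can import \Cref{cor:claim41} verbatim, and remembering that the hypothesis $\E[\bs X]\ge 1$ is precisely what lets $e^\lambda-1$ be replaced by $(e-1)\lambda$ in the tail bound.
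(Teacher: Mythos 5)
Your proposal is correct and follows essentially the same route as the paper: it invokes \Cref{lem:meas-claim41} for the event $B$ and the fidelity--Bhattacharyya identity, converts Bures distance to classical Hellinger distance and imports \Cref{cor:claim41}, and proves the tail bound via the exponential tail and the binomial moment-generating function together with $e^{\lambda}-1 \leq (e-1)\lambda$ for $\lambda \leq 1$ (available since the conditions of \Cref{thm:claim41} force $\E[\bs{X}] \geq 1$). The only cosmetic difference from the paper's proof is the order in which the two elementary inequalities are applied in the final Chernoff-style step.
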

\begin{proof}
  Let $\varrho = \rho^{\tensor n}$. By~\Cref{lem:meas-claim41}, there
  exists a quantum event $B \in (\CC^{d \times d})^{\tensor n}$ such
  that $\E_\varrho [B] = \Pr[\bs{S} + \bs{X} > \theta n]$ and
  $\Fid \paren*{\varrho, \restrict{\varrho}_{\sqrt{\Id - B}}} =
  \BC((\bs{S} \mid \bs{S} + \bs{X} \le \theta n), \bs{S})$. Note that,
  for all distributions~$\mu$ and~$\nu$,
  $1 - \BC(\mu, \nu) \le \dist[\chi^2](\mu, \nu)$. Hence,
  by~\Cref{lem:meas-claim41} and~\Cref{cor:claim41}, it follows that
  \begin{align*}
    \dist[Bures] \paren*{\rho^{\tensor n}, \restrict{\rho^{\tensor
    n}}_{\sqrt{\Id - B}}}
    &= \sqrt{2\paren*{1 -  \Fid \paren*{\varrho, \restrict{\varrho}_{\sqrt{\Id
      - B}}}}} \\
    &= \sqrt{2(1 - \BC((\bs{S} \mid \bs{S} + \bs{X} \le \theta n), \bs{S}))} \\
    &= \dist[H]((\bs{S} \mid \bs{S} + \bs{X} \le \theta n), \bs{S}) \\
    &\le \sqrt{\dist[\chi^2]((\bs{S}
      \mid \bs{S} + \bs{X} \le \theta n), \bs{S})} \\
    &\lesssim \E_{\rho^{\tensor n}}[B] \cdot \frac{\stddev[\bs
      S]}{\E[\bs X]} 
  \end{align*}
  Since $\E_\varrho [B] = \Pr[\bs{S} + \bs{X} > \theta n]$,
  \begin{align*}
    \E_\varrho [B]
    &= \Pr[\bs{S} + \bs{X} > \theta n] \\
    &\le \E[\exp(- \lambda(\theta n - \bs{S}))]
    && \text{(by $\Pr[\bs{X} > t] \le \exp(- \lambda t)$)} \\
    &= \exp(- \lambda \theta n) \E[\exp(\lambda \bs{S})] \\
    &= \exp(- \lambda \theta n) (1 - p + p e^\lambda)^n
    && \text{($\E[\exp(\lambda \bs{S})]$ is the  m.g.f.\ of $\bs{S}$)} \\
    &= \exp(- \lambda \theta n) (1 + p (e^\lambda - 1))^n \\
    &\le \exp(- \lambda \theta n) (1 + p (e - 1)\lambda)^n
    && \text{(by $e^x \le 1 + (e - 1)x$ for $x \in [0, 1]$)} \\
    &\le \exp(- \lambda \theta n) \exp((e - 1) n \lambda p)
    && \text{(by $1 + x \le e^x$ for $x \in \RR$)} \\
    &= \exp(- n \lambda (\theta - (e - 1) p)).
    && \qedhere
  \end{align*}
\end{proof}

\section{\BajanSearch} \label{sec:gentle-search}

In this section, we prove~\Cref{thm:our-gentle-search}.

\subsection{Preliminary reductions} \label{sec:reductions}
We begin with several reductions that allow us to reduce to the case of projectors, and to the case when $\eps$, $\delta$, and the $\theta_i$'s are all fixed constants.

\paragraph{Reduction to projectors.}
Let $\rho \in \CC^{d \times d}$ denote the unknown quantum state and let
$A_1, \dotsc, A_m$ be the observables in the
quantum \BajanSearch problem (which we assume are given in an online fashion). 
If we
extend the unknown state $\rho$ to $\rho \tensor \ketbra{0}{0}$, then
by Naimark's \Cref{thm:naimark-extension}, there exists a projector
$\Pi_i \in \CC^{d \times d} \tensor \CC^{2 \times 2}$ for each $A_i$
such that $\E_{\rho \tensor \ketbra{0}{0}} [\Pi_i] = \E_\rho [A_i]$ for
all $i = 1, \dotsc, m$. Since the state $\rho \tensor \ketbra{0}{0}$ can
be prepared without knowing $\rho$ and this extension increases the
dimension of the quantum system only by a constant factor, by replacing
$\rho$ by $\rho \tensor \ketbra{0}{0}$ and each $A_i$ by the
corresponding $\Pi_i$, it follows that we can assume, without loss of
generality, that the observables $A_1, \dotsc, A_m$ are projectors.

\paragraph{Reduction to $3/4$ vs.~$1/4$.}
Let $0 < \eps < \half$ be given, and recall that in the \BajanSearch problem the algorithm is presented with a stream of projector/threshold pairs $(A_i, \theta_i)$, with the goal of distinguishing the cases $\E_{\rho}[A_i] > \theta_i$ and $\E_{\rho}[A_i] \leq \theta_i - \eps$.
We may have the algorithm use \Cref{lem:amplification} (the latter part, with $\tau = 0$, $c = \theta_i - \eps/2$, $\delta = 1/4$, and $\eps$ replaced by $\eps/2$), which establishes that for some $n_0 = O(1/\eps^2)$, each $A_i$ may be replaced with a projector $B_i \in (\CC^{d \times d})^{\tensor n_0}$
satisfying
\begin{enumerate}[label=\roman*.]
\item if $\E_\rho [A_i] > \theta_i$, then $\E_{\rho^{\tensor n_0}} [B_i] > 3/4$;
\item if $\E_\rho [A_i] \le \theta_i - \eps$, then $\E_{\rho^{\tensor n_0}} [B_i] \le 1/4$.
\end{enumerate}
Thus we can reduce to the ``$3/4$ vs.~$1/4$'' version of \BajanSearch at the expense of paying an extra factor of $n_0 = O(1/\eps^2)$ in the copy complexity.
Note that the parameter~$d$ has increased to~$d^{n_0}$, as well, but (crucially) our \Cref{thm:our-gentle-search} has no dependence on the dimension parameter.\rnote{This is a point worth making, right?}

\paragraph{Reduction to a promise-problem version, with fixed~$\delta$.}
So far we have reduced proving \Cref{thm:our-gentle-search} to proving the following:
\begin{theorem} \label{thm:bajansearch2}
  There is an algorithm that, given $m \in \NN$ and $0 < \delta < \half$, first obtains $n^* = O(\log^2 m + \log(1/\delta)) \cdot \log(1/\delta)$ copies $\rho^{\otimes n^*}$ of an unknown state $\rho \in \CC^{d \times d}$.
  Next, a sequence of projectors $A_1, \dots A_m \in \CC^{d \times d}$ is presented to the algorithm (possibly adaptively).
  After each~$A_t$, the algorithm may either \emph{select}~$t$, meaning halt and output the claim ``\,$\E_{\rho}[A_t] > 1/4$'', or else \emph{pass} to the next projector.
  If the algorithm passes on all~$m$ projectors, the algorithm must claim ``\,$\E_{\rho}[A_i] \leq 3/4$ for all~$i$''.
  Except with probability at most~$\delta$, the algorithm's output is correct.
\end{theorem}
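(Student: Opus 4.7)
The plan is to instantiate the $\chi^2$-stable threshold reporting machinery of~\Cref{cor:meas-claim41} in an online fashion. Fix $\theta = 1/2$, take $n = \Theta(\log^2(m/\delta))$ copies of $\rho$, and set the exponential noise rate to $\lambda = c/\sqrt{n}$ for a small absolute constant $c$. For each projector $A_t$ presented, the algorithm samples a fresh $\bs{X}_t \sim \Exponential(\lambda)$ and builds the quantum event $B_t \in (\CC^{d\times d})^{\tensor n}$ from~\Cref{cor:meas-claim41} corresponding to the classical test $\bs{S}_t + \bs{X}_t > \theta n$, where $\bs{S}_t \sim \Binomial(n, \E_\rho[A_t])$. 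It then performs the two-outcome measurement $(\ol{B_t}, B_t)$ on the (already partially measured) state $\rho^{\tensor n}$: on the first outcome equal to $B_t$ the algorithm halts and selects $t$; if every measurement yields $\ol{B_t}$, it outputs ``$\E_\rho[A_i] \le 3/4$ for all $i$''.

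There are two failure modes to bound. First, a \emph{false selection} at some index $i$ with $\E_\rho[A_i] \le 1/4$: the exponential-moment bound at the end of~\Cref{cor:meas-claim41} gives $\E_{\rho^{\tensor n}}[B_i] \le \exp(-n\lambda \cdot (1/2 - (e-1)/4))$, which can be driven below $\delta/(10m)$ by our choice of $n$ (the positive gap $1/2 - (e-1)/4 \approx 0.07$ suffices). Second, \emph{failing to select any index} when some $\E_\rho[A_j] > 3/4$: in that case $\bs{S}_j$ concentrates far above $\theta n = n/2$, so a Chernoff bound gives $\E_{\rho^{\tensor n}}[\ol{B_j}] \le \delta/10$. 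Were the $m$ measurements performed on independent fresh copies of $\rho^{\tensor n}$, a union bound would immediately yield a total failure probability of $\le \delta/2$.

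The heart of the argument, and the place where~\Cref{cor:meas-claim41} is essential, is transferring this independent-copies analysis to the actual sequential setting. The per-step damage is controlled in Bures distance by $\E_{\rho^{\tensor n}}[B_t] \cdot \stddev[\bs{S}_t]/\E[\bs{X}_t]$, and with $\lambda \asymp 1/\sqrt{n}$ the ratio $\stddev[\bs{S}_t]/\E[\bs{X}_t]$ is $O(1)$. Invoking the Damage Lemma (\Cref{lem:damage}) on the prefix of ``pass'' operations before any trigger, one sees that real-world probabilities of each ``pass'' or ``select'' outcome differ from their ideal counterparts by only $O\paren*{\sum_t \E_{\rho^{\tensor n}}[B_t]}$, which is $O(\delta)$ in the soundness case and remains controlled in the completeness case, since damage accumulates only up until the first index at which we trigger with constant probability.

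The trickiest point I anticipate is the clean handling of the completeness side: pre-$j$ events with $\E_\rho[A_i] \in (1/4, 3/4)$ neither satisfy the sharp soundness bound nor are guaranteed to trigger, so one must argue that triggering on any such $i$ is itself already a success. The natural formalism is to let $j$ be the first index with $\E_\rho[A_j] > 3/4$, let $\mathcal{E}$ be the event of triggering at some $i \le j$, and use the Damage Lemma to transfer the ideal lower bound $\Pr[\mathcal{E}] \ge 1 - \E_{\rho^{\tensor n}}[\ol{B_j}] \ge 1 - \delta/10$ into the sequential world. Finally, the outer $\log(1/\delta)$ factor in the target bound $n^* = (\log^2 m + \log(1/\delta))\log(1/\delta)$ is loose enough to subsume the direct choice $n \asymp \log^2(m/\delta)$ used above, and so no separate amplification step is needed; if tighter constants are desired, one could alternatively run $O(\log(1/\delta))$ independent copies of a constant-error version and take a consistent vote.
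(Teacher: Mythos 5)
There is a genuine gap, and it is exactly at the step you flag as ``the heart of the argument'': the transfer from fresh-copy statistics to the sequential world via \Cref{lem:damage} does not give an $O(\delta)$ error in a single run. With $\lambda \asymp 1/\sqrt{n}$ the per-pass damage is $\Theta(p_t)$ (times the constant $\stddev[\bs S_t]/\E[\bs X]$), and in the promise-free setting the adversary may adaptively interleave ``middle'' events with $\E_\rho[A_t] \in (1/4,3/4)$, for which $p_t = \E_{\rho^{\otimes n}}[B_t]$ is a constant. Such events are not forced to trigger (each is passed with constant probability), yet each pass can move the state by constant trace distance, so after a constant number of them the accumulated damage bound in \Cref{lem:damage} is vacuous. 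At that point your union bound over bad indices, which uses the fresh-state probabilities $p_i \le \delta/(10m)$, no longer controls the real-world probability of a false selection: the damaged state can trigger a bad $B_i$ with constant probability, and the total false-selection probability is a constant, not $O(\delta)$. The same issue afflicts completeness: stopping the damage accounting ``at the first index where we trigger with constant probability'' (the minimal-$t$ device) only keeps the cumulative correction at a small \emph{constant}, so a single run can only guarantee constant success, not $1-\delta$; and you cannot repair this by shrinking the noise rate, since making $\stddev[\bs S]/\E[\bs X] = O(\delta)$ forces $n = \Omega(\log^2(m/\delta)/\delta^2)$, far worse than the stated bound. Hence your closing remark that ``no separate amplification step is needed'' is incorrect.

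The paper's proof is structured around precisely these two obstacles. It first proves only a weak guarantee (\Cref{lem:bajansearch3}): under the promise that some $\E_\rho[A_j] > 3/4$, a single run with $O(\log^2 m)$ copies selects a reasonably good index with probability merely $0.1$, and the analysis is deliberately confined to the prefix up to the minimal $t$ where the cumulative fresh trigger probability becomes constant, so that $\sum_{i<t} p_i \le 1/4$ keeps \Cref{lem:damage} meaningful. Then, to get \Cref{thm:bajansearch2}, it (i) adds a ``failsafe'': before any selection, the tentative index is re-verified on $O(\log(1/\delta'))$ holdout copies via \Cref{lem:amplification}, which caps the false-selection probability per run at $\delta'$ \emph{regardless of how damaged the main register is} and also handles the no-promise case; and (ii) amplifies completeness by running $L = O(\log(1/\delta))$ independent batches in parallel and selecting if any batch does. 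Your proposal is missing both the verification step and the amplification, and without them the claimed $\delta$-correctness does not follow from \Cref{cor:meas-claim41} and \Cref{lem:damage}.
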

The main work we will do is to show the following similar result:
\begin{lemma} \label{lem:bajansearch3}
  There is an algorithm that, given $m  \in \NN$, first obtains $n = O(\log^2 m)$ copies $\rho^{\otimes n}$ of an unknown state $\rho \in \CC^{d \times d}$.
  Next, a sequence of projectors $A_1, \dots A_m \in \CC^{d \times d}$, obeying the promise that $\E_{\rho}[A_j] > 3/4$ for at least one~$j$, is presented to the algorithm.
  After each~$A_t$, the algorithm may either halt and \emph{select}~$t$, or else \emph{pass} to the next projector.
  With probability at least~$0.01$, the algorithm selects a~$t$ with $\E_{\rho}[A_t] \geq 1/3$.
\end{lemma}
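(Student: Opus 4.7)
The plan is to combine \Cref{cor:meas-claim41} with \Cref{lem:damage}. I would obtain $\rho^{\otimes n}$ with $n = C\log^2 m$ for a sufficiently large universal constant~$C$, and process each projector $A_t$ online by measuring the quantum event $B_t$ supplied by \Cref{cor:meas-claim41}, which encodes the classical test ``$\bs{S}_t + \bs{X}_t > \theta n$'' with $\bs{S}_t \sim \Binomial(n, \E_\rho[A_t])$ and $\bs{X}_t \sim \Exponential(\lambda)$. I would set $\lambda = 2/\sqrt{n}$ so that $\E[\bs{X}_t] = \sqrt{n}/2 \ge \stddev[\bs{S}_t]$ (the precondition of \Cref{thm:claim41}), and pick threshold $\theta = 0.6$. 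The algorithm measures $(\Id - B_t, B_t)$ sequentially, halting and outputting~$t$ the first time $B_t$ fires.

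The choice $\theta = 0.6$ sits strictly above $(e-1)/3 \approx 0.573$, creating a positive gap that makes the Chernoff-type upper bound in \Cref{cor:meas-claim41} nontrivial: for any $t$ with $\E_\rho[A_t] < 1/3$, the ``ideal-world'' firing probability on a fresh copy of $\rho^{\otimes n}$ satisfies
\[
  \E_{\rho^{\otimes n}}[B_t] \le \exp\bigl(-n\lambda(\theta - (e-1)/3)\bigr) = \exp(-\Omega(\sqrt{n})) \le m^{-2}
\]
once $C$ is large enough. On the other hand, the promised index $j_0$, with $\E_\rho[A_{j_0}] > 3/4$, satisfies $\E_{\rho^{\otimes n}}[B_{j_0}] \ge \Pr[\bs{S}_{j_0} > \theta n] \ge 1 - e^{-\Omega(n)} \ge 1 - m^{-2}$ by a routine Chernoff bound on the binomial alone (since $\theta < 3/4$).

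To convert the ideal analysis into a bound on the actual sequential algorithm, I would invoke \Cref{lem:damage} with operations $S_t = \sqrt{\Id - B_t} \cdot \sqrt{\Id - B_t}$ for $t < j_0$ (``pass'') and $S_{j_0} = \sqrt{B_{j_0}} \cdot \sqrt{B_{j_0}}$ (``fire'') applied to $\rho^{\otimes n}$, taking $E = [j_0]$. Then $q_E$ is precisely the actual probability that the algorithm selects the good index $j_0$, and the ideal bounds give $p_E \ge (1 - m^{-1})(1 - m^{-2}) \ge 1 - 2/m$. For each $t < j_0$, \Cref{cor:meas-claim41} bounds the damage
\[
  \eps_t = \dist[\tr]\bigl(\restrict{\rho^{\otimes n}}_{\sqrt{\Id - B_t}},\, \rho^{\otimes n}\bigr) \lesssim \E[B_t] \cdot \frac{\stddev[\bs{S}_t]}{\E[\bs{X}_t]} \lesssim m^{-2},
\]
so $\sum_{t < j_0}\eps_t = O(1/m)$; using $q_\emptyset = 1$, \Cref{lem:damage} yields $|q_E - p_E| = O(1/m)$. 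Hence $\Pr[\text{algorithm selects } j_0] \ge 1 - O(1/m) \ge 0.1$ for $m$ above a small absolute constant, and the small-$m$ regime is handled by amplified naive estimation via \Cref{lem:amplification}, using only $O(\log m)$ copies.

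The principal obstacle is bookkeeping against the factor $(e-1)$ in the Chernoff-style bound of \Cref{cor:meas-claim41}: it forces $\theta \in ((e-1)/3, 3/4)$, leaving only a narrow window for the threshold and pinning down the universal constant $C$ in $n = C\log^2 m$. A subtler issue that must be dispatched with care is the presence of ``ambiguous'' $t < j_0$ with $\E_\rho[A_t] \in (1/3, 3/4)$: for such $t$, the Corollary's precondition $\Pr[B_t] < 1/4$ may fail and the damage bound above does not apply directly. The cleanest fix is to observe that any early firing at such a $t$ is already a ``win'' (since $\E_\rho[A_t] > 1/3$), and to reorganize the trajectory analysis around the first index where $\Pr[B_t]$ becomes appreciable; one then combines the Damage Lemma bound on the strictly-bad prefix with the observation that, conditioned on surviving that prefix, the algorithm almost surely fires on some good index.
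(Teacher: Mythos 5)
Your main-line argument only covers the special case in which every index preceding the promised $j_0$ is strictly bad ($\E_\rho[A_t]<1/3$); the ``ambiguous'' indices with $\E_\rho[A_t]\in[1/3,3/4]$, which you defer to the last paragraph, are precisely the crux of the problem, and the fix you sketch does not work as stated. First, cutting the trajectory at ``the first index where $\Pr[B_t]$ becomes appreciable'' fails to control the damage: there may be many ambiguous indices whose individual ideal firing probabilities are small (so none is ``appreciable'') but whose \emph{sum} is huge, and the Damage Lemma bound is proportional to $\sum_t p_t\cdot\stddev[\bs S_t]/\E[\bs X]$, so the prefix you survive may leave the state far from $\rho^{\otimes n}$ and your estimate $p_E\ge 1-2/m$, $|q_E-p_E|=O(1/m)$ collapses. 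The cut has to be made where the \emph{cumulative} ideal firing probability first becomes a constant --- this is exactly what the paper does in \Cref{claim:there-is-t}, choosing the minimal $t$ with $(1-p_1)\dotsm(1-p_t)\le e^{-1/4}$, so that $p_1+\dotsb+p_{t-1}\le 1/4$ bounds both the total damage and guarantees each $p_i\le 1/4$ (the precondition of \Cref{thm:claim41}/\Cref{cor:meas-claim41}). Second, your claim that ``conditioned on surviving that prefix, the algorithm almost surely fires on some good index'' is not provable: once the cumulative firing probability in the prefix is allowed to reach a constant, the Damage Lemma only yields that the \emph{actual} process fires by time $t$ with constant probability ($q_t\le 4/5$ in the paper), not almost surely --- which is exactly why the lemma promises only $0.1$ rather than $1-O(1/m)$. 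Moreover, after this reorganization you still must argue that the damaged process rarely fires at a strictly-bad index \emph{within} the prefix; the ideal-world bound $p_i\le m^{-2}$ alone does not give this, and the paper needs a second Damage Lemma application with $E=\cB$ and the conditioning factor $q_{[t-1]\setminus\cB}\ge 3/4$ (\Cref{claim:no-bad-guys}).

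A secondary but real issue is your parameter choice $\lambda=2/\sqrt n$, i.e.\ $\E[\bs X]=\sqrt n/2$: then $\stddev[\bs S_t]/\E[\bs X]$ can be as large as $1$, so even over a correctly chosen prefix whose cumulative firing probability is $\approx 1/4$ the damage bound is only $O(1)$, not a small constant. The argument needs $\E[\bs X]=\Omega(\sqrt n)$ with a sufficiently \emph{large} constant (as in \Cref{claim:there-is-t}), which your barely-feasible choice does not provide once ambiguous indices are in play. The Chernoff-style calculations for strictly-bad indices and for $j_0$, and the use of the improved Damage Lemma (no damage term from the firing step itself), are fine; the gap is the missing cumulative-prefix analysis.
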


One needs a slight bit of care to reduce  \Cref{thm:bajansearch2} to \Cref{lem:bajansearch3} while maintaining the online nature of the algorithm:
\begin{proof}[Proof of Theorem \ref{thm:bajansearch2}, assuming Lemma \ref{lem:bajansearch3}.]
    The algorithm in \Cref{lem:bajansearch3} will be used as a kind of ``subroutine'' for the main theorem.
    Our first step is to augment this subroutine in the following way:
    \begin{itemize}
            \item Given parameter~$\delta$ for the main theorem, the subroutine will use a parameter $\delta' = \delta/(C\log(1/\delta))$, where $C$ is a universal constant to be chosen later.
            \item $n$ is increased from $O(\log^2 m)$ to $n' = O(\log^2 m) + O(\log(1/\delta'))$, where the first $O(\log^2 m)$ copies of~$\rho$ are used as usual, and the additional $O(\log(1/\delta'))$ copies are reserved as a ``holdout''.
            \item If ever the subroutine is about to halt and select~$t$, it first performs a ``failsafe'' check:  It applies \Cref{lem:amplification} with $\tau = 0$, $c = .3$, $\eps = .03$, $\delta = \delta'$, and measures with the holdout copies.
                (Note that $c + \eps < 1/3$ and also $c - \eps > 1/4$.)
                If event ``$B$'' as defined in \Cref{lem:meas-claim41} occurs, the subroutine goes ahead and selects~$t$; otherwise, the algorithm not only passes, but it ``aborts'', meaning that it automatically passes on all subsequent~$A_i$'s without considering them.
    \end{itemize}
    We make two observations about this augmented subroutine:
    \begin{itemize}
        \item When run under the promise that $\E_{\rho}[A_j] > 3/4$ for at least one~$j$, it still selects a~$t$ satisfying $\E_{\rho}[A_t] \geq 1/3$ with probability at least~$0.005$.
            This is because the ``failsafe'' causes an erroneous change of mind with probability at most~$\delta'$, and we may assume $\delta' \leq 0.005$ (taking~$C$ large enough).
        \item When run \emph{without} the promise that $\E_{\rho}[A_j] > 3/4$ for at least one~$j$, the failsafe implies that the probability the algorithm ever selects a~$t$ with $\E_{\rho}[A_t] < 1/4$ is at most~$\delta'$.
    \end{itemize}
    With the augmented subroutine in hand, we can now give the algorithm
    that achieves \Cref{thm:bajansearch2}.  The algorithm will obtain
    $n^* = n' \cdot L$ copies of~$\rho$, where $L = O(\log(1/\delta))$;
    these are thought of as $L$ ``batches'', each with of $n'$~copies.
    As the projectors~$A_i$ are presented to the algorithm, it will run
    the augmented subroutine ``in parallel'' on each batch.  If any
    batch wants to halt accept a certain~$A_t$, then the overall
    algorithm halts and outputs ``\,$\E_\rho[A_t] > 1/4$''.  Otherwise,
    if all the batches pass on~$A_t$, so too does the overall algorithm.
    Of course, if the overall algorithm passes on all~$A_i$'s, it
    outputs ``\,$\E_\rho[A_i] \leq 3/4$ for all~$i$''.

    We now verify the correctness of this algorithm.  First, \emph{if}
    there exists some $A_j$ with $\E_\rho[A_j] > 3/4$, the probability
    of the algorithm wrongly outputting ``\,$\E_\rho[A_i] \leq 3/4$ for
    all~$i$'' is at most $(1-.005)^L$, which can be made smaller
    than~$\delta$ by taking the hidden constant in
    $L = O(\log(1/\delta))$ suitably large.  On the other hand, thanks
    to the ``failsafe'' and a union bound, the probability the algorithm
    ever wrongly outputs ``\,$\E_\rho[A_t] > 1/4$'' is at
    most~$L \delta' = L \cdot \delta/(C\log(1/\delta))$, which is again
    at most~$\delta$ provided~$C$ is taken large enough.
\end{proof}

\subsection{The main algorithm (proof of Lemma \ref{lem:bajansearch3})} 

In this section, we will prove \Cref{lem:bajansearch3}. Let $n = n(m)$
and $\lambda = \lambda(m)$ be parameters to be fixed later and let
$\theta = 2/3$.
As stated in \Cref{lem:bajansearch3}, we may explicitly assume there exists
$i \in [m]$ with $\E_\rho[A_i] \ge 3/4$.  For each projector $A_i$, let
$B_i$ denote the event obtained from \Cref{lem:meas-claim41}. The
algorithm proceeds as follows:

\begin{framed}
  Let $\varrho$ denote the current quantum state, with $\varrho = \rho^{\tensor n}$ initially.
  Given projector~$A_i$, let $B_i$ be the event obtained from \Cref{lem:meas-claim41}.
  Measure the current state $\varrho$ with $(\ol{B}_i, B_i)$ using the canonical implementation.
  If~$B_i$ occurs, halt and select~$i$; otherwise, pass.
\end{framed}
\noindent Note that the $n$ copies of $\rho$ are only prepared once and reused,
and that the current state~$\varrho$ collapses to a new state after each measurement.

The algorithm has the following modes of failure:
\begin{enumerate}[label=(\alph*)]
\item[(FN)] the algorithm passes on every observable because the event
  $\ol{B}_i$ occurs for every $i \in [m]$;
\item[(FP)] the algorithm picks an observable $A_j$ with
  $\E_\rho [A_j] < 1/3$.
\end{enumerate}

We want to show that the algorithm does not make errors of type FP or FN
with probability at least $0.1$. To this end, we introduce the following
notation.
\begin{notation}
  For $i = 1, \dotsc, m$, let:
  \begin{enumerate}
  \item $\bs{S}_i$ be a random variable distributed as
    $\Binomial(n, \E_\rho [A_i])$;
  \item $p_i = \E_{\rho^{\tensor n}} [B_i]$ be the probability that
    $B_i$ would occur if $\rho^{\tensor n}$ were measured with
    $(\ol{B}_i, B_i)$;
  \item $\varrho_0 = \rho^{\tensor n}$ and let $\varrho_i$ be the
    quantum state after the $i$th measurement, 
    \emph{conditioned} on the event $\ol{B}_j$ occurring for all $1 \le j \le i$;
  \item $r_i = \E_{\varrho_{i-1}}[\ol{B}_i]$ be the probability that the
    event $\ol{B}_i$ occurs assuming all the events $\ol{B}_j$ with $1 \le j
    \le i - 1$ occurred;
  \item $q_i = r_1 \dotsm r_i$ be the probability that \emph{all} of the
    events $\ol{B}_j$ with $1 \le j \le i$ occur;
  \item $s_i = q_{i-1} \cdot \E_{\varrho_{i-1}}[B_i]$ be the probability
    of observing outcomes $\ol{B}_1, \dotsc, \ol{B}_{i-1}, B_i$.
  \end{enumerate}
\end{notation}

Note that the $p_i$'s refer to a ``hypothetical,'' whereas the $r_i$'s,
$q_i$'s, and $s_i$'s concern what actually happens over the course of the
algorithm.  In particular, $q_m$ is the probability that the algorithm
passes on every observable
. 
The following claim shows that, as long as the noise expectation
$\E[\bs{X}] = 1/\lambda$ used in \Cref{lem:meas-claim41} is sufficiently
large, the probability of a false negative (FN) is bounded above
by~$4/5$:
\begin{claim}\label{claim:there-is-t}
  For $\E[\bs X] = \Omega(\sqrt{n})$, there exists $t \in [m]$ such that
  $q_t \le 4/5$. Moreover, if $t > 1$, then $q_{t-1} \ge 3/4$ and
  $p_1 + \dotsb + p_{t-1} \le 1/4$.
\end{claim}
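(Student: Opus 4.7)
The plan is to take $t$ to be the smallest index in $[m]$ satisfying $q_t \leq 4/5$ and to prove both that such a $t$ exists in $[m]$ and that (for $t > 1$) it meets the two ``moreover'' inequalities. The first, $q_{t-1} \geq 3/4$, is immediate from minimality: one has $q_{t-1} > 4/5 \geq 3/4$. So the real work is the existence of $t$ and the bound $\sum_{i<t} p_i \leq 1/4$.

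The key ingredient is the Damage Lemma (\Cref{lem:damage}) applied to the sub-normalized quantum operations $S_i(\varrho) = \sqrt{\bar B_i}\,\varrho\,\sqrt{\bar B_i}$ corresponding to canonically measuring $(\bar B_i, B_i)$ and post-selecting the pass outcome; note $\tr(S_i(\rho^{\otimes n})) = 1 - p_i$. By \Cref{cor:meas-claim41}, the per-step trace-distance damage $\eps_i := \dist[\tr](S_i(\rho^{\otimes n})/(1 - p_i),\ \rho^{\otimes n})$ is at most the Bures distance $\lesssim p_i \cdot \stddev[\bs{S}_i]/\E[\bs X] \lesssim p_i$. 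Strengthening the hypothesis $\E[\bs X] = \Omega(\sqrt n)$ to $\E[\bs X] \geq c\sqrt n$ for a sufficiently large absolute constant $c$ makes $\eps_i \leq p_i/c$ with $1/c$ as small as desired. Applying the Damage Lemma with $E$ equal to the ``all-pass'' event on the first $s$ measurements (for which the denominator $q_{[s-1]\setminus E}$ collapses to $1$) then yields
\[
\left|\,q_s - \prod\nolimits_{i\leq s}(1 - p_i)\,\right| \;\leq\; (O(1)/c)\sum\nolimits_{i\leq s} p_i.
\]

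For the sum bound, specialize to $s = t-1$. Combining the displayed estimate with $\prod_{i<t}(1-p_i) \leq \exp(-\sum_{i<t} p_i)$ and $q_{t-1} > 4/5$, and rearranging, one obtains $\exp(-\sum_{i<t} p_i) \geq 4/5 - (O(1)/c)\sum_{i<t} p_i$; for $c$ large enough, the only self-consistent solution has $\sum_{i<t} p_i$ within an arbitrarily small amount of $\log(5/4) < 1/4$. For existence of $t$, suppose toward contradiction that $q_t > 4/5$ for all $t \in [m]$; then the same estimate at $s = m$ yields $\sum_{i\leq m} p_i \leq 1/4$. On the other hand, by the promise, some $j^* \in [m]$ satisfies $\E_\rho[A_{j^*}] \geq 3/4$, and a Chernoff bound on $\bs{S}_{j^*} \sim \Binomial(n, 3/4)$ at threshold $\theta n = 2n/3$ forces $p_{j^*} \geq \Pr[\bs{S}_{j^*} > 2n/3] \geq 1 - e^{-\Omega(n)}$, which exceeds $1/4$ already once $n$ passes a modest absolute constant (certainly for $n = \Omega(\log^2 m)$). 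This contradicts $\sum p_i \leq 1/4$, finishing the existence proof.

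The main obstacle is the self-referential shape of the Damage-Lemma estimate: the error is itself controlled by $\sum p_i$, which is exactly the quantity to be bounded. This is handled by choosing the hidden constant in $\E[\bs X] = \Omega(\sqrt n)$ large enough that the error coefficient $O(1)/c$ is much smaller than $1$, so the induced inequality $\sum_{i<t} p_i \lesssim \log(5/4)$ is simultaneously self-consistent and strictly below $1/4$.
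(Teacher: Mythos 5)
Your route genuinely differs from the paper's: you define $t$ as the first index with $q_t \le 4/5$ (so $q_{t-1} \ge 3/4$ is free from minimality) and then try to pull the bound on $\sum_{i<t}p_i$ back through the Damage Lemma, whereas the paper defines $t$ as the minimal index with $(1-p_1)\dotsm(1-p_t)\le e^{-1/4}$, so that both the existence of $t$ and the bound $p_1+\dotsb+p_{t-1}\le 1/4$ are statements about the hypothetical $p_i$'s alone (no circularity), and \Cref{lem:damage} together with \Cref{cor:meas-claim41} is invoked only afterwards to convert these products into $q_t\le 4/5$ and $q_{t-1}\ge 3/4$. The step where you resolve the circularity has a genuine gap: from $e^{-x}\ge 4/5-(O(1)/c)\,x$ with $x=\sum_{i<t}p_i$ you conclude that $x$ must lie within $o(1)$ of $\log(5/4)$, but this inequality is also satisfied on a second branch where $x \ge \Omega(c)$, because once $(O(1)/c)\,x\ge 4/5$ the right-hand side is nonpositive and the inequality is vacuous. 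Since a priori $x$ can be as large as $m$ while $c$ is an absolute constant, the large branch is not excluded --- and that is exactly the regime in which the Damage-Lemma error is too large to be useful. The same unjustified step underlies your existence argument (``the same estimate at $s=m$ yields $\sum_{i\le m}p_i\le 1/4$'' does not follow), so neither part of the claim is actually established as written. A secondary issue: \Cref{cor:meas-claim41} (inheriting the hypothesis of \Cref{thm:claim41}) requires $\Pr[B_i]<\tfrac14$, so your per-step bound $\eps_i\lesssim p_i/c$ is not justified for indices with $p_i\ge 1/4$, which you cannot rule out at the point where you use it; the paper's ordering (sum bound first, damage bound second) sidesteps this.

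The gap is repairable within your framework, but it needs an extra idea you did not supply: the partial sums $x_s=\sum_{i\le s}p_i$ increase by at most $1$ per step, the inequality $e^{-x_s}\ge 4/5-(O(1)/c)\,x_s$ holds for every $s<t$ (all those $q_s$ exceed $4/5$), and for $c$ large the two solution branches are separated by a gap of width greater than $1$; hence every $x_s$ with $s<t$ must remain on the lower branch, giving $\sum_{i<t}p_i\le \log(5/4)+o(1)<1/4$ (and keeping each relevant $p_i<1/4$ so the per-step damage bound applies), and existence follows because $x_m\ge p_{j^*}\ge 1-e^{-\Omega(n)}$ is incompatible with staying on the lower branch, forcing some $q_s\le 4/5$. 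Alternatively, adopt the paper's definition of $t$ via the products $(1-p_1)\dotsm(1-p_t)$, which makes the whole issue disappear. As submitted, however, the ``only self-consistent solution'' assertion is false, and it is the crux of both the existence claim and the sum bound.
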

\begin{proof}
  By~\Cref{lem:meas-claim41},
  $p_i = \E_{\rho^{\tensor n}}[B_i] = \Pr[\bs{S}_i + \bs{X} > \theta
  n]$. Let $k \in [m]$ be such that $\E_\rho [A_k] \ge 3/4$. Thus,
  $\bs{S}_k$ is a binomial random variable with mean at least
  $3/4$. Since $\theta = 2/3 < 3/4$, if $n$ is taken to be a
  sufficiently large constant,
\begin{align*}
  p_k
  &= \Pr[\bs{S}_k + \bs{X} > \theta n]
    \ge \Pr[\bs{S}_k > (2/3) n]
    \ge 1 - \exp(-1/4).
\end{align*}
Therefore, there exists a minimal $t \in [m]$ such that
$(1 - p_1) \dotsm (1 - p_t) \le \exp(-1/4)$. If $t = 1$, then
$q_1 = 1 - p_1 \le \exp(-1/4) \le 4/5$. Otherwise, since $t$ is minimal,
it follows that $(1 - p_1) \dotsm (1 - p_{t-1}) \ge \exp(-1/4)$. Hence,
\begin{align*}
  \exp(-1/4)
  &\le (1 - p_1) \dotsm (1 - p_{t-1})
    \le \exp(- (p_1 + \dotsb + p_{t-1})),
\end{align*}
whence $p_1 + \dotsb + p_{t-1} \le 1/4$. Thus, by~\Cref{lem:damage}
and~\Cref{cor:meas-claim41},
\begin{align*}
  \abs{(1 - p_1) \dotsm (1 - p_t) - q_t}
  &\le 2 \sum_{i=1}^{t-1} \dist[\tr] \paren*{\rho^{\tensor n}, \restrict{\rho^{\tensor
    n}}_{\sqrt{\Id - B_i}}} \\
  &\lesssim \sum_{i=1}^{t-1} \E_{\rho^{\tensor n}}[B_i] \cdot \frac{\stddev[\bs
    S_i]}{\E[\bs X]}
  \le \frac{\sqrt{n}}{\E[\bs X]} \cdot (p_1 + \dotsc + p_{t-1})
  \le \frac{1}{4} \cdot \frac{\sqrt{n}}{\E[\bs X]}.
\end{align*}
By a similar argument,
\begin{align*}
  \abs{(1 - p_1) \dotsm (1 - p_{t-1}) - q_{t-1}}
  &\lesssim \frac{\sqrt{n}}{\E[\bs X]} \cdot (p_1 + \dotsc + p_{t-2})
    \le \frac{1}{4} \cdot \frac{\sqrt{n}}{\E[\bs X]}.
\end{align*}
Therefore, since $3/4 < \exp(-1/4) < 4/5$, we have $q_t \le 4/5$ and
$q_{t-1} \ge 3/4$, for $\E[\bs X] = \Omega(\sqrt{n})$.
\end{proof}

Assuming $\E[\bs X] = \Omega(\sqrt{n})$, let $t \in [m]$ be as
in~\Cref{claim:there-is-t}. Since $q_m \le q_t \le 4/5$, it follows that
the probability the algorithm makes an FN error is at most $4/5$. In
fact, since $q_t \le 4/5$, the algorithm will pick an index $i \le t$
with probability at least $1/5$. Thus, to show that the algorithm
succeeds w.p.\ at least $0.1$, it suffices to show that w.h.p.\ the
algorithm does not pick an index $i \in \cB$, where $\cB \subset [m]$ is
the subset defined by
\begin{align*}
  \cB
  &= \set{ i \in [m] \mid 1 \le i \le t\ \text{and}\ \E_\rho [A_i] < 1/3}.
\end{align*}
First, we show that an event $B_i$ with $i \in \cB$ is unlikely to occur
when the initial state $\rho^{\tensor n}$ is measured according to
$(\ol{B}_i, B_i)$:
\begin{claim}\label{claim:bad-things-dont-happen}
  Let $\eta \in (0, 1]$, to be specified later. If $n$ is of order
  $O(\log^2 (m / \eta))$, then $p_i \le (\eta/m)^2$ for all $i \in \cB$.
\end{claim}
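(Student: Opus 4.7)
The plan is to apply the single-shot bound from \Cref{cor:meas-claim41}, which gives the clean inequality
\[
    p_i = \E_{\rho^{\tensor n}}[B_i] \le \exp\!\paren*{-n\lambda\bigl(\theta - (e-1)\E_\rho[A_i]\bigr)}.
\]
For any $i \in \cB$ we have $\E_\rho[A_i] < 1/3$ and $\theta = 2/3$, so the coefficient $\theta - (e-1)\E_\rho[A_i]$ is strictly greater than the positive constant $c \coloneqq \tfrac{3-e}{3} > 0$. Hence $p_i \le \exp(-c\,n\lambda)$ for every $i \in \cB$, and the task reduces to choosing $n$ and $\lambda$ so that $\exp(-c\,n\lambda) \le \eta/m$.

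Next I would pin down $\lambda$. The hypotheses I must respect are: (i)~\Cref{claim:there-is-t} requires $\E[\bs X] = 1/\lambda = \Omega(\sqrt{n})$; (ii)~\Cref{thm:claim41} (invoked through \Cref{cor:meas-claim41}) requires $\E[\bs X] \ge \stddev[\bs S_i] = \sqrt{p(1-p)\,n} \le \tfrac12\sqrt{n}$ and $\E[\bs X] \ge 1$; and (iii)~I also need $\Pr[\bs S_i + \bs X > \theta n] < 1/4$, but for bad indices this will be automatic from the single-shot bound above once $n$ is large. The tightest choice consistent with these is $\lambda = \Theta(1/\sqrt{n})$, i.e.\ $\E[\bs X] = \Theta(\sqrt{n})$; crucially this works uniformly over all $i \in [m]$, since it does not depend on $\E_\rho[A_i]$.

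With $\lambda = \Theta(1/\sqrt n)$ the exponent becomes $c\,n\lambda = \Theta(\sqrt n)$, so the requirement $p_i \le \eta/m$ translates to $\sqrt n \gtrsim \log(m/\eta)$, i.e.\ $n = O(\log^2(m/\eta))$, matching the claim. The argument then concludes by taking the absolute constants in $n$ and $\lambda$ large enough that the $\Omega$ in~(i) and the constant $c$ in the exponent are simultaneously honored.

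The main obstacle, and the reason the preceding reductions were set up with the specific constants $\theta = 2/3$ and bad-threshold $1/3$, is that one genuinely needs $\theta - (e-1)\E_\rho[A_i]$ to be positive: this fails unless $\E_\rho[A_i] < \theta/(e-1)$. With $\theta = 2/3$ this gives the slack $\E_\rho[A_i] < 2/(3(e-1)) \approx 0.388$, which just accommodates the $1/3$ cutoff. Beyond this numerical check, the rest of the proof is a direct substitution into \Cref{cor:meas-claim41}; no fresh martingale or union-bound argument is required here, because the measurement-induced ``damage'' has already been accounted for in \Cref{claim:there-is-t} via \Cref{lem:damage}, and \Cref{claim:bad-things-dont-happen} only needs the \emph{initial} probability $\E_{\rho^{\tensor n}}[B_i]$ to be small.
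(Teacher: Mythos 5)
Your proposal is correct and follows essentially the same route as the paper's own proof: cite the tail bound $p_i \le \exp(-n\lambda(\theta-(e-1)\E_\rho[A_i]))$ from \Cref{cor:meas-claim41}, observe that for $i \in \cB$ the coefficient is at least the constant $\tfrac{3-e}{3} \approx 0.09 > 0$, and conclude via $n\lambda = \Theta(\sqrt{n})$ that $n = \Theta(\log^2(m/\eta))$ suffices. Your extra remarks on the admissible choice of $\lambda$ and the positivity margin $\E_\rho[A_i] < \theta/(e-1)$ are consistent with the paper and add nothing that conflicts with it.
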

\begin{proof}
  By~\Cref{cor:meas-claim41}, for all $i \in [m]$,
  \begin{align*}
    p_i
    &= \E_{\rho^{\tensor n}} [B_i]
      \le \exp(- n \lambda (\theta - (e - 1) \E_\rho[A_i])).
  \end{align*}
  Since $\theta = 2/3$ and $i \in \cB$, we have $\E_\rho[A_i] < 1/3$ and
  $\theta - (e - 1)\E_\rho[A_i] \ge 0.09$. Since
  $n \lambda = \Omega(\sqrt{n})$, there exists a constant $C > 0$ such
  that $n \lambda \ge C \sqrt{n}$. Thus,
  \begin{align*}
    p_i
    &= \E_{\rho^{\tensor n}} [B_i]
      \le \exp(- 0.09 C \sqrt{n}).
  \end{align*}
  Therefore, if $n \ge \log^2((m/\eta)^2) / (0.09 C)^2$, then
  $p_i \le (\eta / m)^2$.
\end{proof}

Next, we show that the algorithm picks an index $i \in [t]$ such that
$\E_\rho[A_i] \ge 1/3$ with probability at least $0.03$, proving
\Cref{lem:bajansearch3}.

\begin{proof}[Proof of Lemma \ref{lem:bajansearch3}]
  Fix $\eta = 0.01$, so that $n = O(\log^2 m)$ as promised. By
  \Cref{lem:fid-ineq},
  \begin{align*}
    1
    &\le \sqrt{q_t} \Fid(\rho^{\tensor n}, \varrho_t) + \sum_{i=1}^t
      \sqrt{s_i} \sqrt{p_i}.
  \end{align*}
  By \Cref{claim:bad-things-dont-happen} and the Cauchy--Schwarz
  inequality,
  \begin{align*}
    \sum_{i=1}^t \sqrt{s_i} \sqrt{p_i}
    &\le \frac{\eta}{m} \sum_{i \in \cB} \sqrt{s_i} + \sum_{i \not\in
      \cB} \sqrt{s_i} \sqrt{p_i}
      \le \eta + \sqrt{\sum_{i \not\in \cB} s_i} \sqrt{\sum_{i \not\in \cB} p_i},
  \end{align*}
  where $i \not\in \cB$ denotes $i \in [t] \setminus \cB$. By
  \Cref{claim:there-is-t}, $p_1 + \dotsb + p_t \le 1/4$. Hence,
  \begin{align*}
    1 - \sqrt{q_t} \Fid(\rho^{\tensor n}, \varrho_t) - \eta
    &\le \sqrt{\sum_{i \not\in \cB} s_i} \sqrt{\sum_{i \not\in \cB} p_i}
      \le \half \sqrt{\sum_{i \not\in \cB} s_i}. 
  \end{align*}
  Since $\Fid(\rho^{\tensor n}, \varrho_t) \le 1$, $\eta = 0.01$, and, by
  \Cref{claim:there-is-t}, $q_t \le 4/5$, it follows that
  \begin{align*}
    \half \sqrt{\sum_{i \not\in \cB} s_i}
    \ge 0.99 - \sqrt{4/5}
    &\implies \sum_{i \not\in \cB} s_i
      \ge 4 \cdot (0.99 - \sqrt{4/5})^2
      \ge 0.03.
  \end{align*}
  Since $\sum_{i \not\in \cB} s_i$ is the probability that the algorithm
  returns an index $i \in [t]$ with $\E_\rho[A_i] \ge 1/3$, it follows
  that the algorithm is correct with probability at least $0.03$.
\end{proof}

\section{Shadow Tomography and Hypothesis Selection}    \label{sec:applications}

\subsection{Shadow Tomography}
We begin by describing how to deduce our online Shadow Tomography result, \Cref{thm:shadow-tomog}, from our online \BajanSearch result, \Cref{thm:our-gentle-search}.
As mentioned earlier, this deduction is known~\cite{Aar20} to follow almost immediately from a mistake-bounded learning algorithm for quantum states due to Aaronson, Chen, Hazan, Kale, and Nayak~\cite{ACHKN19}, described below.
We will fill in a few details that are not spelled out in~\cite{Aar20}.

\paragraph{Mistake-bounded learning scenario.}
Consider the following kind of interaction between a ``student'' and a ``teacher'', given parameters $d \in \NN$ and $0 < \eps < \frac12$.
There is a quantum state $\rho \in \CC^{d \times d}$ that is unknown to the student (and possibly also unknown to the teacher).
The teacher presents a sequence of quantum events $A_1, A_2, A_3, \dots$ (possibly adaptively) to the student.
Upon receiving $A_t$, the student must output a prediction $\wh{\mu}_t$ of $\mu_t = \E_{\rho}[A_t]$.
After each prediction, the teacher must either ``pass'', or else declare a ``mistake'' and supply a value~$\mu'_t$.
\begin{theorem}[\cite{ACHKN19}]                                     \label{thm:achkn}
    Assume the following \textbf{Teacher Properties} hold for each~$t$:
    \begin{itemize}
        \item If $|\wh{\mu}_t - \mu_t| > \eps$, the teacher always declares ``mistake''.
        \item If $|\wh{\mu}_t - \mu_t| \leq \frac34 \eps$, the teacher always passes.
        \item If the teacher ever declares ``mistake'', the supplied value $\mu'_t$ always satisfies $|\mu'_t - \mu_t| \leq \frac14 \eps$.
        \item (If $\frac34 \eps < |\wh{\mu}_t - \mu_t| \leq \eps$, the
          teacher may either pass or declare a mistake; but, if the latter, recall that $|\mu'_t - \mu_t| \leq \frac14 \eps$.)
    \end{itemize}
    Then there is an algorithm for the student that causes at most $C_0 (\log d)/\eps^2$ ``mistakes'' (no matter how many events are presented), where $C_0$ is a universal constant.
\end{theorem}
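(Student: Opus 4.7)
The plan is to use the Matrix Multiplicative Weights (also known as Matrix Exponentiated Gradient or Regularized Follow-the-Leader) framework, which is the standard approach for online learning over the simplex of density matrices. The student maintains a ``hypothesis state'' $\omega_t \in \CC^{d \times d}$, initialized to the maximally mixed state $\omega_1 = \Id/d$, and on each round predicts $\wh{\mu}_t = \tr(\omega_t A_t)$. Whenever the teacher declares a mistake and supplies $\mu'_t$, define $\sigma_t = \sign(\wh{\mu}_t - \mu'_t) \in \{-1, +1\}$ and perform the multiplicative update
\[
    \omega_{t+1} = \frac{\exp(\log \omega_t - \eta \sigma_t A_t)}{\tr\exp(\log \omega_t - \eta \sigma_t A_t)},
\]
for a step size $\eta = \Theta(\eps)$ to be tuned. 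On passing rounds we leave $\omega_t$ unchanged.

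The analysis will track the potential $\Phi_t = S(\rho \,\|\, \omega_t) = -S(\rho) - \tr(\rho \log \omega_t)$. Since $\omega_1 = \Id/d$, we have $\Phi_1 \leq \log d$, and $\Phi_t \geq 0$ always. After an update step, a direct computation gives
\[
    \Phi_{t+1} - \Phi_t = \eta \sigma_t \tr(\rho A_t) + \log \tr \exp(\log \omega_t - \eta \sigma_t A_t) = \eta \sigma_t \mu_t + \log Z_t.
\]
To bound $Z_t$, I will invoke the Golden--Thompson inequality $\tr e^{X+Y} \leq \tr(e^X e^Y)$, giving $Z_t \leq \tr(\omega_t \exp(-\eta \sigma_t A_t))$, then use the scalar bound $e^{-x} \leq 1 - x + x^2$ for $|x| \leq 1$ together with the operator inequality $A_t^2 \preceq A_t$ (valid because $0 \leq A_t \leq \Id$) to obtain $Z_t \leq 1 - \eta \sigma_t \wh{\mu}_t + \eta^2 \wh{\mu}_t \leq 1 - \eta \sigma_t \wh{\mu}_t + \eta^2$. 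Taking logs and combining,
\[
    \Phi_{t+1} - \Phi_t \leq -\eta \sigma_t (\wh{\mu}_t - \mu_t) + \eta^2.
\]

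The key observation is that on a declared mistake, the Teacher Properties give $|\wh{\mu}_t - \mu_t| > \tfrac{3}{4}\eps$ and $|\mu'_t - \mu_t| \leq \tfrac{1}{4}\eps$; hence $|\wh{\mu}_t - \mu'_t| > \tfrac{1}{2}\eps$ and $\sigma_t$ agrees with the sign of $\wh{\mu}_t - \mu_t$. Therefore $\sigma_t(\wh{\mu}_t - \mu_t) > \tfrac{3}{4}\eps$, and the per-mistake drop in potential is at least $\tfrac{3}{4}\eta \eps - \eta^2$. Choosing $\eta = \tfrac{3}{8}\eps$ makes this $\Omega(\eps^2)$. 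Since $\Phi$ starts at most $\log d$ and cannot go negative, the total number of mistakes is at most $O(\log d / \eps^2)$, which is the claim.

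The one step I expect to require the most care is the Golden--Thompson bound combined with the polynomial linearization $e^{-x} \leq 1 - x + x^2$: one needs the operator inequality $\exp(-\eta \sigma_t A_t) \preceq \Id - \eta \sigma_t A_t + \eta^2 A_t^2$, which holds eigenvalue-wise on the spectrum of $\eta \sigma_t A_t \subset [-\eta, \eta]$ provided $\eta \leq 1$. Everything else is bookkeeping; the nonconstructive use of the teacher's $\mu'_t$ (rather than the true $\mu_t$) is absorbed for free by the $\tfrac{3}{4}\eps$ vs.\ $\eps$ gap in the Teacher Properties.
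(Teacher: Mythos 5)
Your proof is correct. The algorithmic core is the same as the paper's: a matrix-multiplicative-weights / RFTL learner over density matrices that predicts $\tr(\omega_t A_t)$ and updates only on ``mistake'' rounds, exploiting the gap between the $\tfrac34\eps$ threshold for declared mistakes and the $\tfrac14\eps$ accuracy of the supplied $\mu'_t$. The difference is in how the mistake bound is extracted: the paper does not spell out a proof at all, but instead points to the regret analysis of \cite{ACHKN19} with the loss $|\wh{\mu}_t - \mu'_t|$, arguing that each mistake costs the learner at least $\tfrac12\eps$ while the comparator $\rho$ pays at most $\tfrac14\eps$, so the generic $O(\sqrt{T\log d})$ regret bound forces $T = O((\log d)/\eps^2)$. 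You instead give a self-contained potential argument: tracking $\Phi_t = S(\rho\,\|\,\omega_t)$, which starts at most $\log d$ and stays nonnegative, and showing via Golden--Thompson, the spectral bound $e^{-x}\le 1-x+x^2$ on $[-1,1]$, and $A_t^2 \le A_t$ that each update decreases $\Phi$ by $\eta\,|\wh{\mu}_t-\mu_t| - \eta^2 \ge \tfrac34\eta\eps - \eta^2 = \Omega(\eps^2)$ for $\eta = \tfrac38\eps$ (your sign argument, that $\wh{\mu}_t - \mu'_t$ and $\wh{\mu}_t - \mu_t$ agree in sign on mistake rounds, is what makes the signed linear update valid in place of the true gradient). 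This buys a short, fully explicit proof with a concrete constant, at the cost of re-deriving what is essentially the standard MMW regret calculation rather than quoting it; both arguments use the Teacher Properties in exactly the same way.
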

\noindent The above theorem is similar to, but not quite the same, as ``Theorem~1'' in~\cite{ACHKN19}.
However it is easy to check that \cite{ACHKN19}'s Section~3.3 (``Proof of Theorem~1'') applies equally well to establish \Cref{thm:achkn} above.%
\footnote{Briefly: the RTFL/MMW algorithm will only do an update in the ``mistake'' rounds.
The loss is taken to be $|\wh{\mu}_t - \mu'_t|$.  On any mistake, we have $|\wh{\mu}_t - \mu_t| > \frac34 \eps$ and $|\mu'_t - \mu_t| \leq \frac14 \eps$, hence the student incurs loss at least $\frac12 \eps$.
On the other hand, answering according to the true $\mu_t$ would only incur loss at most $\frac14\eps$.
The regret calculation bounding the number of mistakes is now the same.}

To use this theorem for the online Shadow Tomography problem, it only remains for the Shadow Tomography algorithm \emph{to implement the teacher's role itself}, given copies of~$\rho$.
This will be done using our \BajanSearch algorithm; let us first slightly upgrade it so that (i)~it is concerned with $\E_\rho[A_i] \approx \theta_i$ rather than $\E_\rho[A_i] < \theta_i$; (ii)~if it finds $j$ with $\E_\rho[A_j] \not \approx \theta_j$, then it also reports a very good estimate of $\E_\rho[A_j]$.
\begin{lemma}                                       \label{lem:bajanupgrade}
    Consider the version of quantum Threshold Search where the inputs are the same, but the algorithm should correctly (except with probability at most~$\delta$) output:
    \begin{itemize}
        \item ``\,$\abs{\E_\rho[A_j] - \theta_j} > \frac34 \eps$, and in fact $\abs{\E_\rho[A_j] - \mu'_j} \leq \frac14\eps$'', for some particular $j$ and value~$\mu'_j$; or else,
        \item ``\,$\abs{\E_\rho[A_i] - \theta_i} \leq \eps$ for all~$i$''.
    \end{itemize}
    Then as in \Cref{thm:our-gentle-search}, the problem can be solved in an online fashion using
    \[
            \phantom{\qquad \text{($\textsc{l} = \log(1/\delta)$)}}
        \nts'(m,\eps,\delta) = \frac{\log^2 m + \textsc{l}}{\eps^2} \cdot O(\textsc{l}) \qquad \textnormal{($\textsc{l} = \log(1/\delta)$)}
    \]
    copies of~$\rho$.
\end{lemma}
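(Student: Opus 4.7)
The plan is to reduce this two-sided, estimate-producing version of Threshold Search directly to the one-sided version already established in \Cref{thm:our-gentle-search}. First I would invoke Threshold Search with accuracy parameter $\eps_{TS} = \eps/8$, failure probability $\delta/2$, and a list of $2m$ threshold pairs: for each incoming $(A_t, \theta_t)$, feed both $(A_t,\, \theta_t + \tfrac{7}{8}\eps)$ and $(\Id - A_t,\, 1 - \theta_t + \tfrac{7}{8}\eps)$ into the online subroutine. The online structure of the upgraded algorithm is thereby inherited from the online structure of \Cref{thm:our-gentle-search}.

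If Threshold Search passes on all $2m$ queries, then with probability at least $1 - \delta/2$ we have, for every $i$, both $\E_\rho[A_i] \leq \theta_i + \tfrac{7}{8}\eps$ and $\E_\rho[\Id - A_i] \leq 1 - \theta_i + \tfrac{7}{8}\eps$, which together give $|\E_\rho[A_i] - \theta_i| \leq \tfrac{7}{8}\eps \leq \eps$ --- exactly the ``all close'' output. If instead Threshold Search selects a query corresponding to some index $j$, then with probability at least $1 - \delta/2$ we have either $\E_\rho[A_j] > \theta_j + \tfrac{7}{8}\eps - \eps_{TS} = \theta_j + \tfrac{3}{4}\eps$ or $\E_\rho[A_j] < \theta_j - \tfrac{3}{4}\eps$, so in either case $|\E_\rho[A_j] - \theta_j| > \tfrac{3}{4}\eps$ as required.

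To produce the estimate $\mu'_j$, I would reserve at the outset an independent holdout batch of $n_2 = O(\log(1/\delta)/\eps^2)$ fresh copies of $\rho$, untouched by the Threshold Search subroutine. When (and if) the subroutine triggers on some $j$, use these holdout copies together with \Cref{lem:amplification} (with accuracy $\eps/4$ and failure probability $\delta/2$) to obtain a classical estimate $\mu'_j$ satisfying $|\E_\rho[A_j] - \mu'_j| \leq \eps/4$ except with probability $\delta/2$. A union bound over failure of the Threshold Search and failure of the estimation gives overall failure probability at most $\delta$. Summing $n_1 = \nts(2m, \eps/8, \delta/2)$ with $n_2 = O(\log(1/\delta)/\eps^2)$ yields the claimed $\nts'(m,\eps,\delta) = O((\log^2 m + \log(1/\delta))\log(1/\delta)/\eps^2)$ bound.

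There is no substantive obstacle here: the argument is a clean black-box reduction, and the only care required is bookkeeping the constants so that the ``trigger'' case yields a $\tfrac{3}{4}\eps$ gap (choosing offsets $\tfrac{7}{8}\eps$ with $\eps_{TS} = \eps/8$) and the estimation is performed with samples independent of those used by the Threshold Search subroutine so that the two failure events can be combined by a union bound.
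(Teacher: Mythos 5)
Your proposal is correct and follows essentially the same route as the paper's proof: run the online Threshold Search of \Cref{thm:our-gentle-search} on the $2m$ pairs $(A_i,\cdot)$ and $(\Id-A_i,\cdot)$ with shifted thresholds and reduced accuracy to certify either $\abs{\E_\rho[A_i]-\theta_i}\leq\eps$ for all $i$ or $\abs{\E_\rho[A_j]-\theta_j}>\tfrac34\eps$ for the triggering $j$, and use a reserved holdout of $O(\log(1/\delta)/\eps^2)$ copies with \Cref{lem:amplification} to produce $\mu'_j$, combining failures by a union bound. The only difference is the choice of constants (offset $\tfrac78\eps$ with accuracy $\eps/8$ versus the paper's offset $\eps$ with accuracy $\eps/4$), which is immaterial.
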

\begin{proof}
    Given $m, \eps, \delta$, we obtain $n = \nts(2m, \frac14 \eps, \delta/2) + c\log(1/\delta)/\eps^2$ copies of~$\rho$, where $c$ is a universal constant to be specified later.
    This $n$ indeed has the asymptotic form of $\nts'$ given above.
    We save the $c\log(1/\delta)/\eps^2$ copies as a ``holdout'', and use the remaining copies to apply \Cref{thm:our-gentle-search} (with parameters $2m, \frac14 \eps, \delta/2$), converting our given observable/threshold pairs $(A_1, \theta_1), \dots, (A_m, \theta_m)$ to a ``simulated input'' of
    \[
        (A_1, \theta_1 + \eps), (\Id-A_1, 1-\theta_1 + \eps), \dots, (A_m, \theta_m + \eps), (\Id-A_m, 1-\theta_m + \eps).
    \]
    Except with probability at most~$\delta/2$ we get a correct answer from the simulation, from which we can derive a correct final output as described below.

    If the simulation passes on all $2m$ pairs, then \Cref{thm:our-gentle-search} tells us that  we must have
    \[
        \E_\rho[A_i] \leq \theta_i + \eps\qquad \text{and} \qquad
        \E_\rho[\Id-A_i] \leq 1-\theta_i + \eps
    \]
    for all~$i$, and therefore we may correctly output ``\,$\abs{\E_\rho[A_i] - \theta_i} \leq \eps$ for all~$i$''.

    On the other hand, suppose the simulation halts by outputting
    \[
        \text{``}\E_\rho[A_j] > \theta_j + \eps - \tfrac14 \eps\text{''} \qquad \text{or} \qquad  \text{``}\E_\rho[\Id - A_j] > 1 - \theta_j + \eps - \tfrac14 \eps\text{''}
    \]
    for some particular~$j$.
    Then our algorithm can correctly output ``\,$\abs{\E_\rho[A_j] - \theta_j} > \frac34 \eps$''.
    Furthermore, at this point the algorithm may use the holdout copies of~$\rho$ to obtain an estimate $\mu'_j$ of $\E_\rho[A_j]$ (in the naive way\rnote{make a proper reference to a generalized \Cref{lem:amplification}}) that satisfies $\abs{\E_\rho[A_j] - \mu'_j} \leq \frac14\eps$ except with probability at most~$\delta/2$, provided $c$ is large enough.
\end{proof}

With \Cref{lem:bajanupgrade} in place, we can obtain our online Shadow Tomography algorithm:
\begin{proof}[Proof of Theorem \ref{thm:shadow-tomog}.]
    Define
    \[
        R = \lceil C_0 (\log d)/\eps^2 \rceil + 1, \quad \delta_0 = \delta/R, \quad n_0 = \nts'(m, \eps, \delta_0).
    \]
    The number of copies of~$\rho$ used by our online Shadow Tomography algorithm will be $n = R n_0$, which is
    indeed
    \[
        n = \frac{(\log^2 m + \textsc{l})(\log d)}{\eps^4} \cdot O(\textsc{l})
    \]
    for  $\textsc{l} = \log(\tfrac{\log d}{\delta \eps})$, as claimed.

    Upon receiving $n$ copies of~$\rho$, our Shadow Tomography algorithm partitions it into~$R$ ``batches'' of size $n_0$ each.
    The idea is that each batch will be devoted to (up to) one ``mistake'' of the ``student''.
    We now describe the algorithm, and then give its analysis.

    To begin, recall that our Shadow Tomography algorithm receives
    the input quantum events $A_1, A_2, \dots$ in an online fashion.  As it
    receives them, it will run the following online algorithms
    concurrently:
    \begin{itemize}
        \item the mistake-bounded learning algorithm of \Cref{thm:achkn} (implementing the student's algorithm);
        \item the \BajanSearch algorithm from \Cref{lem:bajanupgrade} (to implement the teacher), initially using only the first batch of $\rho^{\otimes n_0}$.
        \end{itemize}

        The algorithm simulates both the teacher and student roles of
        the mistake-bounded setting of~\cite{ACHKN19} and runs in
        rounds. A new round is started whenever the teacher declares a
        mistake and a fresh batch of $n_0$ copies of the state $\rho$ is
        used by the teacher in each round. When it receives input $A_t$,
        the algorithm runs the next iteration of the mistake-bounded
        learning algorithm of \Cref{thm:achkn} to get the student's
        prediction $\wh{\mu}_t$. Then it runs the next iteration of the
        \BajanSearch algorithm from \Cref{lem:bajanupgrade} with input
        ($A_t$, $\wh{\mu}_t$); the estimates $\wh{\mu}_t$ output by the
        student serve as the~$\theta_t$ threshold values used in
        \Cref{lem:bajanupgrade}.

    Whenever the \BajanSearch algorithm ``passes'' on an $(A_t,\wh{\mu}_t)$ pair, the teacher also ``passes'', and $\wh{\mu}_t$ serves as the Shadow Tomography algorithm's final estimate for $\E_\rho[A_t]$.
    On the other hand, if the \BajanSearch algorithm outputs ``\,$\abs{\E_\rho[A_t] - \wh{\mu}_t} > \frac34 \eps$, and in fact $\abs{\E_\rho[A_t] - \mu'_t} \leq \frac14\eps$'', then the teacher will declare a ``mistake'' and supply the value~$\mu'_t$ to the student.
    This $\mu'_t$ will also serve as the Shadow Tomography algorithm's final estimate for $\E_\rho[A_t]$.
    Furthermore, at this point the teacher will abandon any remaining copies of~$\rho$ in the current batch, and will use a ``fresh'' batch~$\rho^{\otimes n_0}$ for the subsequent application of \Cref{lem:bajanupgrade}.
    We refer to this as moving on to the next ``round''.

    Let us now show that with high probability there are at most~$R-1$ mistakes and hence at most $R$ rounds.
    (If the Shadow Tomography algorithm tries to proceed to an $(R+1)$th
    round, and thereby runs out of copies of~$\rho$, we simply declare
    an overall failure.)

    The total probability of error made by the Threshold Search
    algorithm within each round is bounded by $\delta_0$. By a union
    bound, the probability of any incorrect answer over all $R$ rounds
    is at most $R \delta_0$, i.e., at most $\delta$.
    Below we will show that if there are no
    incorrect answers, then the ``Teacher Properties'' of
    \Cref{thm:achkn} hold, and therefore the total number of mistakes is
    indeed at most $\lceil C_0 (\log d)/\eps^2 \rceil = R - 1$ with
    probability at least~$1-\delta$.

    It remains to verify that --- assuming correct answers from all uses
    of \Cref{lem:bajanupgrade} --- our Shadow Tomography algorithm
    satisfies the Teacher Properties of \Cref{thm:achkn} and also that
    all~$m$ estimates for $\E_\rho[A_i]$ produced by the algorithm are
    correct to within an additive error~$\eps$. Let us first note that
    within each round of the Shadow Tomography algorithm, we never
    supply more than $m$ quantum events to the Threshold Search
    algorithm from \Cref{lem:bajanupgrade}. The main point to observe is
    that if our \BajanSearch routine from \Cref{lem:bajanupgrade} ever
    passes on some $(A_t, \wh{\mu}_t)$ pair, it \emph{must} be that
    $\abs{\E_\rho[A_t] - \wh{\mu}_t} \leq \eps$; the reason is that
    passing implies the \BajanSearch algorithm is prepared to output
    ``\,$\abs{\E_\rho[A_i] - \theta_i} \leq \eps$ for all~$i$''.  On the
    other hand, it's immediate from \Cref{lem:bajanupgrade} that if the
    teacher declares ``mistake'' on some~$(A_t,\wh{\mu}_t)$ pair, then
    indeed we have $\abs{\E_\rho[A_t] - \wh{\mu}_t} > \frac34 \eps$, and
    the supplied correction~$\mu'_t$ satisfies
    $\abs{\E_\rho[A_t] - \mu'_t} \leq \frac14\eps$ (as is necessary for
    the Teacher Properties, and is more than sufficient for the Shadow
    Tomography guarantee).
\end{proof}


\subsection{Hypothesis Selection}

In this section we establish our quantum Hypothesis Selection result, \Cref{thm:hypothesis-selection}.
This theorem effectively has three different bounds, and we prove them via \Cref{prop:hypoth1,prop:hypoth2,prop:hypoth3}.

Recall that in the Hypothesis Selection problem there are given fixed hypothesis states $\sigma_1, \dots, \sigma_m \in \CC^{d \times d}$, as well as access to copies of an unknown state $\rho \in \CC^{d \times d}$.
We write
\[
    \eta = \min_{i} \{\dist[tr](\rho, \sigma_i)\}, \qquad i^* = \argmin_i \{\dist[tr](\rho, \sigma_i)\},
\]
with the quantity $\eta$ being unknown to the algorithm.
Recall that the Holevo--Helstrom theorem implies that for each pair $i \neq j$, there is a quantum event $A_{ij}$ 
such that
\[
    \E_{\sigma_i}[A_{ij}] - \E_{\sigma_j}[A_{ij}] = \dist[tr](\sigma_i, \sigma_j),
\]
and furthermore we may take $A_{ji} = \ol{A}_{ij} = \Id - A_{ij}$.
These events \emph{are} known to the algorithm.

One way to solve the quantum Hypothesis Selection problem is to simply use Shadow Tomography as a black box.
Given parameters $0 < \eps, \delta < \frac12$ for the former problem, we can run Shadow Tomography with parameters $\eps/2$, $\delta$, and the $\binom{m}{2}$ quantum events $(A_{ij} : i < j)$.
Then except with probability at most~$\delta$, we obtain values $\wh{\mu}_{ij}$ with $\abs{\E_\rho[A_{ij}] - \wh{\mu}_{ij}} \leq \eps/2$ for all $i,j$.
Now we can essentially use any classical Hypothesis Selection algorithm; e.g., the ``minimum distance estimate'' method of Yatracos~\cite{Yat85}.
We select as our hypothesis $\sigma_k$, where $k = \argmin_\ell
\wh{\Delta}_\ell$ is a minimizer of
\[
    \widehat{\Delta}_\ell = \max_{i < j}\ \abs{\E_{\sigma_\ell}[A_{ij}] - \wh{\mu}_{ij}}.
\]
Recalling $\eta = \dist[tr](\rho, \sigma_{i^*})$, we have
\begin{equation}    \label[ineq]{ineq:Delta}
    \widehat{\Delta}_k \leq \widehat{\Delta}_{i^*} \leq \max_{i < j}\set{\abs{\E_{\sigma_{i^*}}[A_{ij}] - \E_{\rho}[A_{ij}]} + \abs{\E_{\rho}[A_{ij}] - \wh{\mu}_{ij}}} \leq \eta + \eps/2,
\end{equation}
where the last inequality used the Holevo--Helstrom theorem again, and the Shadow Tomography guarantee.
We now obtain the following result (with the proof being an almost verbatim repeat of the one in \cite[Thm.~6.3]{DL01}):
\begin{proposition} \label{prop:hypoth1}
    The above-described method selects $\sigma_k$ with
    $
        \dist[tr](\rho, \sigma_k) \leq 3\eta + \eps
    $
    (except with probability at most~$\delta$),
    using a number of copies of~$\rho$ that is the same as in Shadow Tomography (up to constant factors).
\end{proposition}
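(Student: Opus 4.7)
The plan is to combine the already-established upper bound on $\widehat{\Delta}_k$ with a matching lower bound coming from the Helstrom event for the pair $(\sigma_{i^*}, \sigma_k)$, and then close up with two applications of the triangle inequality. Throughout, condition on the event (of probability at least $1 - \delta$) that Shadow Tomography succeeds, so that $|\E_\rho[A_{ij}] - \wh{\mu}_{ij}| \leq \eps/2$ holds for every pair $i < j$ simultaneously.

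The upper bound is already in place: \Cref{ineq:Delta} gives $\widehat{\Delta}_k \leq \widehat{\Delta}_{i^*} \leq \eta + \eps/2$. For the lower bound, I would zoom in on the single Helstrom event $A_{i^* k}$ (which lies in our fixed collection, using $A_{ji} = \Id - A_{ij}$ to handle the case $i^* > k$). By definition of $A_{i^* k}$ and Helstrom's theorem,
\[
    \dist[\tr](\sigma_{i^*}, \sigma_k) = \E_{\sigma_{i^*}}[A_{i^* k}] - \E_{\sigma_k}[A_{i^* k}].
\]
Inserting $\pm \E_\rho[A_{i^* k}]$ and $\pm \wh{\mu}_{i^* k}$ and using the triangle inequality, this quantity is at most
\[
    \abs{\E_{\sigma_{i^*}}[A_{i^* k}] - \E_\rho[A_{i^* k}]} + \abs{\E_\rho[A_{i^* k}] - \wh{\mu}_{i^* k}} + \abs{\wh{\mu}_{i^* k} - \E_{\sigma_k}[A_{i^* k}]}.
\]
The first term is bounded by $\dist[\tr](\rho, \sigma_{i^*}) = \eta$ (Helstrom again), the second by $\eps/2$ (Shadow Tomography), and the third by $\widehat{\Delta}_k \leq \eta + \eps/2$ (the already-established upper bound). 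Summing yields $\dist[\tr](\sigma_{i^*}, \sigma_k) \leq 2\eta + \eps$.

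A final application of the triangle inequality then gives
\[
    \dist[\tr](\rho, \sigma_k) \leq \dist[\tr](\rho, \sigma_{i^*}) + \dist[\tr](\sigma_{i^*}, \sigma_k) \leq \eta + (2\eta + \eps) = 3\eta + \eps,
\]
as required. The copy complexity is inherited verbatim from the single Shadow Tomography call on the $\binom{m}{2}$ observables $\{A_{ij}\}_{i < j}$, so there is no overhead beyond constants. There is no real obstacle here beyond bookkeeping; the only subtlety worth flagging is that the bound on $\widehat{\Delta}_k$ is purely combinatorial (it compares $\widehat{\Delta}_k$ to $\widehat{\Delta}_{i^*}$), while the lower bound requires selecting the ``right'' test event $A_{i^* k}$ from the family, which is legitimate precisely because $i^*$ and $k$ both lie in $[m]$ and the family is indexed by all unordered pairs.
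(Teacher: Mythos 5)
Your proposal is correct and follows essentially the same route as the paper: triangle inequality through $\sigma_{i^*}$, Helstrom's theorem for $\dist[\mathrm{tr}](\sigma_{i^*},\sigma_k)$, then inserting $\wh{\mu}_{i^*k}$ and invoking \Cref{ineq:Delta}. The only cosmetic difference is that you expand the $\widehat{\Delta}_{i^*}$ term inline (via the $\eta + \eps/2$ bound) rather than citing it as a second $\widehat{\Delta}$, which is the same estimate.
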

\begin{proof}
    By the triangle inequality for $\dist[tr]$ we have
    \begin{align*}
        \dist[tr](\sigma_k, \rho) \leq \eta + \dist[tr](\sigma_{k}, \sigma_{i^*}) 
        &= \eta + \abs{\E_{\sigma_{k}}[A_{ki^*}] - \E_{\sigma_{i^*}}[A_{ki^*}]} \\
        &\leq\eta + \abs{\E_{\sigma_{k}}[A_{ki^*}] - \wh{\mu}_{ki^*}} + \abs{\E_{\sigma_{i^*}}[A_{ki^*}] - \wh{\mu}_{ki^*}} \leq \eta + \widehat{\Delta}_k + \widehat{\Delta}_{i^*},
    \end{align*}
    and the result now follows from \Cref{ineq:Delta}.
\end{proof}

Now we give a different, incomparable method for Hypothesis Selection.
It will use the following ``decision version'' of quantum Threshold Search, which we prove at the end of \Cref{app:threshold-decision} (see \Cref{cor:quantum-or}):
\begin{corollary}                                       \label{cor:cor}
    Consider the scenario of quantum \BajanSearch (i.e., one is given parameters $0 < \eps_0, \delta_0 < \frac12$, and $m_0$~event/threshold pairs $(A_i,\theta_i)$).
    Suppose one is further given values $\eta_1, \dots, \eta_{m_0}$.
    Then using just $n_0 = O(\log(m_0/\delta_0)/\eps_0^2)$ copies of~$\rho$, one can correctly output (except with probability at most~$\delta$):
    \begin{itemize}
        \item ``there exists~$j$ with $\abs{\E_\rho[A_j] - \theta_j} > \eta_j $''; or else,
     \item ``\,$\abs{\E_\rho[A_i]  - \theta_i} \leq \eta_i + \eps$ for all~$i$''.
    \end{itemize}
    Indeed, the algorithm can be implemented by a projector applied to $\rho^{\otimes n_0}$.
\end{corollary}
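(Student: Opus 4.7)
The plan is to reduce the corollary to the improved Threshold Decision algorithm of \Cref{app:threshold-decision} (which outputs a single ``yes/no'' bit) by first amplifying each event $A_i$ into a projector whose expectation encodes a gapped version of the desired question.

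First, by Naimark's dilation (\Cref{thm:naimark-extension}), we may assume without loss of generality that each $A_i$ is a projector $\Pi_i$ (on $\rho \otimes \ketbra{0}{0}$), preserving $\E_{\rho}[A_i]$. Next, apply the ``Moreover'' part of \Cref{lem:amplification} separately to each $\Pi_i$ with $\tau = \theta_i$, $c = \eta_i + \eps_0/2$, accuracy $\eps_0/2$, and failure probability $1/8$. Since the input is a projector, the output is a projector $B_i$ on $n' = O(1/\eps_0^2)$ copies satisfying
\begin{align*}
|\E_\rho[A_i] - \theta_i| > \eta_i + \eps_0 &\implies \E_{\rho^{\otimes n'}}[B_i] \geq 7/8, \\
|\E_\rho[A_i] - \theta_i| \leq \eta_i &\implies \E_{\rho^{\otimes n'}}[B_i] \leq 1/8.
\end{align*}

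Then invoke the improved Threshold Decision algorithm on the state $\rho^{\otimes n'}$ (treated as a single ``unknown state'') with projectors $B_1, \dotsc, B_{m_0}$, threshold $3/4$, and gap $1/2$. This uses $n'' = O(\log(m_0/\delta_0))$ copies of $\rho^{\otimes n'}$, for a total of $n_0 = n' \cdot n'' = O(\log(m_0/\delta_0)/\eps_0^2)$ copies of $\rho$, and is implementable as a projector on $\rho^{\otimes n_0}$ (possibly tensored with ancilla). Interpret a ``yes'' output as the corollary's ``there exists $j$...'' and a ``no'' output as ``for all $i$...''.

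To verify correctness, condition on the probability-$\geq 1 - \delta_0$ event that Threshold Decision gives its correct bit. A ``yes'' output then rules out the scenario ``$\forall i: \E_{\rho^{\otimes n'}}[B_i] \leq 1/4$'', so some $j$ has $\E_{\rho^{\otimes n'}}[B_j] > 1/4 > 1/8$, which by contraposition of the second amplification implication forces $|\E_\rho[A_j] - \theta_j| > \eta_j$. Symmetrically, a ``no'' output forces all $\E_{\rho^{\otimes n'}}[B_i] \leq 3/4 < 7/8$, and by contraposition of the first implication this gives $|\E_\rho[A_i] - \theta_i| \leq \eta_i + \eps_0$ for all $i$. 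No serious obstacle is anticipated; the only point requiring care is that we need the two-sided (absolute-value) version of \Cref{lem:amplification} because the gap in the problem statement is symmetric, but this is exactly what the ``Moreover'' part of that lemma provides, and projectivity is preserved throughout both Naimark and amplification, so the overall projector implementation claim also follows.
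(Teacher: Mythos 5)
Your proof is correct, and in substance it is the same reduction the paper uses: a two-sided amplification of each $(A_i,\theta_i,\eta_i)$ test via the ``Moreover'' part of \Cref{lem:amplification}, followed by the one-shot OR projector coming from \Cref{thm:quantum-or-technical}. The organizational difference is that the paper simply reruns the proof of \Cref{cor:quantum-or} with the two-sided amplified events taken directly to the final scale (success probability $1-\delta/2$ vs.\ $\delta^3/(16m)$ on $\rho^{\otimes n}$, $n = O(\log(m/\delta)/\eps^2)$) and then applies \Cref{thm:quantum-or-technical} once, whereas you first amplify to constant-gap projectors $B_i$ on $\rho^{\otimes n'}$, $n'=O(1/\eps_0^2)$, and then invoke the one-sided decision statement of \Cref{cor:quantum-or} as a black box with constant accuracy on the ``super-state'' $\rho^{\otimes n'}$. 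Your version is more modular (it never reopens the proof of \Cref{cor:quantum-or}) at the cost of nesting two amplifications, which only wastes constant factors, so the $O(\log(m_0/\delta_0)/\eps_0^2)$ count survives. Two small remarks: the initial Naimark step is unnecessary --- the final projector is the spectral projector of $\#B'$ from \Cref{thm:quantum-or-technical} regardless of whether the inputs are projectors --- and keeping it means your measurement is a projector on the ancilla-extended state $(\rho\otimes\ketbra{0}{0})^{\otimes n_0}$ rather than on $\rho^{\otimes n_0}$ itself, so you should drop it to get the claim as stated; also, take the decision-problem accuracy strictly below $\tfrac12$ (e.g.\ thresholds $0.6$ and gap $0.4$) to respect the hypothesis $0<\eps<\tfrac12$, which changes nothing else in your constant-gap bookkeeping.
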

Returning to Hypothesis Selection, let us define
\[
    \Delta_k = \max_{i < j}\ \abs{\E_{\sigma_k}[A_{ij}] - \E_{\rho}[A_{ij}]},
\]
and note that $\Delta_{i^*} \leq \eta$, by the Holevo--Helstrom theorem.  Let us
also assume the algorithm has a candidate upper bound $\ol{\eta}$
on~$\eta$.  Now suppose our algorithm is able to find~$\ell$ with
$\Delta_\ell \leq \ol{\eta} + \eps$.  Then the proof of
\Cref{prop:hypoth1} similarly implies that $\sigma_\ell$ satisfies
$\dist[tr](\sigma_\ell, \rho) \leq 2\eta + \ol{\eta} + \eps$.

Now let $\cT_k$ denote the following Threshold Decision instance (as in \Cref{cor:cor}): \mbox{$\eps_0 = \eps$}, $\delta_0 = 1/3$, $m_0 = \binom{m}{2}$, the quantum events are all the~$A_{ij}$'s, the thresholds are $\theta_{ij} = \E_{\sigma_k}[A_{ij}]$, each ``$\eta_{ij}$'' is~$\ol{\eta}$.
Then \Cref{cor:cor} gives us a projector~$B_k$ on $(\CC^{d})^{\otimes n_0}$, where $n_0 = O(\log(m)/\eps^2)$, with the following property:
When it is used to measure~$\varrho = \rho^{\otimes n_0}$,
\begin{equation}    \label[ineq]{ineq:imps}
    \Delta_k \leq \ol{\eta}   \implies  \E_{\varrho}[\ol{B}_k] \geq 2/3, \qquad
     \E_{\varrho}[\ol{B}_k]  > 1/3  \implies \Delta_k 
     \leq \ol{\eta} + \eps.
\end{equation}
We can now apply our Threshold Search routine to the $\ol{B}_k$'s (with
all thresholds $\theta_k = 1/2$), using $\nts(m, 1/6, \delta')$ copies
of~$\varrho$, for some $\delta' \in (0, 1]$ to be specified shortly.
Provided that indeed $\eta \leq \ol{\eta}$, we know there is at least one~$k$ (namely $k = i^*$) with $\Delta_k \leq \ol{\eta}$; thus except with probability at most~$\delta'$, the Threshold Search routine will find an~$\ell$ with $\Delta_\ell \leq \ol{\eta} + \eps$.

If we wish to assume our Hypothesis Selection algorithm ``knows''~$\eta$, then we are done.
Otherwise, we can search for the approximate value of~$\eta$, as follows:
We perform the above routine with $\ol{\eta} = 1, \frac12, \frac14,
\frac18, \dots$, using fresh copies for each iteration and stopping either when Threshold Search fails to find any~$\ell$ or when $\ol{\eta} \leq \eps$.
If we stop for the first reason, we know that our second-to-last~$\ol{\eta}$ is at most~$2\eta$; if we stop for the second reason, we know that~$\eta \leq \eps$.
Either way, assuming no failure on any of the Threshold Searches, we end with a guarantee of $\dist[tr](\sigma_\ell, \rho) \leq 4\eta + 2\eps$.
To bound the overall failure probability we take $\delta' = \delta/\Theta(\log(1/{\max\{\eta, \eps\}}))$.
It's easy to check that the geometric decrease of $\ol{\eta}$ means we
only use $O(\nts(m, 1/6, \delta') \cdot \log(1/{\max\set{\eta, \eps}}))$
copies of~$\varrho$, which is $O(\nts(m, 1/6, \delta') \cdot \log(1/{\max\set{\eta, \eps}})) n_0$ copies of~$\rho$.
Finally, by tuning the constants we can make the final guarantee $\dist[tr](\sigma_\ell, \rho) \leq 3.01\eta + \eps$.
We conclude:
\begin{proposition} \label{prop:hypoth2}
    The above-described method selects $\sigma_\ell$ with
    $
        \dist[tr](\rho,\sigma_\ell) \leq 3.01\eta + \eps
    $
    (except with probability at most~$\delta$), using
    \[
        n = \frac{\log^3 m + \log(\textsc{L}/\delta) \cdot \log m}{\eps^2} \cdot
        O(\textsc{l} \cdot \log(\textsc{l}/\delta))
    \]
    copies of~$\rho$, where $\textsc{l} = \log(1/{\max\set{\eta,\eps}})$.
\end{proposition}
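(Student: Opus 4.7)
The plan is to verify the algorithm sketched in the paragraphs immediately preceding the proposition, organized into three steps: a Yatracos-style reduction to finding $\ell$ with $\Delta_\ell$ small, a single-scale analysis at fixed $\bar\eta$, and an outer geometric sweep over $\bar\eta$.

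First, I would make the reduction precise. With $\Delta_k = \max_{i<j}|\E_{\sigma_k}[A_{ij}] - \E_\rho[A_{ij}]|$ and $i^{*} = \argmin_i \dist[tr](\rho,\sigma_i)$, Helstrom's theorem gives $\Delta_{i^{*}} \le \eta$. Mimicking the proof of \Cref{prop:hypoth1}, but applied to the Helstrom event $A_{\ell i^{*}}$, yields $\dist[tr](\rho,\sigma_\ell) \le \eta + \Delta_\ell + \Delta_{i^{*}} \le 2\eta + \Delta_\ell$. Hence any $\ell$ with $\Delta_\ell \le \bar\eta + \eps$ for some $\bar\eta \le (1+\alpha)\eta$ gives $\dist[tr](\rho,\sigma_\ell) \le (3+\alpha)\eta + \eps$; the factor $3.01$ in the statement then comes from choosing the base of the geometric sweep close enough to $1$ and absorbing constants into $\eps$.

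Next, for a fixed candidate $\bar\eta$ I would invoke \Cref{cor:cor} with $\eps_0 = \eps$, $\delta_0 = 1/3$, thresholds $\theta_{ij} = \E_{\sigma_k}[A_{ij}]$, and slacks $\eta_{ij} = \bar\eta$, producing projectors $B_k$ on $\varrho = \rho^{\otimes n_0}$ with $n_0 = O(\log m / \eps^2)$ obeying the two implications in \Cref{ineq:imps}. If $\bar\eta \ge \eta$, then $\Delta_{i^{*}} \le \bar\eta$ so $\E_\varrho[\bar B_{i^{*}}] \ge 2/3$; applying the online \BajanSearch of \Cref{thm:our-gentle-search} with parameters $(m,1/6,\delta')$ to the events $\bar B_k$ against thresholds $1/2$ then returns, except with probability $\delta'$, some $\ell$ with $\E_\varrho[\bar B_\ell] > 1/3$, and the second implication certifies $\Delta_\ell \le \bar\eta + \eps$. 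If \BajanSearch returns nothing, we interpret this as evidence that $\bar\eta$ has dropped below $\eta$ and stop the outer loop, keeping the $\ell$ from the previous scale.

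For the outer loop I would sweep $\bar\eta \in \{1, \tfrac12, \tfrac14, \dots\}$, halting the first time \BajanSearch fails or $\bar\eta \le \eps$. The sweep runs for $T = O(\log(1/\max\{\eta,\eps\}))$ rounds, so choosing the per-round error $\delta' = \delta/\Theta(T)$ and union-bounding gives overall failure probability $\delta$; the surviving $\bar\eta$ is at most $2\eta$ (or $\eps$), delivering the claimed trace-distance bound. The main obstacle is the sample accounting: naively each round costs $n_0 \cdot \nts(m,1/6,\delta')$ copies of $\rho$, and multiplying by $T$ introduces an extra $\log(1/\max\{\eta,\eps\})$ factor beyond the stated bound. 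I would absorb this by letting the precision in \Cref{cor:cor} scale with the current $\bar\eta$ (taking $\eps_0 \asymp \bar\eta$ until $\bar\eta$ reaches $\eps$), so that $n_0 \asymp \log m / \bar\eta^2$ shrinks geometrically along the sweep and the cumulative cost telescopes to that of the final round, matching $\tfrac{\log^3 m + \textsc{l}\log m}{\eps^2}\cdot O(\textsc{l})$. Checking that this adaptive choice still yields both implications of \Cref{ineq:imps} at every scale (and that the output guarantee at the final scale remains $\Delta_\ell \le \bar\eta + \eps$ rather than $\Delta_\ell \le O(\bar\eta)$) is the one delicate bookkeeping step.
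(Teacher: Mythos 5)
Your proposal follows the paper's route essentially step for step: the Yatracos-style bound $\dist[tr](\rho,\sigma_\ell)\le 2\eta+\Delta_\ell$, the per-scale projectors $B_k$ from \Cref{cor:cor} satisfying \Cref{ineq:imps}, a run of \BajanSearch on the $\ol{B}_k$'s at thresholds $1/2$ with parameters $(m,1/6,\delta')$, and a geometric sweep over $\ol{\eta}$ stopped at the first failure or at $\ol{\eta}\le\eps$, with the constant $3.01$ obtained exactly as you say by taking the sweep ratio close to~$1$. Where you diverge is the sample accounting, and your instinct there is sound: the paper fixes $\eps_0=\eps$ at every scale and asserts that the geometric decrease of $\ol{\eta}$ keeps the total at $O(\nts(m,1/6,\delta'))$ copies of $\varrho$, but since each scale is a fresh \BajanSearch run of fixed cost, and the sweep can legitimately keep succeeding past $\ol{\eta}<\eta$ all the way down to $\ol{\eta}\le\eps$ (some $\Delta_k$ can be far smaller than $\eta$ --- e.g.\ $\rho$ may agree with some $\sigma_k$ on every Yatracos event $A_{ij}$ while $\dist[tr](\rho,\sigma_k)\approx 2\eta$), the literal algorithm can incur an extra factor of order $\log(1/\eps)$. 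Your fix --- taking $\eps_0\asymp\ol{\eta}$ so that the per-round cost $n_0\cdot\nts$ forms a geometric progression dominated by the final scale --- repairs this while preserving the guarantee: a success at scale $\ol{\eta}$ now certifies $\Delta_\ell\le(1+c)\ol{\eta}$, and tuning $c$, the sweep ratio, and a constant rescaling of $\eps$ still yields $3.01\eta+\eps$. Two small points to tidy: the per-round cost \emph{grows} (not ``shrinks'') as $\ol{\eta}$ decreases, though the sum is indeed dominated by the last round as you conclude; and the union bound must cover every round that can actually be executed, up to $\Theta(\log(1/\eps))$ of them rather than $\Theta(\log(1/\max\{\eta,\eps\}))$, which affects $\textsc{l}$ only inside a doubly-logarithmic term (an issue the paper's own parameter choice shares).
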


It remains to establish the last part of \Cref{thm:hypothesis-selection}, which operates under the assumption that
$\eta < \frac12 (\alpha - \eps)$, where $\alpha = \min_{i \neq j} \dist[tr](\sigma_i,\sigma_j)$.
Writing $\ol{\eta} = \frac12 (\alpha - \eps)$ (which is a quantity known to the algorithm), we have  $\Delta_{i^*} \le \eta \leq \ol{\eta}$, but $\Delta_k > \ol{\eta} + \eps$ for all $k\neq i^*$; the reason for this last claim is that
\[
    \Delta_k \geq \abs{\E_{\sigma_k}[A_{i^*k}] - \E_{\rho}[A_{i^*k}]} \geq \abs{\E_{\sigma_k}[A_{i^*k}] - \E_{\sigma_{i^*}}[A_{i^*k}]} - \eta = \dist[tr](\sigma_{i^*}, \sigma_k) - \eta \geq \alpha - \eta = 2\ol{\eta} +\eps - \eta > \ol{\eta} + \eps
\]
where the second inequality above used the Holevo--Helstrom theorem and
$\eta = \dist[tr](\rho,\sigma_{i^*})$ and the last inequality used
$\ol{\eta} > \eta$.  Now if we perform Threshold Search to achieve
\Cref{ineq:imps} as before, except that we select
$\delta_0 = \delta/(4m)$ rather than~$1/3$, we'll get projectors
$B_1, \dots, B_m$ on $(\CC^d)^{n}$ for $n = O(\log(m/\delta)/\eps^2)$
such that, for $\varrho = \rho^{\otimes n}$,
\[
    \E_{\varrho}[\ol{B}_{i^*}] \geq 1-\delta/(4m), \qquad
     \E_{\varrho}[\ol{B}_k]  \leq \delta/(4m) \quad \forall k \neq i^*.
\]
It remains to apply the Quantum Union
Bound (specifically, \Cref{cor:quantum-union-bound}) to $B_1, \dots, B_m$ and $\varrho$
to pick out~$i^*$ except with probability at most
$4 \sum_i \delta/(4m) \leq \delta$.  We conclude:
\begin{proposition} \label{prop:hypoth3}
    Using the assumption $\eta < \frac12 (\alpha - \eps)$, where $\alpha = \min_{i \neq j} \dist[tr](\sigma_i,\sigma_j)$, the above-described method selects $\sigma_{i^*}$
    (except with probability at most~$\delta$), using
    $
        n = O(\log(m/\delta)/\eps^2)
    $
    copies of~$\rho$.
\end{proposition}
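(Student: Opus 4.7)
The plan is to exploit the gap assumption $\eta < \frac{1}{2}(\alpha - \eps)$ to turn Hypothesis Selection into $m$ threshold decision tests, each made accurate enough that the quantum union bound can identify the unique correct hypothesis.

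First, I would fix the algorithm's known upper bound $\ol{\eta} = \frac{1}{2}(\alpha - \eps)$. Recalling $\Delta_k = \max_{i<j}\abs{\E_{\sigma_k}[A_{ij}] - \E_\rho[A_{ij}]}$, Helstrom's theorem applied with the event $A_{i^* k}$ gives $\Delta_{i^*} \le \eta \le \ol{\eta}$, so $i^*$ is in the ``accept'' regime. For any $k \ne i^*$, the event $A_{i^*k}$ again serves as a witness: by the triangle inequality and the Helstrom identity $\E_{\sigma_{i^*}}[A_{i^*k}] - \E_{\sigma_k}[A_{i^*k}] = \dist[tr](\sigma_{i^*}, \sigma_k)$, I get $\Delta_k \ge \dist[tr](\sigma_{i^*},\sigma_k) - \eta \ge \alpha - \eta > \ol{\eta} + \eps$, placing each such $k$ firmly in the ``reject'' regime. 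This gap is what makes the problem tractable with $\log m$ rather than $\log^2 m$ dependence.

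Next, for each $k \in [m]$ I would invoke Corollary~\ref{cor:cor} with $\eps_0 = \eps$, $\delta_0 = \delta/(4m)$, $m_0 = \binom{m}{2}$, events $A_{ij}$, thresholds $\theta_{ij} = \E_{\sigma_k}[A_{ij}]$, and each $\eta_{ij} = \ol{\eta}$. This produces, on $n = O(\log(m/\delta)/\eps^2)$ copies of $\rho$, a projector $B_k$ satisfying (for $\varrho = \rho^{\otimes n}$) $\E_\varrho[\ol{B}_{i^*}] \ge 1 - \delta/(4m)$ and $\E_\varrho[\ol{B}_k] \le \delta/(4m)$ for every $k \ne i^*$. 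The key point is that all $m$ projectors act on the \emph{same} batch of $n$ copies, so the total sample cost is just $n$.

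Finally, I would apply the quantum union bound in the form of Corollary~\ref{cor:quantum-union-bound}, setting $x_{i^*} = 1$, $x_k = 0$ for $k \ne i^*$, and measuring $\varrho$ sequentially against $(\ol{B}_1, B_1), (\ol{B}_2, B_2), \dots, (\ol{B}_m, B_m)$. Each measurement agrees with $x_k$ with probability at least $1 - \delta/(4m)$, so the probability that the full sequence of outcomes matches $x$ is at least $1 - 4m \cdot \delta/(4m) = 1 - \delta$. Whenever this pattern occurs, $i^*$ is revealed as the unique index whose $B_{i^*}$ accepts, and the algorithm outputs it.

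The main obstacle is essentially bookkeeping: one must confirm that Corollary~\ref{cor:cor} genuinely yields \emph{projectors} (not arbitrary events) so that the quantum union bound applies, and that the $\delta/(4m)$ rescaling pays exactly the union-bound cost without blowing up $n$ beyond $O(\log(m/\delta)/\eps^2)$. Since $\log(m \cdot 4m/\delta) = O(\log(m/\delta))$, this rescaling costs only constants, and the projector guarantee is explicit in Corollary~\ref{cor:cor}, so no serious difficulty arises.
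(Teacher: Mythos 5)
Your proposal is correct and is essentially the paper's own proof: the same Helstrom-witness argument (via $A_{i^*k}$) showing $\Delta_{i^*}\le\ol{\eta}$ and $\Delta_k>\ol{\eta}+\eps$ for $k\ne i^*$, the same invocation of \Cref{cor:cor} with $\delta_0=\delta/(4m)$ to obtain projectors $B_1,\dots,B_m$ acting on a single batch of $n=O(\log(m/\delta)/\eps^2)$ copies, and the same application of \Cref{cor:quantum-union-bound} with cost $4m\cdot\delta/(4m)=\delta$. The only slip is notational: since the high-probability outcome for $i^*$ is $\ol{B}_{i^*}$ and for $k\ne i^*$ it is $B_k$, the bit string should be $x_{i^*}=0$ and $x_k=1$ (taking $\Pi_k^0=\ol{B}_k$, $\Pi_k^1=B_k$), the reverse of what you wrote; your probability accounting already reflects the correct assignment, so this is cosmetic.
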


\section*{Acknowledgments}
The authors are very grateful to John Wright, whose early contributions
to this work provided invaluable understanding of the problem. We also
thank the anonymous reviewers for their careful reading of the
manuscript and their constructive comments and suggestions.

\printbibliography

\appendix

\section{The quantum Threshold Decision problem}    \label{app:threshold-decision}
As mentioned at the end of \Cref{sec:threshold-search}, Aaronson~\cite{Aar20} showed that the \emph{decision} version of quantum Threshold Search can be done with $n = O(\log(m) \log(1/\delta)/\eps^2)$ copies, through the use of a theorem of Harrow, Lin, and Montanaro~\cite[Cor.~11]{HLM17}.
In \Cref{thm:quantum-or-technical} below, we give a new version of the Harrow--Lin--Montanaro theorem, with a mild qualitative improvement.
This improvement also lets us improve the quantum Threshold Decision copy complexity slightly, to $n = O(\log(m/\delta)/\eps^2)$ (see \Cref{cor:quantum-or}).

First, a lemma:
\begin{lemma}                                       \label{lem:CS}
    Let $X, Y \in \CC^{d \times d}$, with $X \geq 0$.  Then
    \[
        \E_{\rho}[XY] \leq \sqrt{\E_{\rho}[X]}\sqrt{\E_{\rho}[Y^\dagger X Y]}.
    \]
\end{lemma}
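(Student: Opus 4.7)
The plan is to recognise this inequality as an instance of Cauchy--Schwarz for the Hilbert--Schmidt (Frobenius) inner product $\langle A,B\rangle = \tr(A^\dagger B)$. The positivity of $X$ provides the hook that lets us split it symmetrically across the inner product.

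Concretely, I would introduce the square roots $\sqrt{X}\geq 0$ (available since $X\geq 0$) and $\sqrt{\rho}\geq 0$ (available since $\rho$ is a state), and set
\[
  A = \sqrt{X}\,\sqrt{\rho}, \qquad B = \sqrt{X}\, Y\, \sqrt{\rho}.
\]
Using the cyclic property of the trace, together with $\sqrt{X}\,\sqrt{X}=X$ and $\sqrt{\rho}\,\sqrt{\rho}=\rho$, I would verify the three identities
\[
  \tr(A^\dagger B) = \tr(\rho X Y) = \E_\rho[XY], \quad \tr(A^\dagger A) = \E_\rho[X], \quad \tr(B^\dagger B) = \E_\rho[Y^\dagger X Y].
\]
Applying $|\tr(A^\dagger B)|\le \sqrt{\tr(A^\dagger A)\,\tr(B^\dagger B)}$ then yields the desired bound (with the understanding that the left-hand side of the lemma is bounded by its modulus; this is automatic whenever $\E_\rho[XY]$ is real, which is the case of interest).

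There is essentially no obstacle here: the only thing to check is that $X\geq 0$ is indeed used, which it is in forming $\sqrt{X}$ and in guaranteeing that $\tr(A^\dagger A)=\tr(\rho X)$ is a nonnegative real, so that the square root in the bound is well-defined. The rest is bookkeeping with the trace's cyclicity.
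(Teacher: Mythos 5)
Your proposal is correct and is essentially identical to the paper's proof: the paper also writes $\tr(\rho XY) = \tr\bigl(\sqrt{\rho}\sqrt{X}\cdot\sqrt{X}Y\sqrt{\rho}\bigr)$ and applies the matrix (Hilbert--Schmidt) Cauchy--Schwarz inequality with exactly the same split $A=\sqrt{X}\sqrt{\rho}$, $B=\sqrt{X}Y\sqrt{\rho}$. Your remark about passing through the modulus is a fine (and harmless) bit of extra care that the paper leaves implicit.
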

\begin{proof}
    This follows from the matrix form of Cauchy--Schwarz:
    \begin{align*}
         \tr(\rho X Y) = \tr\paren{\sqrt{\rho} \sqrt{X} \cdot \sqrt{X} Y \sqrt{\rho}}
         &\leq \sqrt{\tr\paren{\sqrt{X} \sqrt{\rho}  \sqrt{\rho} \sqrt{X}}}\sqrt{\tr\paren{\sqrt{\rho} Y^\dagger \sqrt{X}  \sqrt{X} Y \sqrt{\rho}}} \\
         &= \sqrt{\tr(\rho X)}\sqrt{\tr(\rho Y^\dagger X Y)}. \qedhere
    \end{align*}
\end{proof}

\begin{theorem}                                     \label{thm:quantum-or-technical}
    Let $0 \leq A_1, \dots, A_m \leq \Id$ be $d$-dimensional observables and define $\#A = A_1 + \cdots + A_m$.
    Let $\nu > 0$ and let $B$ be the orthogonal projector onto the span of eigenvectors of $\#A$ with eigenvalue at least~$\nu$.
    Then for any state $\rho \in \CC^{d \times d}$, writing $p_{\max} = \max_i \left\{\E_\rho[A_i]\right\}$, we have
    \[
        p_{\max} - 2\sqrt{\nu} \leq \E_\rho[B] \leq \E_{\rho}[\#A]/\nu.
    \]
\end{theorem}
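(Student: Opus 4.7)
The plan is to prove the upper bound by a Markov-type inequality, and the lower bound by decomposing $A_{i^*}$ into four pieces corresponding to the projectors $B, \ol{B}$ and applying the trace Cauchy--Schwarz lemma (\Cref{lem:CS}) to the cross terms.

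For the upper bound, I would observe directly from the spectral definition of $B$ that $\nu B \leq \#A$: writing $\#A = \sum_j \mu_j P_j$ in its spectral decomposition, $B = \sum_{\mu_j \geq \nu} P_j$, so $\#A - \nu B = \sum_{\mu_j \geq \nu} (\mu_j - \nu) P_j + \sum_{\mu_j < \nu} \mu_j P_j \geq 0$. Taking $\E_\rho[\cdot]$ of both sides gives $\nu\,\E_\rho[B] \leq \E_\rho[\#A]$, which is the claimed upper bound.

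For the lower bound, let $i^*$ achieve $p_{\max}$ and set $A = A_{i^*}$. Using $\Id = B + \ol{B}$, I would decompose
\[
    \E_\rho[A] = \tr(\rho \, BAB) + \tr(\rho\, \ol{B} A \ol{B}) + 2\,\Re\,\tr(\rho\, B A \ol{B}).
\]
The diagonal terms are easy: since $A \leq \Id$ we have $BAB \leq B$, hence $\tr(\rho BAB) \leq \E_\rho[B]$; and since $A \leq \#A$ together with $\ol{B}\, \#A \,\ol{B} \leq \nu \ol{B}$ (by definition of $B$), we get $\ol{B} A \ol{B} \leq \nu \ol{B}$, so $\tr(\rho\, \ol{B} A \ol{B}) \leq \nu$. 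For the cross term, I would apply \Cref{lem:CS} (equivalently, trace Cauchy--Schwarz applied to $C = \sqrt{A}\, B \sqrt{\rho}$ and $D = \sqrt{A}\, \ol{B}\sqrt{\rho}$) to obtain
\[
    \bigl|\tr(\rho\, B A \ol{B})\bigr| \leq \sqrt{\tr(\rho B A B)\cdot \tr(\rho\, \ol{B}A\ol{B})} \leq \sqrt{\E_\rho[B]\cdot \nu}.
\]
Combining gives $p_{\max} \leq \E_\rho[B] + \nu + 2\sqrt{\E_\rho[B]\cdot \nu} = \bigl(\sqrt{\E_\rho[B]} + \sqrt{\nu}\bigr)^2$, so $\sqrt{\E_\rho[B]} \geq \sqrt{p_{\max}} - \sqrt{\nu}$. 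If $p_{\max} \leq \nu$ the claim is trivial (the right-hand side $p_{\max} - 2\sqrt{\nu}$ is non-positive once we use $\nu \leq 1$, and the interesting regime has $\nu \ll 1$); otherwise squaring yields $\E_\rho[B] \geq p_{\max} - 2\sqrt{p_{\max}\nu} + \nu \geq p_{\max} - 2\sqrt{\nu}$ because $p_{\max} \leq 1$.

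There is no real obstacle here; the only creative step is the decomposition $A = (B+\ol{B})A(B+\ol{B})$, after which each of the three distinct pieces is bounded mechanically by one of (i)~$A \leq \Id$, (ii)~the spectral property of $B$ defining $\nu$, or (iii)~trace Cauchy--Schwarz as packaged in \Cref{lem:CS}. Collapsing the result to the form $(\sqrt{\E_\rho[B]} + \sqrt{\nu})^2$ is what makes the square root in $2\sqrt{\nu}$ appear.
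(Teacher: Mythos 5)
Your proof is correct, and its lower-bound argument takes a genuinely different (though closely related) route from the paper's. The paper bounds $p = \E_\rho[\ol{B}]$ from above: it inserts $\Id = A_j + \ol{A}_j$ to write $\E_\rho[\ol{B}] = \E_\rho[A_j \ol{B}] + \E_\rho[\ol{A}_j \ol{B}]$, applies \Cref{lem:CS} twice (once with $X = A_j$, once with $X = \ol{A}_j$), and then solves the resulting scalar inequality $p \leq \sqrt{\beta} + \sqrt{\delta}\sqrt{p}$ (with $\beta = \E_\rho[\ol{B} A_j \ol{B}] < \nu$ and $\delta = 1 - p_{\max}$) using $\sqrt{\delta p} \leq (\delta + p)/2$ to get $p \leq \delta + 2\sqrt{\nu}$. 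You instead bound $p_{\max} = \E_\rho[A_{i^*}]$ from above by sandwiching $A_{i^*}$ between $\Id = B + \ol{B}$ on both sides, handling the two diagonal blocks by $A \leq \Id$ and by $\ol{B}\,\#A\,\ol{B} \leq \nu \ol{B}$, and applying Cauchy--Schwarz only to the cross term. One small caveat: your cross-term bound is not literally an instance of \Cref{lem:CS} as stated (which conjugates on one side only), but the direct trace Cauchy--Schwarz computation you give with $C = \sqrt{A}\,B\sqrt{\rho}$ and $D = \sqrt{A}\,\ol{B}\sqrt{\rho}$ is exactly right and rests on the same matrix inequality used to prove that lemma. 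Your organization buys something: the perfect-square form $p_{\max} \leq \bigl(\sqrt{\E_\rho[B]} + \sqrt{\nu}\bigr)^2$ yields $\E_\rho[B] \geq \bigl(\sqrt{p_{\max}} - \sqrt{\nu}\bigr)^2 = p_{\max} - 2\sqrt{p_{\max}\,\nu} + \nu$, which is slightly sharper than the stated $p_{\max} - 2\sqrt{\nu}$, avoids the paper's final AM--GM loosening, and moves a step toward the sharpening speculated in the paper's remark (though it is still not of the form $(1 - O(\sqrt{\nu}))\,p_{\max}$ when $p_{\max}$ is small). Two cosmetic points: your degenerate case $p_{\max} \leq \nu$ is trivial without invoking $\nu \leq 1$, since either $\nu \leq 1$ (whence $p_{\max} \leq \nu \leq 2\sqrt{\nu}$) or $\nu > 1$ (whence $p_{\max} \leq 1 < 2\sqrt{\nu}$); and your upper bound is the same ``Markov'' argument as the paper's, just spelled out via the spectral decomposition of $\#A$.
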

\begin{remark}
One can read out a similar result in the work of Harrow, Lin, and Montanaro~\cite[Cor.~11]{HLM17}, except with a lower bound of $\E_\rho[B] \geq .632(p_{\max} - \nu)^2$.
Note that unlike our bound, their lower bound is never close to~$1$, even when $p_{\max}$ is very close to~$1$.
It is this difference that leads to our slight improvement for the Threshold Decision problem.
We speculate that the lower bound in our result can be sharpened further, to $(1-O(\sqrt{\nu}))p_{\max}$.
\end{remark}
\begin{proof}
    The upper bound in the theorem is just ``Markov's inequality''; it follows immediately from $\#A \geq \nu B$ (and $\rho \geq 0$).
    As for the lower bound, suppose $p_{\max} = \E_{\rho}[A_j] = 1 - \delta$.
    Using the notation $\ol{B} = \Id - B$, and defining $\beta = \E_{\rho}[\overline{B} A_j \overline{B}]$, we have
    \[
        \beta \leq \E_{\rho}[\ol{B} \cdot \#A \cdot \ol{B}] < \nu,
    \]
    since $A_j \leq \#A$  and $\ol{B} \cdot \#A \cdot \ol{B} < \nu \Id$ by definition.
    On the other hand, write $p = \E_{\rho}[\ol{B}]$, so our goal is to show $p < \delta + 2\sqrt{\nu}$.
    Then
     \begin{align*}
        p = \E_{\rho}[\ol{B}] = \E_{\rho}[A_j \cdot \ol{B}] + \E_{\rho}[\ol{A}_j \cdot \ol{B}]
        &\leq \sqrt{\E_{\rho}[A_j]}\sqrt{\E_{\rho}[\ol{B} A_j \ol{B}]} + \sqrt{\E_{\rho}[\ol{A}_j]}\sqrt{\E_{\rho}[\ol{B}\,\ol{A}_j \ol{B}]} \\
        &= \sqrt{1-\delta}\sqrt{\beta} + \sqrt{\delta}\sqrt{p - \beta},
    \end{align*}
    where the inequality is by  \Cref{lem:CS}, and the subsequent equality uses $p = \E_{\rho}[\ol{B} (A_j + \ol{A}_j) \ol{B}]$.
    The above deduction, together with $\beta < \nu$, yields an upper bound on~$p$.  Eschewing the tightest possible bound, we deduce from the above that
    \[
        p \leq \sqrt{\beta} + \sqrt{\delta}\sqrt{p} < \sqrt{\nu} + \frac{\delta + p}{2} \implies p \leq 2\sqrt{\nu} + \delta. \qedhere
    \]
\end{proof}

Given \Cref{thm:quantum-or-technical}, it's easy to obtain the following quantum Threshold Decision algorithm, similar to \cite[Lem.~14]{Aar20}:
\begin{corollary}                                       \label{cor:quantum-or}
    In the scenario of quantum \BajanSearch, suppose one only wishes to solve the \emph{decision problem}, meaning the algorithm has only two possible outputs:
    \begin{itemize}
        \item ``there exists~$j$ with $\E_\rho[A_j] > \theta_j - \eps$''; or else,
         \item ``\,$\E_\rho[A_i] \leq \theta_i$ for all~$i$''.
    \end{itemize}
    This can be solved using just $n = O(\log(m/\delta)/\eps^2)$ copies
    of~$\rho$ and probability of error at most $\delta$. The algorithm
    can be implemented by a projector applied to $\rho^{\otimes n}$.

    \noindent Furthermore, \Cref{cor:cor} holds.
\end{corollary}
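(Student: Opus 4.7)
The plan is to combine two tools already in hand: the naive amplification from \Cref{lem:amplification} and the improved quantum OR statement \Cref{thm:quantum-or-technical}. First I would apply (the projector-output version of) \Cref{lem:amplification} to each pair $(A_i,\theta_i)$, using parameters $\tau = 0$, $c = \theta_i - \eps/2$, accuracy $\eps/2$, and some small confidence $\delta_0$, to produce projectors $B_1,\dotsc,B_m$ acting on $(\CC^d)^{\otimes n_0}$ with $n_0 = O(\log(1/\delta_0)/\eps^2)$ having the property that $\E_\rho[A_i] > \theta_i$ implies $\E_{\rho^{\otimes n_0}}[B_i] \geq 1 - \delta_0$, and $\E_\rho[A_i] \leq \theta_i - \eps$ implies $\E_{\rho^{\otimes n_0}}[B_i] \leq \delta_0$. (If the $A_i$ are not already projectors, Naimark's \Cref{thm:naimark-extension} handles the reduction as usual.)

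Next, feed the $B_i$'s into \Cref{thm:quantum-or-technical} with a parameter $\nu$ to be tuned, obtaining a projector $B$ on $(\CC^d)^{\otimes n_0}$ satisfying $\E_{\rho^{\otimes n_0}}[B] \geq \max_i \E_{\rho^{\otimes n_0}}[B_i] - 2\sqrt{\nu}$ and $\E_{\rho^{\otimes n_0}}[B] \leq \bigl(\sum_i \E_{\rho^{\otimes n_0}}[B_i]\bigr)/\nu$. The algorithm is simply to measure $\rho^{\otimes n_0}$ against $(\overline{B},B)$ once and output ``YES'' iff the $B$ outcome occurs. In the ``strict YES'' regime where some $\E_\rho[A_j] > \theta_j$, this gives acceptance probability $\geq (1-\delta_0) - 2\sqrt{\nu}$; in the ``strict NO'' regime where all $\E_\rho[A_i] \leq \theta_i - \eps$, it gives acceptance probability $\leq m\delta_0/\nu$; and in the intermediate region either answer is correct, so no guarantee is needed there.

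To get total failure probability~$\delta$ from a single measurement, I would set $\nu = \Theta(\delta^2)$ and $\delta_0 = \Theta(\delta\nu/m) = \Theta(\delta^3/m)$. Then the YES-case acceptance probability is $\geq 1 - \delta$ and the NO-case false-acceptance probability is $\leq \delta$, and the total copy count is $n_0 = O(\log(1/\delta_0)/\eps^2) = O(\log(m/\delta)/\eps^2)$, exactly as claimed. The projector witnessing the algorithm is $B$ itself.

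The key point---what lets us avoid the extra $\log(1/\delta)$ factor found in the naive ``boost a constant-confidence decider'' approach---is the \emph{linear} dependence $p_{\max} - 2\sqrt{\nu}$ in the YES-case lower bound of \Cref{thm:quantum-or-technical}, as opposed to the quadratic $\Omega((p_{\max} - \nu)^2)$ bound of Harrow--Lin--Montanaro. Under the quadratic bound one cannot shrink $\nu$ freely without losing a constant factor in the YES-case probability; under the linear bound, $\E[B]$ stays near~$1$ even for $\nu$ as small as $\delta^2$, so no boosting is required. Extending to the two-sided variant \Cref{cor:cor} is now immediate: change the amplification parameters to $\tau = \theta_i$ and $c = \eta_i + \eps/2$ so that $\E_{\rho^{\otimes n_0}}[B_i] \geq 1 - \delta_0$ when $|\E_\rho[A_i] - \theta_i| > \eta_i + \eps$ and $\E_{\rho^{\otimes n_0}}[B_i] \leq \delta_0$ when $|\E_\rho[A_i] - \theta_i| \leq \eta_i$; the rest of the argument is verbatim.
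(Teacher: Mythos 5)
Your proposal is correct and follows essentially the same route as the paper: amplify each $(A_i,\theta_i)$ via a Chernoff/\Cref{lem:amplification}-type event to success probability $1-\Theta(\delta^3/m)$ versus $\Theta(\delta^3/m)$, apply \Cref{thm:quantum-or-technical} with $\nu = \Theta(\delta^2)$, and measure the resulting projector $B$ once on $\rho^{\otimes n}$, with the identical parameter trade-off giving $n = O(\log(m/\delta)/\eps^2)$. Your handling of the two-sided variant (\Cref{cor:cor}) by re-parameterizing the amplification with $\tau = \theta_i$ also matches the paper's argument.
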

\begin{proof}
    Writing $\varrho = \rho^{\otimes n}$, a standard Chernoff bound\rnote{make a proper reference to a generalized \Cref{lem:amplification}} implies there are quantum events $A'_1, \dots, A'_m$ satisfying
    \[
        \E_\rho[A_i] > \theta \implies \E_\varrho[A'_i] \geq 1 - \delta/2, \qquad
        \E_\rho[A_i] \leq \theta - \eps \implies \E_\varrho[A'_i] \leq \delta^3/(16m).
    \]
    We apply \Cref{thm:quantum-or-technical} to $A'_1, \dots, A'_m$ and $\varrho$, with $\nu = \delta^2/16$, obtaining the projector~$B$ with
    \[
        \max_i \{\E_{\varrho}[A'_i]\} -\delta/2 \leq \E_{\varrho}[B] \leq (16/\delta^2) \E_{\varrho}[\#A'].
    \]
    Now on one hand, if there exists $j$ with $\E_\rho[A_j] > \theta$, we conclude $\E_{\varrho}[B] \geq 1-\delta$.  On the other hand, if $\E_\rho[A_i] \leq \theta - \eps$ for all~$i$, then $\E_\varrho[\#A'] \leq m \cdot \delta^3/(16m)$ and hence $\E_{\varrho}[B] \leq \delta$.
    Thus the algorithm can simply measure~$B$ with respect to~$\varrho$, reporting  ``there exists~$j$ with $\E_\rho[A_j] > \theta - \eps$'' when~$B$ occurs, and ``\,$\E_\rho[A_j] \leq \theta$ for all~$i$'' when $\overline{B}$ occurs.
    This completes the main proof.

    The ``Furthermore'' proof of \Cref{cor:cor} is exactly the same, except we\rnote{Put in a reference to a suitably generalized \Cref{lem:amplification}.} let $A'_i$ be the quantum event that has
        \[
        \abs{\E_\rho[A_i] - \theta_i} > \eta_i + \eps \implies \E_\varrho[A'_i] \geq 1 - \delta/2, \qquad
        \abs{\E_\rho[A_i] - \theta_i} \leq \eta_i \implies \E_\varrho[A'_i] \leq \delta^3/(16m). \qedhere
    \]
\end{proof}

\end{document}